\theoremstyle{plain}                   
\newtheorem{theorem}{Theorem}
\newtheorem{lemma}[theorem]{Lemma}
\newtheorem{proposition}[theorem]{Proposition}
\newtheorem{corollary}[theorem]{Corollary}
\newtheorem{definition}[theorem]{Definition}
\theoremstyle{definition}             
\newtheorem{example}[theorem]{Example}
\newtheorem{observation}[theorem]{Observation}
\title{Covers of Query Results}
\author{
Ahmet Kara and Dan Olteanu\\ 
Department of Computer Science, University of Oxford\\
\texttt{$\{$ahmet.kara,dan.olteanu$\}$@cs.ox.ac.uk}
}
\newcommand{\indexnat}[1]{[#1]}
\newcommand{\setindexel}[3]{\{#1\}_{#2 \in #3}}
\newcommand{\indexjoin}[3]{\Join_{#2 \in [#3]}#1}
\newcommand{\setindexjoin}[3]{\Join_{#2 \in #3}#1}
\newcommand{\takemax}[2]{\max_{#1}\{#2\}}
\newcommand{\takemin}[2]{\min_{#1}\{#2\}}
\newcommand{\sizeof}[1]{\mid\hspace{-1mm} #1 \hspace{-1mm} \mid}
\newcommand{\minus}[2]{#1\backslash \{#2\}}
\newcommand{\sign}[1]{{\cal S}(#1)}
\newcommand{\att}[1]{\mathit{att}(#1)}
\newcommand{\inst}[1]{\mathbf{#1}}
\newcommand{\calA}{\mathcal{A}}
\newcommand{\calB}{\mathcal{B}}
\newcommand{\calE}{\mathcal{E}}
\newcommand{\calJ}{\mathcal{J}}
\newcommand{\calO}{\mathcal{O}}
\newcommand{\softO}{\widetilde{\mathcal{O}}}
\newcommand{\calP}{\mathcal{P}}
\newcommand{\calS}{\mathcal{S}}
\newcommand{\calT}{\mathcal{T}}
\newcommand{\calV}{\mathcal{V}}
\newcommand{\corejoin}{\mathring{\Bowtie}}
\newcommand{\tupleofs}{\mathit{tuple}}
\newcommand{\relofs}{\mathit{rel}}
\newcommand{\relof}[1]{\mathit{rel}(#1)}
\newcommand{\tupleofind}[2]{\mathit{tuple}_{#1}(#2)}
\newcommand{\relofind}[2]{\mathit{rel}_{#1}(#2)}
\newcommand{\weight}[1]{\mathit{weight}(#1)}
\newcommand{\fhtw}[1]{\textsf{fhtw}(#1)}
\newcommand{\evo}{\textsf{EVO}}
\newcommand{\free}{\textit{free}}
\newcommand{\ext}{\textit{ext}}
\newcommand{\dom}{\textsf{dom}}
\newcommand{\Dom}{\textsf{Dom}}
\newcommand{\faqw}[1]{\textsf{faqw}(#1)}
\newcommand{\vala}{{\textsf a}}
\newcommand{\valb}{{\textsf b}}
\newcommand{\valc}{{\textsf c}}
\newcommand{\zero}{{\bf 0}}
\newcommand{\one}{{\bf 1}}
\newcommand{\TAB}{\makebox[2.5ex][r]{}}%
\newcommand{\STAB}{\makebox[1.5ex][r]{}}%
\newcommand{\LET}{\textbf{let}\xspace}%
\newcommand{\FOREACH}{\textbf{foreach}\xspace}%
\newcommand{\DO}{\textbf{do}\xspace}%
\newcommand{\RETURN}{\textbf{return}\xspace}%
\newcounter{CommentCounter}
\newcommand{\nop}[1]{}
\definecolor{goodgreen}{rgb}{0.1, 0.5, 0.1}
\begin{document}

\maketitle

\begin{abstract}
We introduce succinct lossless representations of query results called covers. They are subsets of the query results that correspond to minimal edge covers in the hypergraphs of these results.

We first study covers whose structures are given by fractional hypertree decompositions of join queries. 
For any decomposition of a query, we give asymptotically tight size bounds for the covers of the query result over that decomposition and show that such covers can be computed  in worst-case optimal time up to a logarithmic factor in the database size. For acyclic join queries, we can compute covers compositionally using query plans with a new operator called cover-join. The tuples in the query result can be enumerated from any of its covers with linearithmic pre-computation time and constant delay.

We then generalize covers from joins to functional aggregate queries that express a host of computational problems such as aggregate-join queries, in-database optimization, matrix chain multiplication, and inference in probabilistic graphical models.
 \end{abstract}

\section{Introduction}
\label{sec:introduction}

This paper introduces succinct lossless representations of query results called covers. Given a database and a join query or, more generally, a functional aggregate query (FAQ)~\cite{FAQ:PODS:2016}, a cover is a subset of the query result that, together with a (fractional hypertree) decomposition of the query~\cite{Gottlob09b}, recovers the query result. Covers enjoy desirable properties. 

First, they can be more succinct than the listing representation of the query result. For a join query $Q$, database $\inst{D}$, and a decomposition $\calT$ of $Q$ with fractional hypertree width $w$~\cite{Marx:2010}, a cover over $\calT$ has size ${\mathcal O}(|\inst{D}|^w)$. In contrast, there are arbitrarily large databases for which the listing representation of the query result has size $\Omega(|\inst{D}|^{\rho^*})$, where $\rho^*$ is the fractional edge cover number of $Q$~\cite{AtseriasGM13}. The gap between the fractional hypertree width and the fractional edge cover number can be as large as the number of relation symbols in $Q$. For an FAQ (and the special case of a join query) $\varphi$, any cover of its result can be computed in time ${\mathcal O}(|\inst{D}|^w\log |\inst{D}|)$, where $w$ is the FAQ-width~\cite{FAQ:PODS:2016} of $\varphi$. FAQs can express aggregates over database joins~\cite{BKOZ:PVLDB:2013}, in-database optimization~\cite{SOC:SIGMOD:2016,NNOS2017}, matrix chain multiplication, and inference in probabilistic graphical models.

Second, the tuples in the query result can be enumerated from one of its covers with linearithmic pre-computation time and constant delay. This is not the case for the representation defined by the pair of database and join query (unless W[1]=FPT)~\cite{Segoufin:SIGREC:2015}. The benefits of covers over the latter representation are less apparent for acyclic queries, for which both representations share the same linear-size bound and desirable enumeration complexity \cite{Durand07}. 
For acyclic joins, the question thus becomes why to succinctly represent a query result by one relation instead of the pair of a set of relations and the query. 
We next highlight three practical benefits. 
Covers readily provide a subset of the query result without the need to compute the join. This improves cache locality for subsequent operations, e.g., aggregates, since we only need to read in tuple by tuple from the cover instead of reading tuples from different relations stored at different locations in memory and then joining them. 
Similarly, covers provide access locality for disk operations since tuples from the cover are stored on the same disk page, whereas tuples from different relations are stored on different pages.  
Furthermore, covers are samples of the query result that disregard the uninformative yet exhaustive pairings brought by Cartesian products. In exploratory data analysis, the explicit listing of Cartesian products is overwhelming to the user since it may be 
very large. An alternative approach that would present the user with many relations and the query, would have to rely on the user to figure out possible tuples in the query result, which is not desirable. A cover, in contrast, is a compact relation that absolves the user from ad-hoc joining of relations and from re-discovering Cartesian products in a large listing of tuples.
Finally, processing following the in-database joins may require a single relation as input, as it is the case for machine learning over joins~\cite{SOC:SIGMOD:2016}. Indeed, instead of learning regression models over the result of a join we can instead learn them over one of its covers.

Third, covers use the standard listing representation. Prior work introduced lossless representations of query results called {\em factorized databases} that achieve the same succinctness as covers, yet they are directed acyclic graphs that represent the query result as circuits whose nodes are data values or the relational operators Cartesian product and union~\cite{FDB:TODS:2015}. The graph representation makes difficult their adoption as a data representation model by mainstream database systems that rely on relational storage (factorized {\em computation} is however used in relational systems~\cite{NNOS2017}). A relational alternative to factorized databases, as metamorphosed in covers, can prove useful in a variety of settings. The intermediate results in query plans can be represented as covers. In distributed query plans, covers can encode succinctly the otherwise expensive intermediate query results that are communicated among servers in each round~\cite{F1:PVLDB:2013} and can be processed as soon as each of their tuples is received.

{\noindent\bf The contributions} of this paper are as follows:
\begin{itemize}
\item Section~\ref{sec:cores_join_results} introduces covers of join query results  and their correspondence to minimal edge covers in the hypergraphs of the query results. We also give tight size bounds for covers and show that the tuples in the query result can be enumerated from any cover with linearithmic pre-computation time and constant delay.

\item Given a database and a join query, covers of its result can be computed in worst-case optimal time (modulo a log factor). Section~\ref{sec:core-plans} focuses on  the compositionality of cover computation for acyclic join queries. We introduce cover-join plans to compute covers in time linearithmic in their sizes and the size of the input database. A cover-join plan is a binary plan that follows the structure of a join tree of the acyclic query. It uses a cover-join operator that computes covers of the join of two relations, which may be input relations or covers for subqueries. Different plans may lead to different sets of covers. There are covers that cannot be obtained using binary plans.

\item Section~\ref{sec:applications} generalizes our notion of covers from joins to functional aggregate queries by representing succinctly both tuples and aggregates in the query result.
\end{itemize}
We consider natural join queries where each relation is used at most once. The appendix extends our results to arbitrary 
equi-join queries and provides further
details, examples and proofs.

{\noindent\bf Related work.} There are three strands of directly related work: cores in databases and graph theory; succinct representations of query results; and normal forms for relational data.

Cores of graphs, queries, and universal solutions to data exchange problems revolve around smaller yet lossless representations that are homomorphically minimal subgraphs~\cite{CoreGraph:1992}, subqueries~\cite{Chandra:1977}, and universal solutions~\cite{Fagin:CoreDataExchange:2005}, respectively. A further application of graph cores is in the context of the Semantic Web, where cores of RDF graphs are used to obtain minimal representations and normal forms of such graphs~\cite{Gutierrez:CoreSW:2004}. 
Our notion of covers is different. Covers rely on query decompositions to achieve succinctness, and they only become lossless {\em in conjunction} with a decomposition. If we ignore the decomposition, the covers become lossy as they are subsets of the result. Whereas in data exchange all universal solutions have the same core (up to isomorphism), the result of a query may have exponentially many incomparable covers.
 While not a defining component of cores in data exchange, generalized hypertree decompositions can help derive improved algorithms for computing the core of a relational instance with labeled nulls under different classes of dependencies~\cite{Gottlob:CoreDataExchange:2005}.

Covers are relational encodings of d-representations, a lossless graph-based factorization of the query result~\cite{FDB:TODS:2015}. The structure of d-representations is given by variable orders called d-trees, which are an alternative syntax for fractional hypertree decompositions. Whereas d-representations are lossless on their own, covers need the decomposition to derive the missing tuples. Decompositions are the data-independent price to pay for achieving the data-dependent succinctness of factorized representations using the listing representation. Both d-representations and covers achieve succinctness by avoiding the materialization of Cartesian products. Whereas the former encode the products symbolically and losslessly, the covers only keep a minimal subset of the product that is enough to reconstruct it entirely.

The goal of database design is to avoid redundancy in the {\em input} database. Existing normal forms achieve this by {\em decomposing} one relation into several relations guided by functional and join dependencies~\cite{Fagin:ETNF:2012}. Covers exploit the join dependencies to avoid redundancy in the query {\em output}. They do not decompose the result back into the (now globally consistent) input database. Like factorized representations, covers are a normal form for relations representing query results. From a cover of a join result over a decomposition, we can obtain a decomposition of the join result in project-join normal form (5NF)~\cite{Fagin:5NF:1979} by taking one projection of the cover onto the attributes of each bag of the decomposition.


\section{Preliminaries}
\label{sec:preliminaries}

{\noindent\bf Databases.} 
We assume an ordered domain of data values. 
A relation schema is a finite set of attributes. 
For an attribute $A$, we denote by $\dom(A)$
its domain.
A database schema is a finite set of relation symbols. 
A tuple $t$ over a relation schema $S$ is a mapping from the attributes in $S$ to values in their respective domains. A relation over a relation schema $S$ is a finite set of tuples over $S$. 
A database $\inst{D}$ over a database schema 
$\calS$ contains for each relation symbol in $\calS$, 
a relation over the same schema.
For a relation (symbol) $R$ and  tuple $t$, we use 
$\sign{R}$ and $\sign{t}$ to refer to their schemas 
and write $R(S)$ to express
that the schema of $R$ is $S$.
The tuples $t_1, \ldots ,t_n$ are joinable if $\pi_{S_{i,j}}t_i = \pi_{S_{i,j}}t_j$ for all $i,j \in \indexnat{n}$ and $S_{i,j}=\sign{t_i}\cap \sign{t_j}$. The size $|R|$ of a relation $R$ is the number of its tuples. The size $|\inst{D}|$ of a database $\inst{D}$ is the sum of the sizes of its relations.

\smallskip

{\noindent\bf Natural Join Queries.}
We consider natural join queries of the form $Q = R_1(S_1) \Join \ldots \Join R_n(S_n)$, where each $R_i$ is a relation symbol over relation schema $S_i$ and refers to a database relation over the same schema. Notation-wise we do not 
distinguish between a relation symbol and the corresponding relation.  
The joins in $Q$ are expressed by sharing attributes across relation schemas. 
The schema $\sign{Q}$ of $Q$ is the set of relation symbols in $Q$: $\sign{Q}=\{R_i\}_{i\in[n]}$. The set $\mathit{att}(Q)$ of attributes of $Q$ is the union of the schemas of its relation symbols: $\mathit{att}(Q) = \bigcup_{i\in[n]}S_i$. The size $|Q|$ of $Q$ is the number of its relation symbols: $|Q|=n$. A database is {\em globally consistent} with respect to a query $Q$ if there are no (dangling) tuples that do not contribute to the result of $Q$~\cite{AHV95}.
Two relations $R_1$ and $R_2$ are called consistent if the database $\{R_1,R_2\}$
is globally consistent with respect to the query $R_1 \Join R_2$.
We assume that relation symbols in $Q$ are non-repeating and each relation symbol corresponds to a distinct relation. Appendix~\ref{sec:covers_equi_join} lifts these restrictions and extends our contributions to arbitrary equi-join queries.

\smallskip

{\noindent\bf Hypergraphs.} Let $H$ be a multi-hypergraph (hypergraph for short) whose edge multiset   $E$ may contain multiple hyperedges (edges for short) with the same node set. A fractional edge cover for $H$ is a function $\gamma$ mapping each edge in $H$ to a positive number such that  $\Sigma_{e \ni v} \gamma(e) \geq 1$ for each node $v$ of $H$, i.e., the sum of the function values for all edges incident  to $v$ is at least $1$. We define the weight of a fractional edge cover $\gamma$ as $\weight{\gamma}=\Sigma_{e \in E} \gamma(e)$. The fractional edge cover number $\rho^*(H)$ of $H$ is the minimum weight of fractional edge covers of $H$. It can be obtained from a fractional edge cover where the edge weights are rational numbers of bit-length polynomial in the size of $H$~\cite{AtseriasGM13}. 

We use hypergraphs for queries and for relations representing their results. 
 The hypergraph $H$ of a query $Q$ consists of one node $A$ for each attribute
$A$ in $Q$ and one edge $\sign{R}$ for each 
relation symbol $R\in\sign{Q}$.
We define $\rho^*(Q) = \rho^*(H)$.

Let $R$ be a relation and $\calP$ a set of 
(possibly overlapping)  subsets of $\sign{R}$ such that 
$\bigcup_{S \in \calP}S = \sign{R}$.  
The hypergraph $H$ of $R$ over ${\cal P}$ consists of one node for each distinct tuple in $\pi_S R$ for each attribute set $S\in{\cal P}$ and one edge for each tuple in $R$. The edge for a tuple $t$ thus consists of all nodes for tuples 
$\pi_S (t)$ with $S\in \calP$. 
We use $\tupleofind{}{v}$ to denote the tuple represented by a node or edge $v$ in $H$. Given a subset $M$ of the edges in $H$, we define $\relofind{}{M}=\setindexel{\tupleofind{}{e}}{e}{M}$ as the relation represented by $M$. The set $M$ is an edge cover of $H$ if each node in $H$ is contained in at least  one edge in $M$. The set $M$ is a minimal edge cover if it is an edge cover and any of its strict subsets is not.

\begin{example}\label{ex:hypergraph} 
Consider the path query $Q=R_1(A,B)\Join R_2(B,C)\Join R_3(C,D)$.
Figure~\ref{fig:hypergraph} 
depicts in the top row a database of the three relations $R_1$, $R_2$ and $R_3$, 
the query result and a subset of it. In the bottom row, the figure depicts the hypergraph of $Q$ (and its decomposition defined below), the hypergraph of its result over the attribute sets
 $\{\{A,B\},\{B,C\},\{C,D\}\}$, and the hypergraph of a subset of the query result over the same attribute sets.
\end{example}

\begin{figure}[h]

\begin{center}
\begin{tabular}{lllll}

\begin{tabular}{c@{\;}c@{\;}}
  \multicolumn{2}{c}{$R_1$} \\
  \toprule
  $A$ & $B$  \\\midrule
  $a_1$ & $b_1$\\
  $a_1$ & $b_2$\\
  $a_2$ & $b_1$\\
  $a_2$ & $b_2$\\\hline
  $a_1$ & $b_3$\\  
  \bottomrule
  \\
  \\
  \\
\end{tabular}

&

\begin{tabular}{c@{\;}c@{\;}}
  \multicolumn{2}{c}{$R_2$} \\
  \toprule
  $B$ & $C$  \\\midrule
  $b_1$ & $c_1$\\
  $b_2$ & $c_2$\\\hline
  $b_3$ & $c_3$\\
  $b_4$ & $c_4$\\    
  \bottomrule
  \\
  \\
  \\
  \\
  \end{tabular}

&

\begin{tabular}{c@{\;}c@{\;}}
  \multicolumn{2}{c}{$R_3$} \\
  \toprule
  $C$ & $D$  \\\midrule
  $c_1$ & $d_1$\\
  $c_1$ & $d_2$\\
  $c_2$ & $d_1$\\
  $c_2$ & $d_2$\\\hline
  $c_4$ & $d_1$\\
  \bottomrule
  \\
  \\
  \\
\end{tabular}

&

\begin{tabular}{c@{\;}c@{\;}c@{\;}c@{\;}}
  \multicolumn{4}{c}{$Q(\inst{D})$} \\
  \toprule
  $A$ & $B$ & $C$ & $D$  \\\midrule 
  {\color{red}$a_1$} & {\color{red}$b_1$}  & {\color{red}$c_1$} & {\color{red}$d_1$}\\
  {\color{blue}$a_1$} & {\color{blue}$b_1$} & {\color{blue}$c_1$} & {\color{blue}$d_2$}\\
  {\color{goodgreen}$a_2$} & {\color{goodgreen}$b_1$} & {\color{goodgreen}$c_1$} & {\color{goodgreen}$d_1$}\\
  $a_2$ & $b_1$ & $c_1$ & $d_2$\\
  {\color{red}$a_1$} & {\color{red}$b_2$} & {\color{red}$c_2$} & {\color{red}$d_2$}\\
  {\color{blue}$a_1$} & {\color{blue}$b_2$} & {\color{blue}$c_2$} & {\color{blue}$d_1$}\\
  {\color{goodgreen}$a_2$} & {\color{goodgreen}$b_2$} & {\color{goodgreen}$c_2$} & {\color{goodgreen}$d_2$}\\
  $a_2$ & $b_2$ & $c_2$ & $d_1$\\  
  \bottomrule
\end{tabular}

&

\begin{tabular}{c@{\;}c@{\;}c@{\;}c@{\;}}
  \multicolumn{4}{c}{$\mathit{rel}(M)$} \\
  \toprule
  $A$ & $B$ & $C$ & $D$  \\\midrule 
  {\color{red}$a_1$} & {\color{red}$b_1$} & {\color{red}$c_1$} & {\color{red}$d_1$}\\
  $a_2$ & $b_1$ & $c_1$ & $d_2$\\
  {\color{blue}$a_1$} & {\color{blue}$b_2$} & {\color{blue}$c_2$} & {\color{blue}$d_1$}\\
  {\color{goodgreen}$a_2$} & {\color{goodgreen}$b_2$} & {\color{goodgreen}$c_2$} & {\color{goodgreen}$d_2$}\\
  \bottomrule
  \\
  \\
  \\
  \\
\end{tabular}
\end{tabular}\\[1em]

\scalebox{0.8}{
\begin{tikzpicture}

\node at(0,-3.2) (joinGraph){Query hypergraph};
\node at(0,-3.6) {\& decomposition};

\node (1)  at(-1, -4.7) {$A$};

\node (2) [below of =1, node distance=1.1cm] {$B$};

\node (3) [below of =2, node distance=1.1cm] {$C$};

\node (4) [below of =3, node distance=1.1cm] {$D$};

    \begin{scope}[fill opacity=0.8]
    \draw
       ($(1)+(0,0.4)$) 
         to[out=0,in=0] ($(2) + (0.4,-0.3)$)
        to[out=0,in=180] ($(2) + (0,-0.3)$)
        to[out=180,in=0] ($(2) + (-0.4,-0.3)$)
        to[out=180,in=180] ($(1) + (0,0.4)$);

         \draw
       ($(2)+(0,0.3)$) 
        to[out=0,in=180] ($(2) + (0.4,0.3)$)
        to[out=0,in=0] ($(3) + (0.4,-0.3)$)
        to[out=0,in=180] ($(3) + (0,-0.3)$)
        to[out=180,in=0] ($(3) + (-0.4,-0.3)$)
        to[out=180,in=180] ($(2) + (-0.4,0.3)$)
        to[out=0,in=180] ($(2) + (0,0.3)$);      
                
         \draw
       ($(3)+(0,0.3)$) 
        to[out=0,in=180] ($(3) + (0.4,0.3)$)
        to[out=0,in=0] ($(4) + (0,-0.4)$)
        to[out=180,in=180] ($(3) + (-0.4,0.3)$)
        to[out=0,in=180] ($(3) + (0,0.3)$);                  
     \end{scope}

\node (A1)  at(1, -4.5) {$A$};
\node (A2)  [below of =A1, node distance=0.5cm] {$B$};

\node (A23) [below of =A1, node distance=1.7cm] {$B$};
\node (A231) [below of =A23, node distance=0.5cm] {$C$};

\node (A45) [below of =A23, node distance=1.7cm] {$C$};
\node (A451) [below of =A45, node distance=0.5cm] {$D$};

\draw [decorate,decoration={brace,amplitude=6pt,raise=0pt}] 
(0.5,-4.4) -- (1.5,-4.4);
\draw [decorate,decoration={brace,amplitude=6pt,raise=0pt,mirror}] 
(0.5,-5.2) -- (1.5,-5.2);

\draw [decorate,decoration={brace,amplitude=6pt,raise=0pt}] 
(0.5,-6) -- (1.5,-6);
\draw [decorate,decoration={brace,amplitude=6pt,raise=0pt,mirror}] 
(0.5,-6.9) -- (1.5,-6.9);

\draw [decorate,decoration={brace,amplitude=6pt,raise=0pt}] 
(0.5,-7.7) -- (1.5,-7.7);
\draw [decorate,decoration={brace,amplitude=6pt,raise=0pt,mirror}] 
(0.5,-8.5) -- (1.5,-8.5);

\draw [decorate, segment length=20] (1,-5.4) -- (1,-5.8);
\draw [decorate, segment length=20] (1,-7.1) -- (1,-7.5);

\node at(5.3,-3.2) (joinGraph){Hypergraph of query result};   
\node at(3,-4.05) (11) {$a_1$};
\node (12) [right of =11,node distance=0.5cm] {$b_1$};

\node (21) [below of =11,node distance=1.4cm] {$a_2$};
\node (22) [right of =21,node distance=0.5cm] {$b_1$};

\node (31) [below of =21,node distance=1.4cm] {$a_1$};
\node (32) [right of =31,node distance=0.5cm] {$b_2$};

\node (31new) [below of =31,node distance=1.4cm] {$a_2$};
\node (32new) [right of =31new,node distance=0.5cm] {$b_2$};

\tikzstyle{background}=[rectangle,
                                                rounded corners=1mm]

\begin{pgfonlayer}{background}                   
   \node [background,
                    fit=(11)(12),
                    fill=gray!15,inner sep= -1] {};        
                    
   \node [background,
                    fit=(21)(22),
                    fill=gray!15,inner sep= -1] {};
                    
   \node [background,
                    fit=(31)(32),
                    fill=gray!15,inner sep= -1] {};                                                          

   \node [background,
                    fit=(31new)(32new),
                    fill=gray!15,inner sep= -1] {};                                                          
\end{pgfonlayer}

\node (invisible) [right of =12,node distance=1.5cm]  {};
\node (invisible2) [right of =invisible,node distance=0.7cm]  {};

\node (41) [below of =invisible,node distance=0.7cm]  {$b_1$};
\node (42) [right of =41,node distance=0.5cm] {$c_1$};

\node (51) [below of =41,node distance=2.8cm] {$b_2$};
\node (52) [right of =51,node distance=0.5cm] {$c_2$};

\tikzstyle{background}=[rectangle,
                                                rounded corners=1mm]

\begin{pgfonlayer}{background}                   
   \node [background,
                    fit=(41)(42),
                    fill=gray!15,inner sep= -1] {};        
                    
   \node [background,
                    fit=(51)(52),
                    fill=gray!15,inner sep= -1] {};
                    
\end{pgfonlayer}


\node (71) [right of =invisible2,node distance=1.7cm]  {$c_1$};
\node (72) [right of =71,node distance=0.5cm] {$d_1$};

\node (71new) [below of =71,node distance=1.4cm]  {$c_1$};
\node (72new) [right of =71new,node distance=0.5cm] {$d_2$};

\node (81) [below of =71new,node distance=1.4cm] {$c_2$};
\node (82) [right of =81,node distance=0.5cm] {$d_2$};

\node (91) [below of =81,node distance=1.4cm] {$c_2$};
\node (92) [right of =91,node distance=0.5cm] {$d_1$};

\tikzstyle{background}=[rectangle,
                                                rounded corners=1mm]

\begin{pgfonlayer}{background}                   
   \node [background,
                    fit=(71)(72),
                    fill=gray!15,inner sep= -1] {};        
                    
   \node [background,
                    fit=(71new)(72new),
                    fill=gray!15,inner sep= -1] {};                            
                    
   \node [background,
                    fit=(81)(82),
                    fill=gray!15,inner sep= -1] {};
                    
 \node [background,
                    fit=(91)(92),
                    fill=gray!15,inner sep= -1] {};  
                     
\end{pgfonlayer}

   
    \begin{scope}[fill opacity=0.8]
    \draw[color=red]
       ($(11)+(0,0.4)$) 
        to[out=0,in=180] ($(12) + (0,0.4)$)
        to[out=0,in=180] ($(41) + (0,0.4)$)
        to[out=0,in=180] ($(42)+(0,0.4)$)
        to[out=0,in=180] ($(71)+(0,0.4)$)      
        to[out=0,in=180] ($(72)+(0,0.4)$)            
        to[out=0,in=90] ($(72)+(0.4,0)$)                     
        to[out=270,in=0] ($(72)+(0,-0.4)$)                        
        to[out=180,in=0] ($(71)+(0,-0.4)$)                        
        to[out=180,in=0] ($(42)+(0,-0.4)$)         
        to[out=180,in=0] ($(41)+(0,-0.4)$)                
        to[out=180,in=0] ($(12)+(0,-0.4)$) 
        to[out=180,in=0] ($(11)+(0,-0.4)$)
        to[out=180,in=270] ($(11)+(-0.4,0)$)    
        to[out=90,in=180] ($(11)+(0,0.4)$);

    \draw[color=blue]
       ($(11)+(0,0.6)$) 
        to[out=0,in=180] ($(12) + (0,0.6)$)
        to[out=0,in=180] ($(41) + (0,0.65)$)
        to[out=0,in=180] ($(42)+(0,0.65)$)
        to[out=0,in=180] ($(71new)+(0,0.4)$)      
        to[out=0,in=180] ($(72new)+(0,0.4)$)            
        to[out=0,in=90] ($(72new)+(0.4,0)$)                     
        to[out=270,in=0] ($(72new)+(0,-0.4)$)                        
        to[out=180,in=0] ($(71new)+(0,-0.4)$)                        
        to[out=180,in=0] ($(42)+(0,-0.65)$)         
        to[out=180,in=0] ($(41)+(0,-0.65)$)                
        to[out=180,in=0] ($(12)+(0,-0.6)$) 
        to[out=180,in=0] ($(11)+(0,-0.6)$)
        to[out=180,in=270] ($(11)+(-0.6,0)$)    
        to[out=90,in=180] ($(11)+(0,0.60)$);

      \draw[color=goodgreen]
       ($(21)+(0,0.4)$) 
        to[out=0,in=180] ($(22) + (0,0.4)$)
        to[out=0,in=180] ($(41) + (0,0.85)$)
        to[out=0,in=180] ($(42)+(0,0.85)$)
        to[out=0,in=180] ($(71)+(0,0.6)$)      
        to[out=0,in=180] ($(72)+(0,0.6)$)            
        to[out=0,in=90] ($(72)+(0.6,0)$)                     
        to[out=270,in=0] ($(72)+(0,-0.6)$)                        
        to[out=180,in=0] ($(71)+(0,-0.6)$)                        
        to[out=180,in=0] ($(42)+(0,-0.85)$)         
        to[out=180,in=0] ($(41)+(0,-0.85)$)         
        to[out=180,in=0] ($(22)+(0,-0.4)$) 
        to[out=180,in=0] ($(21)+(0,-0.4)$)
        to[out=180,in=270] ($(21)+(-0.4,0)$)   
        to[out=90,in=180] ($(21)+(0,0.4)$);

      \draw
       ($(21)+(0,0.6)$) 
        to[out=0,in=180] ($(22) + (0,0.6)$)
        to[out=0,in=180] ($(41) + (0,1.1)$)
        to[out=0,in=180] ($(42)+(0,1.1)$)
        to[out=0,in=180] ($(71new)+(0,0.6)$)      
        to[out=0,in=180] ($(72new)+(0,0.6)$)            
        to[out=0,in=90] ($(72new)+(0.6,0)$)                     
        to[out=270,in=0] ($(72new)+(0,-0.6)$)                        
        to[out=180,in=0] ($(71new)+(0,-0.6)$)                        
        to[out=180,in=0] ($(42)+(0,-1.1)$)         
        to[out=180,in=0] ($(41)+(0,-1.1)$)         
        to[out=180,in=0] ($(22)+(0,-0.6)$) 
        to[out=180,in=0] ($(21)+(0,-0.6)$)
        to[out=180,in=270] ($(21)+(-0.6,0)$)   
        to[out=90,in=180] ($(21)+(0,0.6)$);

     
      \draw[color=red]
       ($(31)+(0,0.4)$) 
        to[out=0,in=180] ($(32) + (0,0.4)$)
        to[out=0,in=180] ($(51) + (0,0.4)$)
        to[out=0,in=180] ($(52)+(0,0.4)$)
        to[out=0,in=180] ($(81)+(0,0.4)$)      
        to[out=0,in=180] ($(82)+(0,0.4)$)            
        to[out=0,in=90] ($(82)+(0.4,0)$)                       
        to[out=270,in=0] ($(82)+(0,-0.4)$)                        
        to[out=180,in=0] ($(81)+(0,-0.4)$)                        
        to[out=180,in=0] ($(52)+(0,-0.4)$)         
        to[out=180,in=0] ($(51)+(0,-0.4)$)              
        to[out=180,in=0] ($(32)+(0,-0.4)$) 
        to[out=180,in=0] ($(31)+(0,-0.4)$)
        to[out=180,in=270] ($(31)+(-0.4,0)$)   
        to[out=90,in=180] ($(31)+(0,0.4)$);

      \draw[color=blue]
       ($(31)+(0,0.6)$) 
        to[out=0,in=180] ($(32) + (0,0.6)$)
        to[out=0,in=180] ($(51) + (0,0.65)$)
        to[out=0,in=180] ($(52)+(0,0.65)$)
        to[out=0,in=180] ($(91)+(0,0.4)$)      
        to[out=0,in=180] ($(92)+(0,0.4)$)            
        to[out=0,in=90] ($(92)+(0.4,0)$)                     
        to[out=270,in=0] ($(92)+(0,-0.4)$)                        
        to[out=180,in=0] ($(91)+(0,-0.4)$)                        
        to[out=180,in=0] ($(52)+(0,-0.65)$)         
        to[out=180,in=0] ($(51)+(0,-0.65)$)                
        to[out=180,in=0] ($(32)+(0,-0.6)$) 
        to[out=180,in=0] ($(31)+(0,-0.6)$)
        to[out=180,in=270] ($(31)+(-0.6,0)$)    
        to[out=90,in=180] ($(31)+(0,0.6)$);

     \draw[color=goodgreen]
       ($(31new)+(0,0.4)$) 
        to[out=0,in=180] ($(32new) + (0,0.4)$)
        to[out=0,in=180] ($(51) + (0,0.85)$)
        to[out=0,in=180] ($(52)+(0,0.85)$)
        to[out=0,in=180] ($(81)+(0,0.6)$)      
        to[out=0,in=180] ($(82)+(0,0.6)$)            
        to[out=0,in=90] ($(82)+(0.6,0)$)                     
        to[out=270,in=0] ($(82)+(0,-0.6)$)                        
        to[out=180,in=0] ($(81)+(0,-0.6)$)                        
        to[out=180,in=0] ($(52)+(0,-0.85)$)         
        to[out=180,in=0] ($(51)+(0,-0.85)$)                
        to[out=180,in=0] ($(32new)+(0,-0.4)$) 
        to[out=180,in=0] ($(31new)+(0,-0.4)$)
        to[out=180,in=270] ($(31new)+(-0.4,0)$)    
        to[out=90,in=180] ($(31new)+(0,0.4)$);

             \draw
       ($(31new)+(0,0.6)$) 
        to[out=0,in=180] ($(32new) + (0,0.6)$)
        to[out=0,in=180] ($(51) + (0,1.1)$)
        to[out=0,in=180] ($(52)+(0,1.1)$)
        to[out=0,in=180] ($(91)+(0,0.6)$)      
        to[out=0,in=180] ($(92)+(0,0.6)$)            
        to[out=0,in=90] ($(92)+(0.6,0)$)                     
        to[out=270,in=0] ($(92)+(0,-0.6)$)                        
        to[out=180,in=0] ($(91)+(0,-0.6)$)                        
        to[out=180,in=0] ($(52)+(0,-1.1)$)         
        to[out=180,in=0] ($(51)+(0,-1.1)$)                
        to[out=180,in=0] ($(32new)+(0,-0.6)$) 
        to[out=180,in=0] ($(31new)+(0,-0.6)$)
        to[out=180,in=270] ($(31new)+(-0.6,0)$)    
        to[out=90,in=180] ($(31new)+(0,0.6)$);

     \end{scope}

\node at(11.7,-3.2)(edgeSubset){Subset $M$ of the set of edges};
\node at(9.4,-4.05) (11) {$a_1$};
\node (12) [right of =11,node distance=0.5cm] {$b_1$};

\node (21) [below of =11,node distance=1.4cm] {$a_2$};
\node (22) [right of =21,node distance=0.5cm] {$b_1$};

\node (31) [below of =21,node distance=1.4cm] {$a_1$};
\node (32) [right of =31,node distance=0.5cm] {$b_2$};

\node (31new) [below of =31,node distance=1.4cm] {$a_2$};
\node (32new) [right of =31new,node distance=0.5cm] {$b_2$};

\tikzstyle{background}=[rectangle,
                                                rounded corners=1mm]

\begin{pgfonlayer}{background}                   
   \node [background,
                    fit=(11)(12),
                    fill=gray!15,inner sep= -1] {};        
                    
   \node [background,
                    fit=(21)(22),
                    fill=gray!15,inner sep= -1] {};
                    
   \node [background,
                    fit=(31)(32),
                    fill=gray!15,inner sep= -1] {};                                                          

   \node [background,
                    fit=(31new)(32new),
                    fill=gray!15,inner sep= -1] {};                                                          
\end{pgfonlayer}

\node (invisible) [right of =12,node distance=1.5cm]  {};
\node (invisible2) [right of =invisible,node distance=0.7cm]  {};

\node (41) [below of =invisible,node distance=0.7cm]  {$b_1$};
\node (42) [right of =41,node distance=0.5cm] {$c_1$};

\node (51) [below of =41,node distance=2.8cm] {$b_2$};
\node (52) [right of =51,node distance=0.5cm] {$c_2$};

\tikzstyle{background}=[rectangle,
                                                rounded corners=1mm]

\begin{pgfonlayer}{background}                   
   \node [background,
                    fit=(41)(42),
                    fill=gray!15,inner sep= -1] {};        
                    
   \node [background,
                    fit=(51)(52),
                    fill=gray!15,inner sep= -1] {};
                    
\end{pgfonlayer}


\node (71) [right of =invisible2,node distance=1.7cm]  {$c_1$};
\node (72) [right of =71,node distance=0.5cm] {$d_1$};

\node (71new) [below of =71,node distance=1.4cm]  {$c_1$};
\node (72new) [right of =71new,node distance=0.5cm] {$d_2$};

\node (81) [below of =71new,node distance=1.4cm] {$c_2$};
\node (82) [right of =81,node distance=0.5cm] {$d_2$};

\node (91) [below of =81,node distance=1.4cm] {$c_2$};
\node (92) [right of =91,node distance=0.5cm] {$d_1$};

\tikzstyle{background}=[rectangle,
                                                rounded corners=1mm]

\begin{pgfonlayer}{background}                   
   \node [background,
                    fit=(71)(72),
                    fill=gray!15,inner sep= -1] {};        
                    
   \node [background,
                    fit=(71new)(72new),
                    fill=gray!15,inner sep= -1] {};                            
                    
   \node [background,
                    fit=(81)(82),
                    fill=gray!15,inner sep= -1] {};
                    
 \node [background,
                    fit=(91)(92),
                    fill=gray!15,inner sep= -1] {};  
                     
\end{pgfonlayer}

   
    \begin{scope}[fill opacity=0.8]
    \draw[color=red]
       ($(11)+(0,0.4)$) 
        to[out=0,in=180] ($(12) + (0,0.4)$)
        to[out=0,in=180] ($(41) + (0,0.4)$)
        to[out=0,in=180] ($(42)+(0,0.4)$)
        to[out=0,in=180] ($(71)+(0,0.4)$)      
        to[out=0,in=180] ($(72)+(0,0.4)$)            
        to[out=0,in=90] ($(72)+(0.4,0)$)                     
        to[out=270,in=0] ($(72)+(0,-0.4)$)                        
        to[out=180,in=0] ($(71)+(0,-0.4)$)                        
        to[out=180,in=0] ($(42)+(0,-0.4)$)         
        to[out=180,in=0] ($(41)+(0,-0.4)$)                
        to[out=180,in=0] ($(12)+(0,-0.4)$) 
        to[out=180,in=0] ($(11)+(0,-0.4)$)
        to[out=180,in=270] ($(11)+(-0.4,0)$)    
        to[out=90,in=180] ($(11)+(0,0.4)$);

        \draw
       ($(21)+(0,0.4)$) 
        to[out=0,in=180] ($(22) + (0,0.4)$)
        to[out=0,in=180] ($(41) + (0,0.6)$)
        to[out=0,in=180] ($(42)+(0,0.6)$)
        to[out=0,in=180] ($(71new)+(0,0.4)$)      
        to[out=0,in=180] ($(72new)+(0,0.4)$)            
        to[out=0,in=90] ($(72new)+(0.4,0)$)                     
        to[out=270,in=0] ($(72new)+(0,-0.4)$)                        
        to[out=180,in=0] ($(71new)+(0,-0.4)$)                        
        to[out=180,in=0] ($(42)+(0,-0.6)$)         
        to[out=180,in=0] ($(41)+(0,-0.6)$)         
        to[out=180,in=0] ($(22)+(0,-0.4)$) 
        to[out=180,in=0] ($(21)+(0,-0.4)$)
        to[out=180,in=270] ($(21)+(-0.4,0)$)   
        to[out=90,in=180] ($(21)+(0,0.4)$);

     
         \draw[color=goodgreen]
       ($(31new)+(0,0.4)$) 
        to[out=0,in=180] ($(32new) + (0,0.4)$)
        to[out=0,in=180] ($(51) + (0,0.6)$)
        to[out=0,in=180] ($(52)+(0,0.6)$)
        to[out=0,in=180] ($(81)+(0,0.4)$)      
        to[out=0,in=180] ($(82)+(0,0.4)$)            
        to[out=0,in=90] ($(82)+(0.4,0)$)                     
        to[out=270,in=0] ($(82)+(0,-0.4)$)                        
        to[out=180,in=0] ($(81)+(0,-0.4)$)                        
        to[out=180,in=0] ($(52)+(0,-0.6)$)         
        to[out=180,in=0] ($(51)+(0,-0.6)$)                
        to[out=180,in=0] ($(32new)+(0,-0.4)$) 
        to[out=180,in=0] ($(31new)+(0,-0.4)$)
        to[out=180,in=270] ($(31new)+(-0.4,0)$)    
        to[out=90,in=180] ($(31new)+(0,0.4)$);

      \draw[color=blue]
       ($(31)+(0,0.4)$) 
        to[out=0,in=180] ($(32) + (0,0.4)$)
        to[out=0,in=180] ($(51) + (0,0.4)$)
        to[out=0,in=180] ($(52)+(0,0.4)$)
        to[out=0,in=180] ($(91)+(0,0.4)$)      
        to[out=0,in=180] ($(92)+(0,0.4)$)            
        to[out=0,in=90] ($(92)+(0.4,0)$)                     
        to[out=270,in=0] ($(92)+(0,-0.4)$)                        
        to[out=180,in=0] ($(91)+(0,-0.4)$)                        
        to[out=180,in=0] ($(52)+(0,-0.4)$)         
        to[out=180,in=0] ($(51)+(0,-0.4)$)                
        to[out=180,in=0] ($(32)+(0,-0.4)$) 
        to[out=180,in=0] ($(31)+(0,-0.4)$)
        to[out=180,in=270] ($(31)+(-0.4,0)$)    
        to[out=90,in=180] ($(31)+(0,0.4)$);

     \end{scope}

\end{tikzpicture}
}
\end{center}
\caption{
Top row: database $\inst{D}=\{R_1,R_2,R_3\}$, the result $Q(\inst{D})$ of the path query $Q$ in Example~\ref{ex:hypergraph}, and a subset of $Q(\inst{D})$; bottom row: the hypergraph of $Q$, the tree of a decomposition $\calT$ of $Q$, the hypergraph of $Q(\inst{D})$ over attribute sets $\sign{\calT}$, and a minimal edge cover $M$ of this hypergraph.}
\label{fig:hypergraph}
\end{figure}

{\noindent\bf Decompositions.} A {\em hypertree decomposition} $\calT$ of (the hypergraph $H$  of) a query $Q$ is a pair $(T,\chi)$, where $T$ is a tree and $\chi$ a function mapping each node in $T$ to a subset of the nodes of $H$. For a node $t\in T$, the set $\chi(t)$  is called a bag. A hypertree decomposition satisfies two properties. {\em Coverage}: For each edge $e$ in $H$, there must be a node $t$ in $T$ with $e \subseteq \chi(t)$.  {\em Connectivity}: For each node $v$ in $H$, the set $\{t \mid t \in T, v\in \chi(t)\}$ must be non-empty and form a connected subtree in $T$.  The schema of $\calT$ is the set of its bags: $\sign{\calT}= \{ \chi(t)\mid t\in T\}$. The attributes of $\calT$ are defined by $\att{\calT}=\bigcup_{B\in\sign{\calT}} B$.

A {\em fractional hypertree decomposition}~\cite{GroheM14} of (the hypergraph $H$  
 of) a query $Q$ is a triple $(T,\chi,\setindexel{\gamma_t}{t}{T})$ 
 where $(T,\chi)$ is a hypertree decomposition of $H$ and for each node $t\in T$, $\gamma_t$ is  a fractional edge cover of minimal weight
  for the subgraph of $H$ restricted to $\chi(t)$.
We define the {\em fractional hypertree width} of $\calT=(T,\chi, \setindexel{\gamma_t}{t}{T})$ as $\max_{t \in T} \{\weight{\gamma_t}\}$
and we denote it by $\fhtw{\calT}$.
The fractional hypertree width $\fhtw{H}$ of the hypergraph $H$
is the minimal possible such width of any fractional hypertree decomposition of $H$. 
The fractional hypertree width $\fhtw{Q}$ of a query $Q$ is the fractional hypertree width $\fhtw{H}$ of its hypergraph $H$.
For simplicity, we use the terms decomposition and width in place of fractional hypertree decomposition and fractional hypertree width, respectively.

A hypergraph $H$ is {\em $\alpha$-acyclic} (acyclic for short) if it has a decomposition in which  each bag is contained in an edge of $H$~\cite{BeeriFMY83}. A query whose hypergraph is acyclic is also called acyclic. The width of any acyclic hypergraph or query is one. A {\em join tree} of a query $Q$ is a 
labelled tree $(T,\ell)$
where $T = (\sign{Q},E)$ is a tree and $\ell$ is an edge labelling such that
 \begin{inparaenum}[(i)] 
\item  each edge $e = (R,R') \in E$ is labelled by 
$\ell(e) = \sign{R} \cap \sign{R'}$ and 
\item for every pair $R$, $R'$ of distinct nodes and for each attribute 
 $A \in \sign{R} \cap \sign{R'}$, the label of each edge along the unique path between 
 $R$ and $R'$ includes $A$ (Section 6.4 in \cite{AHV95}).   
\end{inparaenum}
A query is acyclic if and only if it admits a join tree (Theorem 6.4.5 in \cite{AHV95}). 
The decomposition $\calT$ {\em corresponding} to the join tree $\calJ$ of a query $Q$
is constructed as follows. Each node in $\calJ$, which corresponds to a relation symbol $R$, is mapped to a node in $\calT$, which has the bag $\sign{R}$. 
For each node $t$ in $\calT$ with bag $\sign{R}$, the function $\gamma_t$ maps the hyperedge for $R$ to 1.

\begin{example}\label{ex:decomposition}
Figure~\ref{fig:hypergraph} gives the hypergraph (left, bottom row) of the 
 path query in Example~\ref{ex:hypergraph} along with one of its decompositions. 
 This decomposition has width one, since each bag is included in one edge of the hypergraph; the path query is acyclic. The decomposition, where the top two bags are merged into one, has width two. For queries with cycles, e.g., Loomis-Whitney queries~\cite{Ngo:SIGREC:2013}, the width can be larger than one. For instance, the width of the triangle query (Loomis-Whitney query over three relations) is $3/2$~\cite{AtseriasGM13}.
\end{example}

{\noindent\bf Computational Model.}
We use the uniform-cost RAM model \cite{AhoHU74} 
where data values as well as pointers to databases are of constant size.
Our analysis is with respect to data complexity where the query is assumed fixed. We use $\widetilde{\mathcal O}$ to hide a $\log|\inst{D}|$ factor.

\smallskip
{\noindent\bf Result-preserving Transformation.} 
Let $(Q,\calT,\inst{D})$ denote a triple of a natural join query $Q$, a
 decomposition $\calT$ of $Q$, and a database $\inst{D}$.

\begin{proposition}\label{prop:rewriting2}
Given $(Q,\calT,\inst{D})$, we can compute $(Q',\calT,\inst{D}')$ 
with size ${\mathcal O}(|\inst{D}|^{\fhtw{\calT}})$  and in time $\widetilde{\mathcal O}(|\inst{D}|^{\fhtw{\calT}})$ 
such that $Q'$ is an acyclic natural join query, $\calT$ corresponds 
to  a join tree of $Q'$, 
 $\inst{D}'$ is globally consistent with respect to $Q'$ and $Q'(\inst{D}')=Q(\inst{D})$.
\end{proposition}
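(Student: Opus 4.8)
The plan is to materialize each bag of the decomposition $\calT$ as a new relation, and let $Q'$ be the join of these bag-relations. For each node $t$ of the tree $T$ with bag $B=\chi(t)$, I would compute a relation $R'_t$ with schema $B$ as the projection onto $B$ of the join of the relations whose edges $\gamma_t$ covers with nonzero weight; more precisely, $R'_t = \pi_B\bigl(\Join_{e : \gamma_t(e) > 0} R_e\bigr)$, restricting as usual to the attributes in $B$. By the definition of a fractional hypertree decomposition, $\gamma_t$ is a fractional edge cover of minimal weight for the subgraph of $H$ restricted to $B$, so $\weight{\gamma_t} \le \fhtw{\calT}$; hence by the AGM bound~\cite{AtseriasGM13} the join $\Join_{e : \gamma_t(e) > 0} R_e$ has size $\mathcal O(|\inst{D}|^{\fhtw{\calT}})$ and, by a worst-case optimal join algorithm~\cite{Ngo:SIGREC:2013}, can be computed in time $\widetilde{\mathcal O}(|\inst{D}|^{\fhtw{\calT}})$; projecting onto $B$ only shrinks it. Summing over the constantly many nodes of $T$ keeps both size and time within the stated bounds. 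I then set $Q' = \Join_{t \in T} R'_t(\chi(t))$ and let $\inst{D}'$ consist of these relations.

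The next step is to argue $Q'$ is acyclic with $\calT$ corresponding to a join tree of $Q'$. The tree $T$, with each node $t$ relabelled by the relation symbol $R'_t$ and each edge $(t,t')$ labelled by $\chi(t)\cap\chi(t')$, is a candidate join tree. Property (i) of a join tree holds by construction of the labels. Property (ii) — that for every attribute $A$ shared by $R'_t$ and $R'_{t'}$, every edge on the $t$–$t'$ path carries $A$ — is exactly the connectivity property of the hypertree decomposition $\calT$: the nodes whose bag contains $A$ form a connected subtree of $T$, so $A$ lies in every bag, hence every edge label, along that path. Thus by Theorem 6.4.5 in~\cite{AHV95} $Q'$ is acyclic, and the decomposition corresponding to this join tree is exactly $\calT$ (each bag is $\sign{R'_t}=\chi(t)$, and $\gamma_t$ maps the edge for $R'_t$ to $1$, which is a minimal-weight cover since the bag is itself an edge).

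It remains to show $Q'(\inst{D}') = Q(\inst{D})$ and that $\inst{D}'$ can be made globally consistent. For $Q'(\inst{D}') = Q(\inst{D})$: the coverage property of $\calT$ guarantees every original edge $e$ (i.e.\ every relation $R_e$) is contained in some bag $\chi(t)$; the standard argument that a hypertree decomposition preserves the join — each bag-relation $R'_t$ is a join of original relations restricted to $B$, the family of bags covers all attributes and, via connectivity, is consistent on shared attributes — shows that joining the $R'_t$ recomputes $Q(\inst{D})$ with no spurious tuples. (A dangling-tuple check: the $R'_t$ are supersets of the corresponding projections of $Q(\inst{D})$, since each $R_e$ projects onto its attributes the full join, and they are tight enough on each bag that the overall join loses nothing.) Finally, to enforce global consistency of $\inst{D}'$ we run the standard semijoin-based full reducer (Yannakakis' algorithm) on the join tree: two bottom-up and top-down sweeps of semijoins remove exactly the dangling tuples without changing $Q'(\inst{D}')$, in time linear in $|\inst{D}'| = \mathcal O(|\inst{D}|^{\fhtw{\calT}})$.

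The main obstacle I expect is the correctness argument $Q'(\inst{D}') = Q(\inst{D})$: one must be careful that projecting each bag-join onto its bag does not drop tuples needed elsewhere (it does not, precisely because of coverage and connectivity), and that joining the projected bag-relations does not introduce tuples absent from $Q(\inst{D})$ — this is where the connectivity property of $\calT$ is essential and where a clean invocation of the classical acyclic-join / decomposition machinery from~\cite{AHV95} does the work. Everything else is bookkeeping with the AGM bound and a worst-case optimal join algorithm.
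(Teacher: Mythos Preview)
Your overall plan matches the paper's proof: materialize one relation per bag, take $Q'$ to be their natural join, observe that $\calT$ is then a join tree of $Q'$, and run Yannakakis' semijoin reduction to achieve global consistency. The size and time bounds via AGM and a worst-case optimal join are also argued the same way.

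There is, however, a genuine gap in your definition of the bag relations. You set
\[
R'_t \;=\; \pi_B\Bigl(\Join_{e:\,\gamma_t(e)>0} R_e\Bigr),
\]
using only the relations that receive positive weight in the fractional cover $\gamma_t$. This is not enough to guarantee $Q'(\inst{D}')\subseteq Q(\inst{D})$. A relation $R_e$ whose edge $e\subseteq\chi(t)$ may have $\gamma_t(e)=0$ in a minimal-weight cover (its attributes are already covered by other edges), and if this happens at \emph{every} bag containing $e$, then the constraint imposed by $R_e$ is never enforced anywhere in $Q'$. Concretely, take $Q=R_1(A,B,C)\Join R_2(A,B)$ with the single-bag decomposition $\chi(t)=\{A,B,C\}$; a minimal fractional cover is $\gamma_t(R_1)=1$, $\gamma_t(R_2)=0$. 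Your $R'_t=R_1$ and hence $Q'(\inst{D}')=R_1$, whereas $Q(\inst{D})=R_1\Join R_2$ can be strictly smaller (or empty). The semijoin sweep cannot repair this, since $R_2$ never appears in $\inst{D}'$.

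The fix, which is exactly what the paper does, is to define the bag relation as the $B$-restriction of the \emph{entire} query: $R'_t = Q_B(\inst{D}_B)$, i.e.\ the join of \emph{all} relations projected onto their attributes in $B$. This does not hurt the size or time bound: the fractional edge cover number of the restricted hypergraph $Q_B$ is at most $\weight{\gamma_t}\le\fhtw{\calT}$ (since $\gamma_t$ is still a fractional edge cover after adding more edges), so a worst-case optimal join still runs in $\widetilde{\mathcal O}(|\inst{D}|^{\fhtw{\calT}})$ and the output is no larger than your $R'_t$. With this definition, the coverage property now directly yields $\pi_{\sign{R_e}}t\in R_e$ for every $e$ and every $t\in Q'(\inst{D}')$, giving the missing containment.
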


\begin{example}
Consider the path query $Q$, decomposition $\calT$, and database $\inst{D}$ in 
Example~\ref{ex:decomposition}. 
The application of Proposition~\ref{prop:rewriting2} 
 leaves $Q$ unchanged, since $Q$ is already acyclic and $\calT$ corresponds 
 to a join tree of $Q$.
 The database in Figure~\ref{fig:hypergraph} is not globally consistent with respect to $Q$, since it contains tuples (under the thin lines) that do not contribute to the result. We remove these dangling tuples to make it consistent.

Consider now the bowtie query 
$Q_\Join = R_1(A,B) \Join R_2(B,C) \Join R_3(A,C) \Join R_4(A,D)\Join 
R_5(D,E) \Join R_6(A,E)$. 
A decomposition $\calT_{\Join}$ with the lowest width of $3/2$ has two bags 
$S_1=\{A, B,C\}$ and  $S_2=\{A, D, E\}$, one for each clique (triangle) in the query. 
The application of Proposition~\ref{prop:rewriting2} 
constructs the acyclic query
 $Q' = B_1(A,B,C)\Join B_2(A,D,E)$.
The relations $B_1(A,B,C)$ and $B_2(A,D,E)$
are materializations of  
the two bags of $\calT_{\Join}$.
The database $\inst{D}' = 
\{B_1(A,B,C), B_2(A,D,E)\}$ is globally consistent 
with respect to $Q'$, i.e., each
tuple in $B_1'$ has at least one joinable tuple in $B_2'$ and 
vice versa.
The decomposition $\calT_{\Join}$ corresponds to a join tree 
of $Q'$. 
\end{example}


\section{Covers for Join Queries}
\label{sec:cores_join_results}

In this section we introduce the notion of covers of join query results along with a characterization of their size bounds, the connection to minimal edge covers for hypergraphs of join query results, and the complexity for enumerating the tuples in the query result from a cover.

Let $(Q,\calT,\inst{D})$ denote a triple of a natural join query $Q$, decomposition 
$\calT$ of $Q$, and database $\inst{D}$. 
For an instance $(Q,\calT,\inst{D})$, covers of the query result $Q(\inst{D})$ are relations that are minimal while preserving the information in the query result $Q(\inst{D})$ in the following sense.

\begin{definition}[Result Preservation]
A relation $K$ is {\em result-preserving with respect to $(Q,\calT,\inst{D})$} if its schema $\sign{K}$ is $\mathit{att}(Q)$ and $\pi_B K = \pi_B Q(\inst{D})$ for each $B \in \sign{\calT}$.   
\end{definition}

That is, for each bag $B$ in the decomposition $\calT$ of $Q$, both the relation 
$K$ and the query result $Q(\inst{D})$ have the same projection onto $B$. This also means that the natural join of these projections of $K$ is precisely $Q(\inst{D})$.

\begin{proposition}\label{Q(D)-preserv=join_preserv}
Given $(Q,\calT,\inst{D})$, a relation $K$ with schema $\mathit{att}(Q)$ is result-preserving with respect to $(Q,\calT,\inst{D})$ if and only if $\Join_{B\in\sign{\calT}} \pi_B K = Q(\inst{D})$.
 \end{proposition}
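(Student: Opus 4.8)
The plan is to isolate the single identity
$\Join_{B\in\sign{\calT}}\pi_B Q(\inst{D}) = Q(\inst{D})$
and derive both implications from it by routine manipulation of projection and join. First I would fix notation: write $Q = R_1(S_1)\Join\cdots\Join R_n(S_n)$, so that $Q(\inst{D})$ is by definition the set of tuples $t$ over $\att{Q}=\bigcup_{i\in[n]}S_i$ with $\pi_{S_i}t\in R_i$ for every $i\in[n]$. Since every bag of $\calT$ is a set of attributes of $Q$ and, by the coverage property, every $S_i$ is contained in some bag, we have $\att{\calT}=\bigcup_{B\in\sign{\calT}}B = \att{Q}$, which is also the schema $\sign{K}$ of $K$ by hypothesis; I will use this schema bookkeeping silently below.

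Next I would establish the identity. The inclusion $Q(\inst{D})\subseteq\Join_{B\in\sign{\calT}}\pi_B Q(\inst{D})$ is immediate: both sides are relations over $\att{Q}=\att{\calT}$, and any $t\in Q(\inst{D})$ satisfies $\pi_B t\in\pi_B Q(\inst{D})$ for every bag $B$, hence lies in the join. For the converse, take any $t$ in the join of the projections, so $\pi_B t\in\pi_B Q(\inst{D})$ for every $B\in\sign{\calT}$. For each $i\in[n]$, coverage yields a bag $B$ with $S_i\subseteq B$; then $\pi_{S_i}t=\pi_{S_i}\bigl(\pi_B t\bigr)\in\pi_{S_i}\bigl(\pi_B Q(\inst{D})\bigr)=\pi_{S_i}Q(\inst{D})\subseteq R_i$, where the last inclusion holds because $Q(\inst{D})=R_1\Join\cdots\Join R_n$. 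As this holds for all $i$, $t\in Q(\inst{D})$.

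Finally I would handle the two directions. For the forward direction, if $K$ is result-preserving then $\pi_B K=\pi_B Q(\inst{D})$ for each $B\in\sign{\calT}$, so $\Join_{B\in\sign{\calT}}\pi_B K=\Join_{B\in\sign{\calT}}\pi_B Q(\inst{D})=Q(\inst{D})$ by the identity. For the converse, assume $\Join_{B\in\sign{\calT}}\pi_B K=Q(\inst{D})$. Since $\sign{K}=\bigcup_{B\in\sign{\calT}}B$, every $t\in K$ satisfies $\pi_B t\in\pi_B K$ for all $B$ and hence lies in the join; thus $K\subseteq Q(\inst{D})$, giving $\pi_B K\subseteq\pi_B Q(\inst{D})$ for each $B$. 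Conversely, for each $B\in\sign{\calT}$ we have $\pi_B Q(\inst{D})=\pi_B\bigl(\Join_{B'\in\sign{\calT}}\pi_{B'}K\bigr)\subseteq\pi_B K$, because $B$ is one of the joined schemas. Hence $\pi_B K=\pi_B Q(\inst{D})$ for every bag, and together with $\sign{K}=\att{Q}$ this is precisely result preservation.

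The only step with genuine content is the inclusion $\Join_{B\in\sign{\calT}}\pi_B Q(\inst{D})\subseteq Q(\inst{D})$, and it is exactly there that the coverage property of the decomposition is invoked; I expect this, rather than any of the surrounding algebra, to be the crux. It is worth noting that connectivity and the fractional edge cover functions $\gamma_t$ play no role, so the statement holds for any hypertree decomposition of $Q$; the main thing to be careful about is keeping the schemas of the projected and joined relations consistent with the notation.
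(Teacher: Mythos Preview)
Your proof is correct and follows essentially the same approach as the paper's. The only organizational difference is that you first isolate the identity $\Join_{B\in\sign{\calT}}\pi_B Q(\inst{D})=Q(\inst{D})$ and then derive the forward direction by substitution, whereas the paper proves the forward direction by two direct inclusions (which, given result preservation, amounts to proving that same identity in place); the backward direction and the use of coverage are handled identically in both.
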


We further say that the relation $K$ is {\em minimal} result-preserving 
with respect to  $(Q,\calT,\inst{D})$
if it is result-preserving with respect to  $(Q,\calT,\inst{D})$, yet this is not the case for any strict subset of it.
We can now define the notion of covers of query results.

\begin{definition}[Covers]\label{def:cores}
Given $(Q,\calT,\inst{D})$, a {\em cover of the query result $Q(\inst{D})$ over the decomposition $\calT$} is a minimal result-preserving relation with respect to $(Q,\calT,\inst{D})$.
\end{definition}

\begin{example}\label{ex:cores}
Figure~\ref{fig:hypergraph} gives the decomposition $\calT$ of a path query and one cover $\mathit{rel}(M)$ of the query result over $\calT$. We give below four relations that are subsets of the query result. The relations $K_1$ and $K_2$ are covers, while the relations $N_1$ and $N_2$ are not covers:

\begin{center}
\begin{tabular}{llll}

\begin{tabular}{c@{\;}c@{\;}c@{\;}c@{\;}}
  \multicolumn{4}{c}{$K_1$} \\
  \toprule
  $A$ & $B$ & $C$ & $D$  \\\midrule 
  $a_1$ & $b_1$ &  $c_1$ & $d_2$  \\
  $a_2$ & $b_1$ & $c_1$  & $d_1$  \\
  $a_1$ & $b_2$ & $c_2$  & $d_2$  \\
  $a_2$ & $b_2$ & $c_2$  & $d_1$  \\
  \bottomrule
  \\
\end{tabular}

&

\begin{tabular}{c@{\;}c@{\;}c@{\;}c@{\;}}
  \multicolumn{4}{c}{$K_2$} \\
  \toprule
  $A$ & $B$ & $C$ & $D$  \\\midrule 
  $a_1$ & $b_1$ & $c_1$ & $d_2$  \\
  $a_2$ & $b_1$ & $c_1$ & $d_1$  \\
  $a_1$ & $b_2$ & $c_2$ & $d_1$  \\
  $a_2$ & $b_2$ & $c_2$ & $d_2$  \\
  \bottomrule
  \\
\end{tabular}

&

\begin{tabular}{c@{\;}c@{\;}c@{\;}c@{\;}}
  \multicolumn{4}{c}{$N_1$} \\
  \toprule
  $A$ & $B$ & $C$ & $D$  \\\midrule 
  $a_1$ & $b_1$ & $c_1$  & $d_1$  \\
  $a_1$ & $b_1$ & $c_1$  & $d_2$  \\
  $a_1$ & $b_2$ & $c_2$  & $d_1$  \\
  \bottomrule
  \\
  \\
\end{tabular}

&

\begin{tabular}{c@{\;}c@{\;}c@{\;}c@{\;}}
  \multicolumn{4}{c}{$N_2$} \\
  \toprule
  $A$ & $B$ & $C$ & $D$  \\\midrule 
  $a_1$ & $b_1$ & $c_1$ & $d_1$  \\
  $a_1$ & $b_1$ & $c_1$ & $d_2$  \\
  $a_2$ & $b_1$ & $c_1$ & $d_1$  \\
  $a_1$ & $b_2$ & $c_2$ & $d_2$  \\
  $a_1$ & $b_2$ & $c_2$ & $d_1$  \\
  \bottomrule
\end{tabular}

\end{tabular}
\end{center}

To check the minimal result-preservation property, we take projections onto the bags $B_1=\{A,B\}$, $B_2=\{B,C\}$, and $B_3=\{C,D\}$. The relation $N_1$ is not result-preserving, because 
$(a_2,b_2)\not\in\pi_{B_1}N_1$. The same argument also applies to relation $N_2$.

Consider now the coarser decomposition $\calT'$ with bags $B'_{1,2}=\{A,B,C\}$ and $B'_3=\{C,D\}$. The covers over $\calT$ discussed above are also covers over $\calT'$. The query result is the only cover over the coarsest decomposition $\calT''$ with only one bag.
\end{example}

\begin{example}\label{ex:core-sizes}
A query result may admit exponentially many covers over the same decomposition. Consider for instance the product query 
$R_1(A) \Join R_2(B)$ with relations $R_1$ and  $R_2$ of size two and respectively $n>1$. The query result has size $2\cdot n$. To compute a cover, we pair the first tuple in $R_1$ with any non-empty and strict subset of the $n$ tuples in $R_2$, while the second tuple in $R_1$ is paired with the remaining tuples in $R_2$. There are $2^n-2$ possible covers. The empty and the full sets are missing from the choice of a subset of $R_2$ as they would mean that one of the two tuples in $R_1$ would have to be paired with tuples in $R_2$ that are already paired with the other tuple in $R_1$ and that would violate the minimality criterion of the covers. All covers have size $n$ and none is contained in another.
\end{example}

We next give a characterization of covers via the hypergraph of the query result.

\begin{proposition}\label{characterize_cores} 
Given $(Q,\calT,\inst{D})$, a relation $K$ is a cover of the query result $Q(\inst{D})$ over $\calT$ if and only if the hypergraph of $Q(\inst{D})$ over 
$\sign{\calT}$ has a minimal edge cover $M$ such that $\relof{M}=K$.
\end{proposition}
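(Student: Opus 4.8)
### Proof plan for Proposition~\ref{characterize_cores}

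The plan is to prove the two directions separately, working throughout with the correspondence between tuples of $Q(\inst{D})$ and edges of the hypergraph $H$ of $Q(\inst{D})$ over $\sign{\calT}$. The crucial translation is this: a set $M$ of edges of $H$ covers a node of $H$ corresponding to a tuple $\pi_B t$ (for $B \in \sign{\calT}$ and $t \in Q(\inst{D})$) if and only if some tuple of $\relof{M}$ projects to $\pi_B t$ on $B$; hence $M$ is an edge cover of $H$ if and only if $\pi_B \relof{M} = \pi_B Q(\inst{D})$ for every $B \in \sign{\calT}$, i.e.\ if and only if $\relof{M}$ is result-preserving with respect to $(Q,\calT,\inst{D})$. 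I would state and verify this equivalence first, since both directions of the proposition rest on it. Here I also need the easy observation that $\relof{M} \subseteq Q(\inst{D})$ always holds (every edge of $H$ is a tuple of $Q(\inst{D})$) and that $|\relof{M}| = |M|$ since distinct edges of $H$ are distinct tuples.

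For the ``if'' direction, suppose $M$ is a minimal edge cover of $H$ with $\relof{M} = K$. By the equivalence above, $K$ is result-preserving. To see $K$ is \emph{minimal} result-preserving, suppose some strict subset $K' \subsetneq K$ were result-preserving. Since the tuple-to-edge map restricted to $M$ is a bijection onto $K$, the preimage $M'$ of $K'$ is a strict subset of $M$ with $\relof{M'} = K'$ result-preserving, hence (again by the equivalence) an edge cover of $H$ — contradicting minimality of $M$. Therefore $K$ is a cover of $Q(\inst{D})$ over $\calT$.

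For the ``only if'' direction, let $K$ be a cover, i.e.\ minimal result-preserving. Since $K \subseteq Q(\inst{D})$ by Proposition~\ref{Q(D)-preserv=join_preserv} (the join of the projections equals $Q(\inst{D})$, and $K$ is a subset of that join of its own projections), each tuple of $K$ is an edge of $H$; let $M$ be the corresponding set of edges, so $\relof{M} = K$. By the equivalence, $M$ is an edge cover of $H$. If $M$ were not minimal, some $M' \subsetneq M$ would be an edge cover, whence $\relof{M'} \subsetneq K$ would be result-preserving, contradicting minimality of $K$. So $M$ is a minimal edge cover with $\relof{M} = K$, as required.

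The main obstacle is getting the foundational equivalence exactly right, in particular the direction ``$M$ edge cover $\Rightarrow$ $\relof{M}$ result-preserving'': one must check that for every $B \in \sign{\calT}$ and every tuple $t \in Q(\inst{D})$, the node of $H$ for $\pi_B t$ is a genuine node of $H$ (it is, by definition of the hypergraph of a relation over $\sign{\calT}$ applied to $R = Q(\inst{D})$ and $\calP = \sign{\calT}$, using that the bags cover $\att{Q} = \sign{Q(\inst{D})}$), and that covering this node forces a tuple of $\relof{M}$ with the right $B$-projection. The reverse inclusion $\pi_B \relof{M} \subseteq \pi_B Q(\inst{D})$ is immediate from $\relof{M} \subseteq Q(\inst{D})$. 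Once this bookkeeping is in place, both directions are short, as the minimality conditions on edge covers and on result-preserving relations match up verbatim under the tuple/edge bijection.
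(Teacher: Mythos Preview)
Your proposal is correct and follows essentially the same approach as the paper's proof: both establish the bijection between edges of $H$ and tuples of $Q(\inst{D})$, then show that edge covers correspond to result-preserving relations and that minimality transfers in both directions via contradiction. Your choice to isolate the equivalence ``$M$ is an edge cover of $H$ iff $\relof{M}$ is result-preserving'' as a standalone lemma before the two directions is a slightly cleaner organization than the paper's, which weaves this correspondence into each direction separately, but the substance is identical.
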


\begin{example}\label{ex:cores-edgecover}
Figure~\ref{fig:hypergraph} gives a minimal edge cover $M$ and the cover $\mathit{rel}(M)$. By removing any edge from $M$, it is not anymore an edge cover. By removing the tuple corresponding to that edge from $\mathit{rel}(M)$, it is not anymore a cover since it is not result preserving. By adding an edge to $M$ or the corresponding tuple to $\mathit{rel}(M)$, they are not anymore minimal.
\end{example}

We now turn our investigation to sizes and first note the following immediate property.

\begin{proposition}\label{prop:core_subset}
Given $(Q,\calT,\inst{D})$, each cover of $Q(\inst{D})$ over $\calT$ is a subset of $Q(\inst{D})$.
\end{proposition}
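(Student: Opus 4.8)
The plan is to prove the slightly stronger statement that \emph{every} result-preserving relation with respect to $(Q,\calT,\inst{D})$ is contained in $Q(\inst{D})$; since a cover is by Definition~\ref{def:cores} in particular result-preserving, the claim then follows at once, and minimality plays no role.

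First I would record the elementary observation that the bags of $\calT$ together exhaust the attributes of $Q$, i.e.\ $\att{\calT}=\att{Q}$. The inclusion $\att{\calT}\subseteq\att{Q}$ is immediate because every bag is a set of nodes of the hypergraph of $Q$, and these nodes are attributes of $Q$. The reverse inclusion follows from the coverage property of hypertree decompositions: each edge $\sign{R}$ of the hypergraph lies inside some bag, and $\bigcup_{R\in\sign{Q}}\sign{R}=\att{Q}$.

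The key step is then to show that for \emph{any} relation $K$ with schema $\att{Q}$ one has $K\subseteq \Join_{B\in\sign{\calT}}\pi_B K$. Fix $t\in K$. For each bag $B\in\sign{\calT}$ the tuple $\pi_B t$ belongs to $\pi_B K$, and the tuples in the family $\setindexel{\pi_B t}{B}{\sign{\calT}}$ are pairwise joinable, since each of them is a restriction of the single tuple $t$ and hence any two of them agree on the attributes they share. Therefore their natural join is a well-defined tuple over $\bigcup_{B\in\sign{\calT}}B=\att{\calT}=\att{Q}$, and by construction this tuple coincides with $t$ on every attribute, so it equals $t$. Hence $t\in\Join_{B\in\sign{\calT}}\pi_B K$.

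Finally, if $K$ is a cover of $Q(\inst{D})$ over $\calT$ then $K$ is result-preserving, so Proposition~\ref{Q(D)-preserv=join_preserv} gives $\Join_{B\in\sign{\calT}}\pi_B K=Q(\inst{D})$; combining this with the previous step yields $K\subseteq Q(\inst{D})$. I do not expect a genuine obstacle: the only point that needs mild care is the join-of-projections step, which is a routine but slightly fiddly manipulation of natural joins (one must check pairwise joinability of the projections and that the bags cover all attributes of $Q$). As an even shorter alternative, I could instead appeal to Proposition~\ref{characterize_cores}: a cover equals $\relof{M}$ for a minimal edge cover $M$ of the hypergraph of $Q(\inst{D})$ over $\sign{\calT}$, and since the edges of that hypergraph are by definition exactly the tuples of $Q(\inst{D})$, we get $\relof{M}\subseteq Q(\inst{D})$ trivially.
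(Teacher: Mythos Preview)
Your proposal is correct and follows essentially the same route as the paper: both take $t\in K$, write it as the join of its bag projections, and invoke Proposition~\ref{Q(D)-preserv=join_preserv} to place $t$ in $Q(\inst{D})$; the only cosmetic difference is that you apply Proposition~\ref{Q(D)-preserv=join_preserv} to $K$ whereas the paper applies it to $Q(\inst{D})$ itself. Your alternative via Proposition~\ref{characterize_cores} is also valid and, as you say, shorter.
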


An implication of Proposition~\ref{prop:core_subset} is that the covers cannot be larger than the query result. However, they can be much more succinct. 
We first give size bounds for covers using the sizes of projections of the query result onto the bags of the underlying decomposition. 
 
\begin{proposition}\label{prop:size_bounds_hypergraph}
Given $(Q,\calT,\inst{D})$, the size of each cover $K$ 
of $Q(\inst{D})$
over $\calT$ satisfies the inequalities
$\takemax{B\in\sign{\calT}}{\sizeof{\pi_BQ(\inst{D})}}$ $\leq$ 
$\sizeof{K}$ $\leq$ 
$\Sigma_{B\in \sign{\calT}}\sizeof{\pi_BQ(\inst{D})}$.
\end{proposition}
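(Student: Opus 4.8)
My plan is to prove the two inequalities separately, leveraging Proposition~\ref{characterize_cores} (a cover $K$ corresponds to a minimal edge cover $M$ of the hypergraph $H$ of $Q(\inst{D})$ over $\sign{\calT}$, with $K = \relof{M}$) and the result-preservation property ($\pi_B K = \pi_B Q(\inst{D})$ for each $B \in \sign{\calT}$).

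For the upper bound, I would argue as follows. Since $K$ is a minimal edge cover of $H$, every edge $e \in M$ is \emph{essential}: there is a node $v_e$ of $H$ covered only by $e$ among the edges of $M$. Recall a node of $H$ is a tuple in $\pi_B Q(\inst{D})$ for some bag $B \in \sign{\calT}$. This gives an injective map from $M$ into the set of nodes of $H$, sending $e \mapsto v_e$ (injectivity: if $v_e = v_{e'}$ then that node is covered by both $e$ and $e'$, contradicting that it is covered only by $e$ unless $e=e'$). The number of nodes of $H$ is at most $\sum_{B\in\sign{\calT}} |\pi_B Q(\inst{D})|$ (it is exactly the number of \emph{distinct} tuples appearing across these projections, which is bounded by the sum of their sizes). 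Hence $|K| = |M| \le \sum_{B\in\sign{\calT}} |\pi_B Q(\inst{D})|$. A small point to handle carefully: distinct bags may share attributes, so a tuple $\pi_B(t)$ and $\pi_{B'}(t)$ over different bags $B, B'$ are different nodes of $H$ provided $B \neq B'$ (the hypergraph construction creates one node per $(S, \text{tuple})$ pair with $S \in \calP$), so the sum—not the union size—is the right bound, and this only makes the bound weaker, so it is safe.

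For the lower bound, fix any bag $B^\star \in \sign{\calT}$ achieving the maximum. By result preservation, $\pi_{B^\star} K = \pi_{B^\star} Q(\inst{D})$, so the projection map $t \mapsto \pi_{B^\star}(t)$ from $K$ onto $\pi_{B^\star} Q(\inst{D})$ is surjective, giving $|K| \ge |\pi_{B^\star} Q(\inst{D})| = \max_{B\in\sign{\calT}} |\pi_B Q(\inst{D})|$. This direction is essentially immediate once result preservation is invoked.

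The main obstacle is the upper bound, specifically pinning down the essential-node argument and the exact relationship between edges/nodes of $H$ and tuples/projections — one must be careful that minimality of the edge cover yields a \emph{distinct} private node for each edge, and that counting nodes of $H$ by $\sum_B |\pi_B Q(\inst{D})|$ rather than by the size of the union of projections is both correct (the hypergraph has separate node sets per bag) and sufficient. Neither direction requires heavy machinery, but the correspondence with Proposition~\ref{characterize_cores} must be used cleanly to translate between the relational statement and the hypergraph combinatorics.
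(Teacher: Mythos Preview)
Your proposal is correct and follows essentially the same approach as the paper's own proof: the lower bound via result preservation (surjectivity of the projection onto any bag), and the upper bound via Proposition~\ref{characterize_cores} together with the ``private node'' argument for minimal edge covers, yielding $|K|=|M|\le |V|=\sum_{B\in\sign{\calT}}|\pi_B Q(\inst{D})|$. Your additional care about injectivity of $e\mapsto v_e$ and about nodes being indexed by (bag, tuple) pairs is accurate and simply makes explicit what the paper leaves implicit.
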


We can now characterize the size of a cover using
the width of the decomposition.

\begin{theorem}\label{theo:general_size_bounds}
Let $Q$ be a natural join query and $\calT$ a decomposition of $Q$.
\begin{enumerate}[(i)] 
\item For any database $\inst{D}$, each cover of the query result $Q(\inst{D})$ over 
$\calT$ has size $\calO(\sizeof{\inst{D}}^{\fhtw{\calT}})$.

\item There are arbitrarily large databases $\inst{D}$ such that 
each cover of the query result $Q(\inst{D})$ over $\calT$
has size $\Omega(\sizeof{\inst{D}}^{\fhtw{\calT}})$.
\end{enumerate}
\end{theorem}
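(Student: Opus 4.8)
The plan is to derive both bounds from Proposition~\ref{prop:size_bounds_hypergraph}, which sandwiches the size of any cover $K$ between $\max_{B\in\sign{\calT}}|\pi_B Q(\inst{D})|$ and $\sum_{B\in\sign{\calT}}|\pi_B Q(\inst{D})|$. Since $\calT$ is fixed, the number of bags is a constant, so both the maximum and the sum are $\Theta\bigl(\max_{B\in\sign{\calT}}|\pi_B Q(\inst{D})|\bigr)$; hence it suffices to show that, for every bag $B$, the projection $\pi_B Q(\inst{D})$ has size $\calO(|\inst{D}|^{\fhtw{\calT}})$, and that there is a family of databases for which some bag projection has size $\Omega(|\inst{D}|^{\fhtw{\calT}})$.

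For part (i), fix a bag $B=\chi(t)$ of $\calT$. By the definition of a fractional hypertree decomposition, $\gamma_t$ is a fractional edge cover of the subhypergraph of the query hypergraph $H$ induced on $B$, of weight $\weight{\gamma_t}\le \fhtw{\calT}$. Now $\pi_B Q(\inst{D})$ is contained in the result of the subquery $Q_B$ consisting of the relations of $Q$ restricted (projected) to the attributes in $B$ — more precisely $\pi_B Q(\inst{D}) \subseteq \Join_{R:\,\sign{R}\cap B\ne\emptyset}\pi_{\sign{R}\cap B} R$, and after applying Proposition~\ref{prop:rewriting2} we may even assume the database is globally consistent so the relations in question have size $\calO(|\inst{D}|^{\fhtw{\calT}})$ already. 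Applying the AGM bound~\cite{AtseriasGM13} to this subquery with the fractional edge cover $\gamma_t$ gives $|\pi_B Q(\inst{D})|\le \prod_{R} |R|^{\gamma_t(\sign{R})} \le |\inst{D}|^{\weight{\gamma_t}}\le |\inst{D}|^{\fhtw{\calT}}$. Taking the max over the constantly many bags and invoking the upper inequality of Proposition~\ref{prop:size_bounds_hypergraph} yields $|K|=\calO(|\inst{D}|^{\fhtw{\calT}})$.

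For part (ii), I would pick a bag $B^\star$ attaining $\weight{\gamma_{B^\star}}=\fhtw{\calT}$ and invoke the matching-lower-bound direction of the AGM inequality: there is a worst-case family of instances for the subquery $Q_{B^\star}$ (on the induced subhypergraph) whose join output has size $\Theta(N^{\fhtw{\calT}})$ where $N$ is the size of each relation. One then extends these instances to a database $\inst{D}$ for all of $Q$ in a way that (a) keeps $|\inst{D}|=\Theta(N)$, (b) is globally consistent with respect to $Q$ so that $\pi_{B^\star}Q(\inst{D})$ equals the full subquery output $Q_{B^\star}$ of size $\Theta(N^{\fhtw{\calT}})$, and (c) makes these instances arbitrarily large by scaling $N$. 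The standard trick for (b) is to fill in the relations living outside $B^\star$ with a single constant tuple on the relevant attributes (or a full Cartesian-product-style padding on the attributes not in $B^\star$), so that every tuple of the $B^\star$-subquery survives the full join; one must check that this padding does not blow up $|\inst{D}|$ beyond $\Theta(N)$ and does not accidentally collapse the projection. Then the lower inequality of Proposition~\ref{prop:size_bounds_hypergraph}, $|K|\ge |\pi_{B^\star}Q(\inst{D})|$, gives $|K|=\Omega(|\inst{D}|^{\fhtw{\calT}})$.

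The main obstacle is part (ii): producing the worst-case family for a \emph{single bag's} induced subhypergraph and then padding it to a consistent database for the whole query without inflating $|\inst{D}|$. The subtlety is that the AGM lower bound is usually stated for a whole query, whereas here we need it for the subhypergraph induced on $B^\star$ with the specific optimal cover $\gamma_{B^\star}$; I would rely on the fact (used implicitly when defining $\fhtw$) that $\weight{\gamma_{B^\star}}=\rho^\star$ of that induced subhypergraph, so the matching instances from~\cite{AtseriasGM13} apply verbatim, and then argue that consistency-preserving padding of the remaining relations is always possible because each such relation shares only attributes already present in $B^\star$ along the connected subtree structure guaranteed by the connectivity property of $\calT$.
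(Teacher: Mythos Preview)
Your proposal is correct and follows essentially the same route as the paper: both parts are reduced to Proposition~\ref{prop:size_bounds_hypergraph}, part~(i) bounds each bag projection via $\pi_B Q(\inst{D})\subseteq Q_B(\inst{D}_B)$ and the AGM upper bound on the induced subhypergraph, and part~(ii) picks a maximum-weight bag $B^\star$, invokes the AGM lower bound on $Q_{B^\star}$, and extends the instance to a database for all of $Q$ without inflating its size.

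The one noteworthy difference is in part~(ii): where you sketch the extension (``padding'' attributes outside $B^\star$ with constants and relying on connectivity of $\calT$), the paper simply invokes an existing lemma (Lemma~7.18 of~\cite{FDB:TODS:2015}) guaranteeing that from any $\inst{D}'$ witnessing $|Q_{B^\star}(\inst{D}')|=\Omega(|\inst{D}'|^{\fhtw{\calT}})$ one can build $\inst{D}$ with $|\inst{D}|=\calO(|\inst{D}'|)$ and $|\pi_{B^\star} Q(\inst{D})|=\Omega(|Q_{B^\star}(\inst{D}')|)$. Your sketch is the right idea and you correctly flag the delicate point, namely ensuring $\pi_{B^\star}Q(\inst{D})$ does not collapse below $Q_{B^\star}(\inst{D}')$ after the extension; the cited lemma packages exactly this construction, so in a full write-up you would either cite it or spell out the padding argument carefully for relations whose schemas straddle $B^\star$. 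The aside in part~(i) about invoking Proposition~\ref{prop:rewriting2} to assume global consistency is unnecessary here, since the containment $\pi_B Q(\inst{D})\subseteq Q_B(\inst{D}_B)$ and the AGM bound already suffice without any consistency assumption.
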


The size gaps between query results and their covers can be arbitrarily large. For any join query $Q$ and database $\inst{D}$, it holds that $\sizeof{Q(\inst{D})}= \calO(\sizeof{\inst{D}}^{\rho^*(Q)})$ and there are arbitrarily large databases $\inst{D}$ for which $\sizeof{Q(\inst{D})}= \Omega(\sizeof{\inst{D}}^{\rho^*(Q)})$~\cite{AtseriasGM13}. For acyclic queries, the fractional edge cover number $\rho^*$ can be as large as $|Q|$, while the fractional hypertree width is one. Section~\ref{sec:core-plans} shows that the same gap also holds for time complexity.

\begin{example}
We continue Example~\ref{ex:cores}. The decomposition $\calT$ has width one, which is minimal. The covers over $\calT$, such as $K_1$ and $K_2$, have sizes upper bounded by the input database size. The minimum size of a cover over $\calT$ is the maximum size of a relation used in the query (assuming the relations are globally consistent). In contrast, there are arbitrarily large databases of size $N$ for which the query result has size $\Omega(N^2)$.
\end{example}

Proposition~\ref{characterize_cores} and Theorem~\ref{theo:general_size_bounds} give alternative equivalent characterizations of the size of a cover of a query result. The former gives it as the size of a {\em minimal edge cover} of the hypergraph of the query result over the attribute sets given by the bags of a decomposition $\calT$, while the latter states it using the fractional hypertree width of $\calT$ or equivalently the {\em maximum fractional edge cover} number over all the bags of $\calT$. Most notably, whereas the former is an {\em integral} number, the latter is a {\em fractional} number.

This size gap between query results and their covers is precisely the same as for query results and their factorized representations called d-representations~\cite{FDB:TODS:2015}. In this sense, covers can be seen as relational encodings of factorized representations of query results. We can easily translate covers into factorized representations. Appendix \ref{sec:intro_d-reps} gives a brief introduction to d-representations and a translation example.

\begin{proposition}\label{prop:core_to_d_rep}
Given $(Q,\calT,\inst{D})$, each cover $K$ of the query result $Q(\inst{D})$ over $\calT$ can be translated into a d-representation of $Q(\inst{D})$ of size $\calO(\sizeof{K})$ and in time $\widetilde{\calO}(\sizeof{K})$.
\end{proposition}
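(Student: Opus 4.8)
The plan is to exhibit a direct structural translation from a cover $K$ to a d-representation of $Q(\inst{D})$ whose structure is dictated by the d-tree obtained from the decomposition $\calT$, and to argue that the translation is size-preserving up to a constant and runs in linearithmic time in $|K|$. First I would recall (from Appendix~\ref{sec:intro_d-reps}) that a decomposition $\calT$ of $Q$ can be rewritten as a d-tree: a variable order over $\att{Q}$ in which the variables inside each bag $B\in\sign{\calT}$ lie on a single root-to-node path, and where each variable $A$ is annotated with the set of its ancestors on which $A$ functionally depends (its \emph{key}). By Proposition~\ref{prop:rewriting2} we may assume $Q$ is acyclic with $\calT$ corresponding to a join tree and $\inst{D}$ globally consistent; by Proposition~\ref{Q(D)-preserv=join_preserv}, $K$ projects onto each bag exactly as $Q(\inst{D})$ does, and $\Join_{B\in\sign{\calT}}\pi_B K = Q(\inst{D})$.

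Next I would build the d-representation top-down along the d-tree. At the root variable $A$, create a union node over the distinct $A$-values appearing in $K$ (equivalently in $\pi_{\{A\}}Q(\inst{D})$); under each such value place a product node, one factor per child branch of the d-tree; recurse inside each branch, at each variable $A'$ restricting attention to the sub-relation of $K$ agreeing with the values already fixed along the path to $A'$ and, crucially, caching the recursive result under the projection of that partial tuple onto $\key(A')$ rather than onto the full ancestor set. Because $K$ is result-preserving, the set of values offered at each union node and the set of combinations realized at each product node coincide exactly with those in $Q(\inst{D})$, so the resulting circuit evaluates to $Q(\inst{D})$; this is the correctness half, and it follows the same argument used for d-representations of join results in \cite{FDB:TODS:2015}. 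For the size bound, I would observe that every node of the d-representation corresponds to a distinct partial tuple of $K$ on some prefix-of-a-bag attribute set, so the number of nodes is $\calO(\sum_{B\in\sign{\calT}}|\pi_B K|) = \calO(\sum_{B}|\pi_B Q(\inst{D})|) = \calO(|K|)$ by Proposition~\ref{prop:size_bounds_hypergraph}; the key-based caching is exactly what keeps the count linear in $|K|$ rather than blowing up to $|Q(\inst{D})|$. For the time bound, sorting $K$ once by a global order compatible with the d-tree costs $\widetilde{\calO}(|K|)$, after which a single pass constructs all union and product nodes, with the caching implemented by hash maps keyed on $\key$-projections, giving total time $\widetilde{\calO}(|K|)$.

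The main obstacle I expect is the caching/coalescing step that makes the representation a genuine d-representation (a DAG with reuse across sibling subtrees) rather than merely a tree-shaped f-representation: one has to show that two partial tuples of $K$ that agree on $\key(A')$ but differ on other ancestors must induce \emph{identical} sub-representations rooted at $A'$, so that merging them is sound and does not lose any tuple of $Q(\inst{D})$. This is where result-preservation and the functional-dependency semantics of the d-tree keys are both needed, and it is the step most in need of care; the size and time accounting are then routine given Proposition~\ref{prop:size_bounds_hypergraph} and a compatible sort. I would confine the detailed definition of d-trees, keys, and the merging argument to the appendix, citing \cite{FDB:TODS:2015} for the underlying machinery, and keep the body-level proof to the three bullet-sized points above: construct along the d-tree, correctness by result-preservation, linear size by projection-size bounds, linearithmic time by one sort plus one pass.
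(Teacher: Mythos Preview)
Your proposal is correct and follows essentially the same route as the paper: convert $\calT$ to a d-tree, then populate the d-representation from $K$ by projecting each tuple onto $(\key(A),A)$ for every attribute $A$, with correctness coming from result-preservation and Proposition~\ref{Q(D)-preserv=join_preserv}.

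The one notable difference is presentational, and it dissolves the obstacle you flag. The paper works directly with the \emph{multi-map encoding} of d-representations (one map $m_A:\prod_{X\in\key(A)}\dom(X)\to\dom(A)$ per attribute) rather than with the parse-graph encoding you describe. It simply scans every tuple $t\in K$ and, for each attribute $A$, inserts the assignment $\pi_{\key(A)}t\mapsto\pi_A t$ into $m_A$. The resulting relation $R_A$ (the listing of $m_A$) is then exactly $\pi_{B_A}K$ where $B_A=\{A\}\cup\key(A)$; since each $B_A$ is contained in some bag of $\calT$, result-preservation gives $\pi_{B_A}K=\pi_{B_A}Q(\inst{D})$, and the d-representation semantics $\Join_{B_A}R_A$ together with Proposition~\ref{Q(D)-preserv=join_preserv} yields $Q(\inst{D})$. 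This sidesteps entirely your ``main obstacle'' of proving that two partial tuples agreeing on $\key(A')$ induce identical sub-DAGs: in the multi-map view there is nothing to merge, and soundness of sharing is baked into the encoding. The size bound is also more direct than your appeal to Proposition~\ref{prop:size_bounds_hypergraph}: each of the $|K|$ tuples contributes at most one entry per multi-map, and the number of multi-maps is the number of attributes (a constant in data complexity), so the total size is $\calO(|K|)$ immediately.
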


The above translation allows us to extend the applicability of covers to known  workloads over factorized representations, such as in-database optimization problems~\cite{NNOS2017} and in particular learning regression models~\cite{Olteanu:FactorizedDB:2016:SIGREC}. Nevertheless, it is practically desirable to process such workloads directly on covers, since this would avoid the indirection via factorized representations that comes with extra space cost and non-relational data representation. Aggregates, which are at the core of such workloads, can be computed directly on covers by joint scans of the projections of the cover onto the bags of the decomposition; alternatively, they can be computed by expressing any cover as the natural join of its bag projections and then pushing the aggregates past the join.

\begin{example}\label{aggregates_on_covers}
We consider the query $Q = R(A,B) \Join S(B,C)$ and its decomposition 
$\calT$ with bags $\{A,B\}$ and $\{B,C\}$. To compute aggregates over the 
join result $Q(\inst{D})$, we can use any cover $K$ of 
$Q(\inst{D})$ over $\calT$. The expression for counting the number of result tuples is $\sum_{b\in\text{dom}(B)}\sum_{a\in\text{dom}(A)}\sum_{c\in\text{dom}(C)} {\bf 1}_{R(a,b)}\cdot {\bf 1}_{S(b,c)}$, where $1_E$ is the Kronecker delta that is evaluated to {\bf 1} if the event $E$ is satisfied and {\bf 0} otherwise. We can compute it in one scan over $K$ if $K$ is sorted on $(B,A,C)$ or 
$(B,C,A)$. For each $B$-value $b$, we multiply the distinct numbers of $A$-values and of $C$-values paired with $b$ in $K$, and we sum up these products over all $B$-values. 
We can rewrite this expression as follows: 
$\sum_{b\in\text{dom}(B)}(\sum_{a\in\text{dom}(A)}{\bf 1}_{(a,b)\in\pi_{\{A,B\}} K})(\sum_{c\in\text{dom}(C)} {\bf 1}_{(b,c)\in\pi_{\{B,C\}} K})$. This expression only uses the pairs  $(a,b)$ and $(b,c)$ in $K$. The pairs $(a,c)$, which make the difference among covers and are the culprits for the explosion in the size of the query result, are not needed. 
\end{example}

Despite their succinctness over the explicit listing of tuples in a query result, any cover of the query result can be used to enumerate the result tuples with constant delay and extra space (data complexity) following linear-time pre-computation. In particular, the delay and the space are linear in the number of attributes of the query result which is as good as enumerating directly from the result.
This complexity follows from Proposition \ref{prop:core_to_d_rep} and the enumeration for factorized representations~\cite{FDB:TODS:2015} with constant delay and extra space. 

\begin{corollary}[Proposition \ref{prop:core_to_d_rep}, Theorem 4.11~\cite{FDB:TODS:2015}]\label{theo:const-delay-enum} 
Given $(Q,\calT,\inst{D})$, the tuples in the query result $Q(\inst{D})$ can be enumerated from any cover $K$ of $Q(\inst{D})$ over $\calT$ 
 with $\widetilde{\calO}(|K|)$ pre-computation time and
$\calO(1)$ delay and extra space.
\end{corollary}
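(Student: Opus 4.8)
The plan is to reduce the enumeration-from-cover problem to enumeration-from-d-representation via the translation already established in Proposition~\ref{prop:core_to_d_rep}, and then invoke the known constant-delay enumeration result for d-representations. First I would apply Proposition~\ref{prop:core_to_d_rep} to the given triple $(Q,\calT,\inst{D})$ and cover $K$: this yields, in time $\widetilde{\calO}(\sizeof{K})$, a d-representation $D$ of $Q(\inst{D})$ whose size is $\calO(\sizeof{K})$. Since $\sizeof{K} = \calO(\sizeof{\inst{D}}^{\fhtw{\calT}})$ by Theorem~\ref{theo:general_size_bounds}, all the bounds remain expressed purely in terms of $\sizeof{K}$, which is what the corollary asks for.

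Next I would invoke Theorem~4.11 of~\cite{FDB:TODS:2015}, which states that the tuples encoded by a d-representation $D$ can be enumerated with $\calO(1)$ delay and extra space after a pre-computation step that is linear in $\sizeof{D}$ (more precisely, linear up to the logarithmic factor hidden by $\widetilde{\calO}$, needed to build the index structures over the value sets stored at the d-representation nodes). Composing the two pre-computation phases --- first building $D$ from $K$ in time $\widetilde{\calO}(\sizeof{K})$, then building the enumeration data structure over $D$ in time $\widetilde{\calO}(\sizeof{D}) = \widetilde{\calO}(\sizeof{K})$ --- gives total pre-computation time $\widetilde{\calO}(\sizeof{K})$. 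The enumeration phase inherits the $\calO(1)$ delay and extra space from the d-representation enumeration verbatim, since the d-representation $D$ represents exactly $Q(\inst{D})$ and no further transformation is needed during enumeration.

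The only subtlety worth spelling out is correctness of the set being enumerated: we must check that the d-representation produced by Proposition~\ref{prop:core_to_d_rep} represents precisely $Q(\inst{D})$ and not merely the cover $K$ or some superset. This is guaranteed by the statement of Proposition~\ref{prop:core_to_d_rep} itself (it produces ``a d-representation of $Q(\inst{D})$''), which in turn relies on result-preservation of $K$ with respect to $(Q,\calT,\inst{D})$ (Definition~\ref{def:cores}) together with Proposition~\ref{Q(D)-preserv=join_preserv}: the natural join of the bag projections of $K$ equals $Q(\inst{D})$, and the d-representation is built over a variable order (d-tree) derived from $\calT$ so that it factorizes exactly this join. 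Hence the enumeration output coincides with $Q(\inst{D})$.

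I do not expect a genuine obstacle here: this corollary is a straightforward composition of two previously established results, and its proof is essentially a one-line appeal to Proposition~\ref{prop:core_to_d_rep} and Theorem~4.11 of~\cite{FDB:TODS:2015}, with the bookkeeping observation that the two linearithmic pre-computation phases compose into one. The closest thing to a delicate point is ensuring the $\widetilde{\calO}$ bookkeeping is consistent --- that the log factor incurred when sorting/indexing the cover to build $D$, and the log factor incurred when building the enumeration index over $D$, are both absorbed into the single $\log\sizeof{\inst{D}}$ that $\widetilde{\calO}$ hides --- which holds because $\sizeof{K} \le \sizeof{Q(\inst{D})} \le \sizeof{\inst{D}}^{\rho^*(Q)}$ and $\rho^*(Q)$ is a constant in data complexity, so $\log\sizeof{K} = \Theta(\log\sizeof{\inst{D}})$.
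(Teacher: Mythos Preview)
Your proposal is correct and follows exactly the paper's approach: the corollary is stated as an immediate consequence of Proposition~\ref{prop:core_to_d_rep} and Theorem~4.11 of~\cite{FDB:TODS:2015}, and the paper's justification is the single sentence preceding the corollary (``This complexity follows from Proposition~\ref{prop:core_to_d_rep} and the enumeration for factorized representations~\cite{FDB:TODS:2015} with constant delay and extra space''). Your additional remarks on correctness of the represented relation and on the $\widetilde{\calO}$ bookkeeping are sound elaborations, not departures.
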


An alternative way to achieve constant-delay enumeration with 
$\widetilde{\calO}(|K|)$ pre-computation is by noting that the acyclic join queries considered in this paper are free-connex and thus allow for enumeration with constant delay and $\softO(|\inst{D}|)$ pre-computation~\cite{Durand07}. An acyclic conjunctive query is called free-connex if its extension by a new relation symbol covering all 
attributes of the result remains acyclic~\cite{Segoufin:SIGREC:2015}. Moreover, given a cover $K$ over a decomposition $\calT$, the natural join of the projections of $K$ onto the bags of $\calT$ is an acyclic query that computes the original query result (Proposition \ref{Q(D)-preserv=join_preserv}).


 \section{Computing Covers for Join Queries using Cover-Join Plans}
 \label{sec:core-plans}
 
Given an arbitrary join query and database, we can compute covers using a {\em monolithic} algorithm akin to known algorithms for computing factorized representations of query results~\cite{Olteanu:FactorizedDB:2016:SIGREC}. However, is it possible to compute covers in a {\em compositional} way, by computing covers for one join at a time? In this section, we answer this question in the affirmative for acyclic natural join queries $Q$ and globally consistent databases $\inst{D}$ with respect to $Q$.

For a triple $(Q,\calJ,\inst{D})$, where $Q$ is an acyclic natural join query, $\calJ$ is a join tree of $Q$, and $\inst{D}$ is a database  globally consistent with respect to $Q$, we use so-called cover-join plans to compute covers of the query result $Q(\inst{D})$ over the decomposition corresponding to the join tree $\calJ$. 
Such plans follow the structure of the join tree $\calJ$ and use a new binary join operator called cover-join. The cover-join of two relations yields a cover of their natural join. This approach is in the spirit of standard relational query evaluation.
It is compositional in the sense that to compute a cover of the query result, it suffices to repeatedly compute a cover of the join of two relations. This is practical since it can be supported by existing query engines extended with the cover-join operator. We also show that, due to the binary nature of the cover-join operator, the cover-join plans cannot recover all possible covers of the query result. Furthermore, different plans may lead to different covers. Plans that do not follow the structure of a join tree may be unsound as they do not necessarily construct covers.

To compute covers for an arbitrary join query and database, we proceed in two stages.
We first materialize the bags of a decomposition of the query so as to reduce it to an acyclic query $Q$ over an extended database $\inst{D}$ that is now globally consistent with respect to $Q$ (Proposition~\ref{prop:rewriting2}). We then use a cover-join plan to compute covers of $Q(\inst{D})$. The first step has a non-trivial time complexity overhead, whereas the second step is linearithmic. Overall, this strategy is 
worst-case optimal for computing covers for arbitrary join queries and databases.

\subsection{The Cover-Join Operator}

The building block of our approach to computing covers is the binary cover-join operator.

\begin{definition}[Cover-Join]\label{def:core-join}
The cover-join of two relations $R_1$ and $R_2$, denoted by $R_1\corejoin R_2$, computes a cover of their join result over the decomposition with bags $\sign{R_1}$ and $\sign{R_2}$.
\end{definition}

Following the alternative characterization of covers of a query result by minimal edge covers in the hypergraph of the query result (Proposition~\ref{characterize_cores}), the cover-join defines 
the relation $\relof{M}$ of a minimal edge cover $M$ of the hypergraph $H$ of the result of the join $R_1 \Join R_2$ over the 
attribute sets $\sign{R_1}$ and $\sign{R_2}$. The hypergraph $H$ is bipartite and consists of disjoint complete bipartite subgraphs. Since a cover is a minimal edge cover, it corresponds to a bipartite subgraph with the same number of nodes but a subset of the edges, where all paths can only have one or two edges. A cover cannot have unconnected nodes, since it would not be an edge cover. A path of three (or more) edges violates the minimality of the edge cover: Such a path $a_1-b_1-a_2-b_2$ in a bipartite graph covers the four nodes, yet a minimal cover would only have the two edges $a_1-b_1$ and $a_2-b_2$.

We can compute a cover of a join of two relations 
$R_1$ and $R_2$ in time $\softO(|R_1| + |R_2|)$, since it amounts to computing a minimal edge cover in a collection of disjoint complete bipartite graphs that encode the join result. The smallest size of a cover is given by the edge cover number of the bipartite graph representing the join result, which is the maximum of the sizes of the two sets of nodes in the graph~\cite{lawler:matroid}. 
The largest size can be achieved in case one of the two node sets has size one, in which case this is paired with all nodes in the second set. In case both sets have more than one node, the largest size is achieved when we pair one node from one of the two node sets with all but one node in the second set and then the remaining node in the second set with all but the already used node in the first set.

For the analysis in this paper, we assume that our cover-join algorithm may return any cover of the natural join of two relations. In practice, however, it makes sense to compute a cover of minimum size. We choose this cover as follows: For each complete bipartite hypergraph in the join result with node sets $V_1$ and $V_2$  such that $|V_1|\leq|V_2|$, we choose a minimum edge cover by pairing each node in $V_1$ with one distinct node in $V_2$ and all remaining nodes in $V_2$ with one node in $V_1$.

\begin{proposition}\label{prop:comp_core_two_rel}
Given two consistent relations $R_1$ and $R_2$, the cover-join computes a cover $K$ of their join result over the decomposition with bags $\sign{R_1}$ and $\sign{R_2}$ in time $\widetilde{\calO}(|R_1|+|R_2|)$ and with size $\max\{|R_1|, |R_2|\}\leq |K|\leq |R_1|+|R_2|$.
\end{proposition}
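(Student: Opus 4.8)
The plan is to establish the size bounds first and then the time bound, relying on the bipartite-hypergraph characterization of the cover-join that was already set up in the text. Recall that the hypergraph $H$ of $R_1 \Join R_2$ over the attribute sets $\sign{R_1}$ and $\sign{R_2}$ is bipartite: its two sides are $V_1 = \pi_{\sign{R_1}}(R_1 \Join R_2)$ and $V_2 = \pi_{\sign{R_2}}(R_1 \Join R_2)$, and since $R_1$ and $R_2$ are consistent every tuple of $R_1$ (resp.\ $R_2$) survives the projection, so $|V_1| = |R_1|$ and $|V_2| = |R_2|$. Moreover $H$ decomposes into disjoint complete bipartite subgraphs, one for each value of the common attributes $\sign{R_1}\cap\sign{R_2}$. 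By Proposition~\ref{characterize_cores}, a cover $K$ of $R_1 \Join R_2$ over this decomposition is exactly $\relof{M}$ for a minimal edge cover $M$ of $H$, and since distinct edges of $H$ correspond to distinct join tuples, $|K| = |M|$.

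For the size bounds I would argue component-wise. Fix one complete bipartite component with node sets $W_1 \subseteq V_1$ and $W_2 \subseteq V_2$; without loss of generality $1 \le |W_1| \le |W_2|$. A minimal edge cover of a connected bipartite graph has size at least $\max\{|W_1|,|W_2|\} = |W_2|$ (every node must be covered and each edge covers at most one node of the larger side), and the analysis already sketched in the text shows a minimal edge cover of size at most $|W_1| + |W_2|$ can always be obtained (pair one node of one side with all but one node of the other, then handle the leftover), while no minimal edge cover can exceed this because once all $|W_1| + |W_2|$ nodes are covered, any further edge is redundant, contradicting minimality; more carefully, a minimal edge cover of a connected graph on $p$ vertices has at most $p-1$ edges since its edge set, viewed as a subgraph, is a forest (a cycle or even a path of length $\ge 3$ would contain a removable edge). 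Summing $\max\{|W_1|,|W_2|\}$ over components gives at least $\max\{|R_1|,|R_2|\}$ for the lower bound (the maximum of the two column sums dominates the sum of the per-component maxima in the worst alignment — here one uses that every node of $V_1$ and every node of $V_2$ lies in exactly one component, so $\sum \max\{|W_1|,|W_2|\} \ge \max\{\sum|W_1|, \sum|W_2|\} = \max\{|R_1|,|R_2|\}$), and summing $|W_1|+|W_2|$ over components gives exactly $|R_1| + |R_2|$ for the upper bound.

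For the time bound, the algorithm is: sort both relations on the common attributes $\sign{R_1}\cap\sign{R_2}$ (cost $\widetilde{\calO}(|R_1|+|R_2|)$), then scan in lockstep to identify the matching blocks forming the complete bipartite components, and within each component of sizes $q_1, q_2$ emit a minimal edge cover using $\calO(q_1 + q_2)$ tuple constructions. The total number of output tuples is $\calO(|R_1|+|R_2|)$ by the size bound just proved, and each is produced in constant time, so the whole computation runs in $\widetilde{\calO}(|R_1|+|R_2|)$; the logarithmic factor comes only from sorting (or from hashing/indexing, depending on the model), which is why we write $\widetilde{\calO}$ rather than $\calO$.

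\textbf{Main obstacle.} The delicate point is the lower bound $|K| \ge \max\{|R_1|,|R_2|\}$: one must be careful that consistency of $R_1$ and $R_2$ is genuinely used (so that $|V_i| = |R_i|$, with no dangling tuples silently disappearing under projection) and that the per-component bound $|M_{\text{comp}}| \ge \max\{|W_1|,|W_2|\}$ aggregates correctly across components — the inequality $\sum_c \max\{|W_1^c|,|W_2^c|\} \ge \max\{\sum_c|W_1^c|,\sum_c|W_2^c|\}$ is what makes it work, and it should be stated explicitly rather than waved through. The upper bound and the running time are essentially bookkeeping once the bipartite-component structure is in hand.
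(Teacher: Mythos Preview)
Your proposal is correct and follows essentially the same approach as the paper: the bipartite-component structure, the sort-then-scan construction, and the $\widetilde{\calO}$ time analysis all match. The only difference is that for the size bounds the paper simply invokes Proposition~\ref{prop:size_bounds_hypergraph} together with consistency (which gives $\pi_{\sign{R_i}}(R_1\Join R_2)=R_i$, hence $\max\{|R_1|,|R_2|\}\le|K|\le|R_1|+|R_2|$ immediately), whereas you re-derive these bounds by a direct per-component edge-cover argument; your route is a bit more work but arrives at the same place, and your explicit identification of the inequality $\sum_c\max\{|W_1^c|,|W_2^c|\}\ge\max\{\sum_c|W_1^c|,\sum_c|W_2^c|\}$ is exactly what underlies the paper's one-line citation.
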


\begin{example}
Consider again the product $R_1(A)\Join R_2(B)$ in Example~\ref{ex:core-sizes}, where $R_1=[2]$ and $R_2=[n]$ with $n > 1$. Examples of covers of size $n$ over the decomposition $\calT$ with bags $\{A\}$ and $\{B\}$ are: $\{(1,i)\mid i\in[n]-\{k\}\}\cup\{(2,k)\}$ for any $k\in[n]$; $\{(1,i)\mid i\in[k]\}\cup\{(2,j+k)\mid j\in[n-k]\}$ for any $k\in[n-1]$.
If $R_1=[m]$ with $m>n$, then examples of covers over $\calT$ 
of minimum size $m$
are: 
$\{(i,i)\mid i\in[k-1]\}\cup
\{(k-1+i , k + i) \mid i \in[n-k]\} \cup
\{(n-1+i , k) \mid i \in[m-n+1]\}$ 
for any $k\in[n]$. A cover over $\calT$ of maximal size $n+m-2$ is: $\{(1,i)\mid i\in[n-1]\}\cup\{(j+1,n)\mid j\in[m-1]\}$. Below are depictions of the complete bipartite graph corresponding to the query result for $n=4$ and $m=5$, where the edges in a minimal edge cover are solid lines and all other edges are dotted. The left minimal edge cover corresponds to a cover over $\calT$ of minimum size $m=5$, while the right minimal edge cover corresponds to a 
cover over $\calT$ of maximum size $n+m-2=7$.

\begin{center}
\scalebox{0.8}{
\begin{tikzpicture}


\begin{scope}[rotate=-90]
\node (1)  at(-1, -4.7) {$\bullet$};
\node      [above  of =1, node distance=.25cm] {$1$};

\node (2) [right of =1, node distance=1cm] {$\bullet$};
\node     [above  of =2, node distance=.25cm] {$2$};

\node (3) [right of =2, node distance=1cm] {$\bullet$};
\node     [above  of =3, node distance=.25cm] {$3$};

\node (4) [right of =3, node distance=1cm] {$\bullet$};
\node     [above  of =4, node distance=.25cm] {$4$};


\node (11) [below of =1, node distance=2cm] {$\bullet$};
\node      [below of =11, node distance=.25cm] {$1$};

\node (12) [right of =11, node distance=1cm] {$\bullet$};
\node      [below of =12, node distance=.25cm] {$2$};

\node (13) [right of =12, node distance=1cm] {$\bullet$};
\node      [below of =13, node distance=.25cm] {$3$};

\node (14) [right of =13, node distance=1cm] {$\bullet$};
\node      [below of =14, node distance=.25cm] {$4$};

\node (15) [right of =14, node distance=1cm] {$\bullet$};
\node      [below of =15, node distance=.25cm] {$5$};

\draw[-] (1) -- (11);
\draw[-] (2) -- (12);
\draw[-] (3) -- (13);
\draw[-] (4) -- (14);
\draw[-] (4) -- (15);

\draw[dotted,red] (1) -- (12);
\draw[dotted,red] (1) -- (13);
\draw[dotted,red] (1) -- (14);
\draw[dotted,red] (1) -- (15);
\draw[dotted,red] (2) -- (11);
\draw[dotted,red] (2) -- (13);
\draw[dotted,red] (2) -- (14);
\draw[dotted,red] (2) -- (15);
\draw[dotted,red] (3) -- (11);
\draw[dotted,red] (3) -- (12);
\draw[dotted,red] (3) -- (14);
\draw[dotted,red] (3) -- (15);
\draw[dotted,red] (4) -- (11);
\draw[dotted,red] (4) -- (12);
\draw[dotted,red] (4) -- (13);
\end{scope}

\begin{scope}
\node (21) [right of =4, node distance=6cm] {$\bullet$};
\node      [above  of =21, node distance=.25cm] {$1$};

\node (22) [right of =21, node distance=1cm] {$\bullet$};
\node      [above  of =22, node distance=.25cm] {$2$};

\node (23) [right of =22, node distance=1cm] {$\bullet$};
\node      [above  of =23, node distance=.25cm] {$3$};

\node (24) [right of =23, node distance=1cm] {$\bullet$};
\node      [above  of =24, node distance=.25cm] {$4$};


\node (31) [below of =21, node distance=2cm] {$\bullet$};
\node      [below of =31, node distance=.25cm] {$1$};

\node (32) [right of =31, node distance=1cm] {$\bullet$};
\node      [below of =32, node distance=.25cm] {$2$};

\node (33) [right of =32, node distance=1cm] {$\bullet$};
\node      [below of =33, node distance=.25cm] {$3$};

\node (34) [right of =33, node distance=1cm] {$\bullet$};
\node      [below of =34, node distance=.25cm] {$4$};

\node (35) [right of =34, node distance=1cm] {$\bullet$};
\node      [below of =35, node distance=.25cm] {$5$};

\draw[-] (21) -- (31);
\draw[-] (21) -- (32);
\draw[-] (21) -- (33);
\draw[-] (21) -- (34);
\draw[-] (22) -- (35);
\draw[-] (23) -- (35);
\draw[-] (24) -- (35);

\draw[dotted,red] (21) -- (35);
\draw[dotted,red] (22) -- (31);
\draw[dotted,red] (22) -- (32);
\draw[dotted,red] (22) -- (33);
\draw[dotted,red] (22) -- (34);
\draw[dotted,red] (23) -- (31);
\draw[dotted,red] (23) -- (32);
\draw[dotted,red] (23) -- (33);
\draw[dotted,red] (23) -- (34);
\draw[dotted,red] (24) -- (31);
\draw[dotted,red] (24) -- (32);
\draw[dotted,red] (24) -- (33);
\draw[dotted,red] (24) -- (34);
\end{scope}

\end{tikzpicture}
}
\end{center}

\end{example}

\subsection{Cover-join Plans}

We now compose cover-join operators into so-called cover-join plans to compute covers for acyclic natural join queries. Before we define such plans, we need to introduce some notation.

For a join tree $\calJ$ of a query $Q$, we write $\calJ = \calJ_1\circ \calJ_2$ if $\calJ$ can be split into two non-empty subtrees $\calJ_1$ and $\calJ_2$ that are connected by a single edge in $\calJ$. Any subtree $\calJ'$ of $\calJ$ defines the subquery of $Q$ that is the natural join of all relation symbols that are nodes in $\calJ'$.

\begin{definition}[Cover-Join Plan]\label{def:core-join-plan}
Given $(Q,\calJ,\inst{D})$, a {\em cover-join plan $\varphi$ over the join tree $\calJ$} is defined recursively as follows:
\begin{itemize}
\item If $\calJ$ consists of one node $R$, then $\varphi = R$. The plan $\varphi$ returns $R$.

\item If $\calJ = \calJ_1\circ\calJ_2$ and $\varphi_{i}$ is a cover-join plan over $\calJ_i$, then $\varphi = 
\varphi_{1}\ \corejoin\ \varphi_{2}$. The plan $\varphi$ returns the result of $R_1\ \corejoin\ R_2$, where the relation $R_i$ is returned by
the plan $\varphi_{i}$  ($i\in[2]$).
\end{itemize}
\end{definition}

Lemma~\ref{lemma:core-plan-acyclic} states next that a cover-join plan computes a cover of the query result over the decomposition corresponding to a given join tree of the query.

\begin{lemma}\label{lemma:core-plan-acyclic}
Given $(Q,\calJ,\inst{D})$ where $\inst{D}=\{R_i\}_{i\in[n]}$
is globally consistent with respect to $Q$, 
any cover-join plan over the join tree $\calJ$ 
computes a cover $K$ of $Q(\inst{D})$ over the decomposition corresponding to $\calJ$ in 
time $\widetilde{\calO}(|K|)$ and with size $\takemax{{i\in[n]}}{\sizeof{R_i}} \leq|K|\leq \sum_{i\in[n]}|R_i|$.
\end{lemma}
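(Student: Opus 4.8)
The plan is to prove the statement by structural induction on the cover-join plan $\varphi$ over the join tree $\calJ$. The base case is immediate: if $\calJ$ is a single node $R$, then $\varphi = R$ returns $R$ itself, which is trivially a cover of the one-relation ``join'' over the decomposition with the single bag $\sign{R}$; the size bound $|R| \leq |R| \leq |R|$ holds, and the running time is $\widetilde{\calO}(|R|)$ for scanning (or copying) it. For the inductive step, suppose $\calJ = \calJ_1 \circ \calJ_2$ with $\varphi = \varphi_1 \corejoin \varphi_2$, where $\varphi_i$ is a cover-join plan over $\calJ_i$. Let $Q_i$ be the subquery associated with $\calJ_i$ and let $\calT_i$ be the decomposition corresponding to $\calJ_i$. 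By the induction hypothesis, $\varphi_i$ returns a relation $K_i$ that is a cover of $Q_i(\inst{D})$ over $\calT_i$, with $\max_{R_j \in \calJ_i} |R_j| \leq |K_i| \leq \sum_{R_j \in \calJ_i} |R_j|$ and computed in time $\widetilde{\calO}(|K_i|)$.

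The heart of the argument is to show that $K = K_1 \corejoin K_2$ is a cover of $Q(\inst{D})$ over the decomposition $\calT$ corresponding to $\calJ$. First I would establish that $K_1$ and $K_2$ are consistent relations, so that Proposition~\ref{prop:comp_core_two_rel} applies to their cover-join. Consistency should follow from global consistency of $\inst{D}$ with respect to $Q$ together with the connectivity property of the join tree: since $\calJ_1$ and $\calJ_2$ are joined by a single edge $(R, R')$ with label $\ell = \sign{R} \cap \sign{R'}$, and the join tree guarantees that all shared attributes between the two subqueries are contained in $\ell \subseteq \sign{R} \cap \sign{R'}$, one checks that $\pi_{\sign{K_1} \cap \sign{K_2}} K_1$ and $\pi_{\sign{K_1} \cap \sign{K_2}} K_2$ agree because each is equal to the corresponding projection of $Q_1(\inst{D})$ and $Q_2(\inst{D})$ (using that $K_i$ is result-preserving, so $\pi_B K_i = \pi_B Q_i(\inst{D})$ for each bag $B$ of $\calT_i$, and the shared attributes lie inside one such bag), and these projections coincide on the join-tree edge by global consistency. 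Then $K = K_1 \corejoin K_2$ is by definition a cover of $K_1 \Join K_2$ over the decomposition with bags $\sign{K_1}, \sign{K_2}$, i.e., by Proposition~\ref{Q(D)-preserv=join_preserv}, $\pi_{\sign{K_i}} K = K_i$ for $i \in [2]$ and $K_1 \Join K_2 = \pi_{\sign{K_1}} K \Join \pi_{\sign{K_2}} K$. Since $K_i$ is itself result-preserving for $(Q_i, \calT_i, \inst{D})$, we get $\pi_B K = \pi_B(\pi_{\sign{K_i}} K) = \pi_B K_i = \pi_B Q_i(\inst{D})$ for every bag $B$ of $\calT_i$; and $\pi_B Q_i(\inst{D}) = \pi_B Q(\inst{D})$ because $\inst{D}$ is globally consistent with respect to $Q$, so projecting the subquery result or the full query result onto a bag of one side gives the same relation. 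As $\sign{\calT} = \sign{\calT_1} \cup \sign{\calT_2}$, this shows $K$ is result-preserving for $(Q, \calT, \inst{D})$. For \emph{minimality}: any strictly smaller subset $K' \subsetneq K$ must have $\pi_{\sign{K_i}} K' \subseteq K_i$ for some $i$ with at least one containment strict (since $K = K_1 \Join K_2 = \pi_{\sign{K_1}}K \Join \pi_{\sign{K_2}}K$ would otherwise be preserved), and then $K'$ fails to project onto that $K_i$, hence fails minimal result-preservation for the corresponding bags; so $K$ is a cover.

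For the size bounds, Proposition~\ref{prop:comp_core_two_rel} gives $\max\{|K_1|, |K_2|\} \leq |K| \leq |K_1| + |K_2|$; combining with the inductive bounds on $|K_1|$ and $|K_2|$ yields $\max_{i \in [n]} |R_i| \leq |K| \leq \sum_{i \in [n]} |R_i|$ after observing that the node sets of $\calJ_1$ and $\calJ_2$ partition $[n]$. For the time bound, the cover-join step costs $\widetilde{\calO}(|K_1| + |K_2|)$ by Proposition~\ref{prop:comp_core_two_rel}, and $|K_1| + |K_2| \leq \sum_{i \in [n]} |R_i|$; adding the recursive costs over the $O(1)$-many nodes of the (fixed-size) join tree keeps the total at $\widetilde{\calO}(\sum_i |R_i|) = \widetilde{\calO}(|\inst{D}|)$, which is $\widetilde{\calO}(|K|)$ since $|K| \geq \max_i |R_i| \geq |\inst{D}|/n$ and $n$ is a constant.

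The main obstacle I anticipate is the consistency-and-result-preservation bookkeeping in the inductive step: one must carefully use the connectivity property of the join tree to argue that the \emph{only} attributes shared between the two subqueries $Q_1$ and $Q_2$ already appear together inside one bag of each side's decomposition (so that ``agreeing on that bag's projection'' really does imply ``joinable and consistent''), and then chain the equalities $\pi_B K = \pi_B K_i = \pi_B Q_i(\inst{D}) = \pi_B Q(\inst{D})$ without circularity. The step $\pi_B Q_i(\inst{D}) = \pi_B Q(\inst{D})$ in particular relies essentially on global consistency of $\inst{D}$ w.r.t.\ $Q$ (a dangling tuple of the subquery could otherwise survive into $\pi_B Q_i(\inst{D})$ but not into $\pi_B Q(\inst{D})$), so it is important that this hypothesis is invoked at exactly that point.
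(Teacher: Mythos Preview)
Your approach is essentially the paper's: structural induction on the plan, with the inductive step establishing consistency of $K_1,K_2$ via the join-tree edge, then result-preservation and minimality of $K = K_1 \corejoin K_2$, and finally the size and time bounds by combining Proposition~\ref{prop:comp_core_two_rel} with the inductive bounds.

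One point in the minimality argument needs fixing. The parenthetical ``since $K = K_1 \Join K_2 = \pi_{\sign{K_1}}K \Join \pi_{\sign{K_2}}K$ would otherwise be preserved'' is wrong as stated: $K$ is a cover of $K_1 \Join K_2$, not equal to it. The correct justification for why $\pi_{\sign{K_i}} K' \subsetneq K_i$ for some $i$ whenever $K' \subsetneq K$ is simply that $K$ is \emph{minimal} result-preserving over the two-bag decomposition $\{\sign{K_1},\sign{K_2}\}$ (this is what the cover-join operator guarantees). More importantly, your next step---``hence fails minimal result-preservation for the corresponding bags''---skips a necessary appeal to the induction hypothesis: from $\pi_{\sign{K_i}} K' \subsetneq K_i$ you must invoke the \emph{minimality of $K_i$} as a cover of $Q_i(\inst{D}_i)$ over $\calT_i$ to conclude that $\pi_{\sign{K_i}} K'$ (and hence $K'$) fails result-preservation on some bag $B$ of $\calT_i \subseteq \calT$. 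The paper makes this explicit (phrasing the first half via the minimal-edge-cover characterization of Proposition~\ref{characterize_cores}); your sketch has the right shape but conflates the two distinct uses of minimality.
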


Lemma~\ref{lemma:core-plan-acyclic} states three remarkable properties of cover-join plans. First, they compute covers compositionally: To obtain a cover of the entire query result it is sufficient to compute covers of the results for subqueries. More precisely, for a cover-join plan $\varphi_{1}\ \corejoin\ \varphi_{2}$, the sub-plans $\varphi_{1}$ and $\varphi_{2}$ compute covers for the subqueries defined by the joins of the relations in the join trees $\calJ_1$ and respectively $\calJ_2$. Then, the plan $\varphi_{1}\ \corejoin\ \varphi_{2}$ computes a cover for the join of the relations in the join tree $\calJ=\calJ_1\circ\calJ_2$. 
Second, the output of a cover-join plan is always a cover, regardless which cover is picked at each  cover-join operator in the plan.
Third, it does not matter which cover-join plan we choose for a given 
join tree, the resulting covers are computed with the same time guarantee. Nevertheless, different plans for the same 
join tree may lead to different covers (Example~\ref{ex:incomparable}).

These properties rely on the global consistency of the database and on the fact that the plans follow the structure of the join tree. For arbitrary databases, a cover-join operator may wrongly construct covers using dangling tuples at the expense of relevant tuples that are not anymore covered and therefore lost. 
Furthermore, plans that do not follow the structure of a join tree may be unsound (Example~\ref{ex:unsound}). 
Although each cover-join operator computes a cover of minimum size for the 
join of its input relations, the overall cover computed by a cover-join plan may not be a cover  of minimum size of the query result (Example~\ref{ex:cover_join_plan_non_minimal} in Appendix \ref{app:cover_join_plan_non-minimum}).

\begin{example}\label{ex:many_plans}
A join tree that admits several splits can define many plans. For instance, the join tree for the query $R_1(A,B)\Join R_2(B,C)\Join R_3(C,D)$ is the path $R_1-R_2-R_3$ and admits two possible splits that lead to the plans 
$\varphi_1 = (R_1(A,B) \corejoin R_2(B,C)) \corejoin R_3(C,D)$ and $\varphi_2 = R_1(A,B) \corejoin (R_2(B,C) \corejoin 
R_3(C,D))$. The relations are those in Figure~\ref{fig:hypergraph}, now calibrated. 
For this database, the covers computed by the sub-plans $R_1(A,B) \corejoin R_2(B,C)$  and $R_2(B,C) \corejoin R_3(C,D)$ correspond to full join results, since all join values only occur once in the relations. By taking any possible cover at each cover-join operator in the plans, both plans yield the same four possible covers of the query result: One of them is $\mathit{rel}(M)$ in Figure~\ref{fig:hypergraph} and two of them are $K_1$ and $K_2$ in Example~\ref{ex:cores}. The last cover is not depicted: It is the same as $K_1$ with the change that the values $d_1$ and $d_2$ are swapped between the first two rows. 
\end{example}

A corollary of Proposition~\ref{prop:rewriting2} and Lemma~\ref{lemma:core-plan-acyclic} is that covers over decompositions of {\em arbitrary} natural join queries  can be computed in time proportional to their sizes. 

\begin{theorem}[Proposition~\ref{prop:rewriting2}, Lemma~\ref{lemma:core-plan-acyclic}]\label{theo:time_core_join_plan}
Given a natural join query $Q$, decomposition $\calT$ of $Q$, and database $\inst{D}$, 
a cover of the query result $Q(\inst{D})$ over the decomposition $\calT$ and with size $\calO(|\inst{D}|^{\fhtw{\calT}})$ can be computed in time $\widetilde{\calO}(|\inst{D}|^{\fhtw{\calT}})$. 
\end{theorem}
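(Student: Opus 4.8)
The plan is to combine the two-stage strategy that the paper has already set up: first invoke Proposition~\ref{prop:rewriting2} to reduce the problem to an acyclic query over a globally consistent database, and then apply Lemma~\ref{lemma:core-plan-acyclic} to compute a cover of that acyclic query's result using any cover-join plan over a corresponding join tree. The only things to check are that the outputs of these two steps compose correctly — i.e., that a cover over the join tree of $Q'$ is a cover of $Q(\inst{D})$ over $\calT$ — and that the size and time bounds add up to $\widetilde{\calO}(|\inst{D}|^{\fhtw{\calT}})$.

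Concretely, I would proceed as follows. Given $(Q,\calT,\inst{D})$, first apply Proposition~\ref{prop:rewriting2} to obtain $(Q',\calT,\inst{D}')$ where $Q'$ is acyclic, $\calT$ corresponds to a join tree $\calJ$ of $Q'$, $\inst{D}'$ is globally consistent with respect to $Q'$, and $Q'(\inst{D}')=Q(\inst{D})$; this costs time $\widetilde{\calO}(|\inst{D}|^{\fhtw{\calT}})$ and produces a database of size $\calO(|\inst{D}|^{\fhtw{\calT}})$. Next, fix any cover-join plan $\varphi$ over $\calJ$ and run it on $\inst{D}'$; by Lemma~\ref{lemma:core-plan-acyclic} this computes a cover $K$ of $Q'(\inst{D}')$ over the decomposition corresponding to $\calJ$, in time $\widetilde{\calO}(|K|)$ and with $|K|\leq\sum_{R'\in\inst{D}'}|R'| = |\inst{D}'| = \calO(|\inst{D}|^{\fhtw{\calT}})$. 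Since the decomposition corresponding to $\calJ$ is exactly $\calT$ (the bags of $\calT$ are the relation schemas of $Q'$) and $Q'(\inst{D}')=Q(\inst{D})$, the relation $K$ is by definition a cover of $Q(\inst{D})$ over $\calT$. Finally, add up the costs: the first stage is $\widetilde{\calO}(|\inst{D}|^{\fhtw{\calT}})$ and the second stage is $\widetilde{\calO}(|K|)=\widetilde{\calO}(|\inst{D}|^{\fhtw{\calT}})$, so the total is $\widetilde{\calO}(|\inst{D}|^{\fhtw{\calT}})$, and the size bound $\calO(|\inst{D}|^{\fhtw{\calT}})$ is immediate (and also re-derivable from Theorem~\ref{theo:general_size_bounds}(i)).

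The substantive points to nail down, rather than true obstacles, are two. First, one must be careful that ``the decomposition corresponding to $\calJ$'' and the original $\calT$ agree not merely as trees but as decompositions in the sense of Definition~\ref{def:cores}, so that a minimal result-preserving relation for one is minimal result-preserving for the other; this is precisely what Proposition~\ref{prop:rewriting2} guarantees by stating that $\calT$ corresponds to a join tree of $Q'$, and since result-preservation (Definition~1) only refers to the bag schemas $\sign{\calT}$ and to $Q(\inst{D})$, the identities $\sign{\calJ}=\sign{\calT}$ and $Q'(\inst{D}')=Q(\inst{D})$ close the gap. Second, in the bowtie-style case where $Q'$ has bag relations like $B_1(A,B,C)$, the attributes of $K$ are $\att{Q'}=\att{\calT}=\att{Q}$, so $K$ has the correct schema. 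The main thing that could be called the ``hard part'' is entirely upstream — it lives in the proofs of Proposition~\ref{prop:rewriting2} (the worst-case optimal materialization of decomposition bags, which is where the $|\inst{D}|^{\fhtw{\calT}}$ term and its log factor come from) and Lemma~\ref{lemma:core-plan-acyclic}; the theorem itself is a clean corollary, as its statement already advertises.
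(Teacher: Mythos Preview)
Your proposal is correct and follows exactly the approach the paper intends: the theorem is stated as a corollary of Proposition~\ref{prop:rewriting2} and Lemma~\ref{lemma:core-plan-acyclic}, and the paper's own explanation (the paragraph immediately following the theorem) describes precisely the two-stage pipeline you spell out. Your additional care in verifying that the decomposition corresponding to $\calJ$ coincides with $\calT$ at the level of bag schemas, and that $Q'(\inst{D}')=Q(\inst{D})$ transfers the cover property, makes explicit what the paper leaves implicit.
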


Given $(Q,\calT,\inst{D})$ where $Q$ is an arbitrary natural join query
and $\inst{D}$ is an arbitrary database, 
we can compute a cover in four steps: 
construct $(Q',\calT,\inst{D}')$ such that
$Q'$ is an acyclic natural join query, $\calT$ corresponds 
to a join tree of $Q'$ and $\inst{D}'$
consists of materializations of the bags of $\calT$; 
 turn $\inst{D}'$ into a globally consistent database $\inst{D}''$ 
with respect to $Q'$; turn $\calT$ into a join tree $\calJ$ of $Q'$ by 
replacing each bag by the corresponding relation symbol in $Q'$;
and execute on $\inst{D}''$ a cover-join plan for $Q'$ over $\calJ$.      
Since there are arbitrarily large databases for which the size bounds on covers are tight (Theorem~\ref{theo:general_size_bounds}), the cover-join plans, together with a worst-case optimal algorithm for materializing bags~\cite{Ngo:SIGREC:2013}, represent a worst-case optimal  algorithm for computing covers.

We conclude this section with three insights into the ability of cover-join plans to compute covers. We give an example of an unsound cover-join plan that does not follow the structure of a join tree. We then note the incompleteness of our cover-join plans due to the binary nature of the cover-join operator. We give an example of a cover that cannot be computed with our cover-join plans, but can be computed using a multi-way cover-join operator. Finally, we give an example showing that distinct cover-join plans over the same (or also distinct) join trees can yield incomparable sets of covers. 

\begin{example}[Unsound plan]\label{ex:unsound}
Consider the query $Q = R_1(A,B)\Join R_2(B,C)\Join R_3(C,D)$,
the following database with relations  $R_1$, $R_2$, and  $R_3$, and four relations computed by cover-joining two of the three relations:

\begin{center}
\begin{tabular}{lllllll}

\begin{tabular}{c@{\;}c@{\;}}
  \multicolumn{2}{c}{$R_1$} \\
  \toprule
  $A$ & $B$  \\\midrule
  $a$ & $b_1$\\
  $a$ & $b_2$\\
  \bottomrule
\end{tabular}

&

\begin{tabular}{c@{\;}c@{\;}}
  \multicolumn{2}{c}{$R_2$} \\
  \toprule
  $B$ & $C$  \\\midrule
  $b_1$ & $c_1$\\
  $b_2$ & $c_2$\\
  \bottomrule
\end{tabular}

&

\begin{tabular}{c@{\;}c@{\;}}
  \multicolumn{2}{c}{$R_3$} \\
  \toprule
  $C$ & $D$  \\\midrule
  $c_1$ & $d$\\
  $c_2$ & $d$\\
  \bottomrule
\end{tabular}

& 

\begin{tabular}{c@{\;}c@{\;}@{\;}c@{\;}c@{\;}}
  \multicolumn{4}{c}{$K_{1,3}$} \\
  \toprule
  $A$ & $B$   & $C$   & $D$  \\\midrule
  $a$ & $b_1$ & $c_1$ & $d$\\
  $a$ & $b_2$ & $c_2$ & $d$\\
  \bottomrule
\end{tabular}

& 

\begin{tabular}{c@{\;}c@{\;}@{\;}c@{\;}c@{\;}}
  \multicolumn{4}{c}{$K'_{1,3}$} \\
  \toprule
  $A$ & $B$   & $C$   & $D$  \\\midrule
  $a$ & $b_1$ & $c_2$ & $d$\\
  $a$ & $b_2$ & $c_1$ & $d$\\
  \bottomrule
\end{tabular}

& 

\begin{tabular}{c@{\;}c@{\;}@{\;}c@{\;}}
  \multicolumn{3}{c}{$K_{1,2}$} \\
  \toprule
  $A$ & $B$   & $C$   \\\midrule
  $a$ & $b_1$ & $c_1$ \\
  $a$ & $b_2$ & $c_2$ \\
  \bottomrule
\end{tabular}

& 

\begin{tabular}{c@{\;}c@{\;}@{\;}c@{\;}}
  \multicolumn{3}{c}{$K_{2,3}$} \\
  \toprule
  $B$   & $C$   & $D$   \\\midrule
  $b_1$ & $c_1$ & $d$ \\
  $b_2$ & $c_2$ & $d$ \\
  \bottomrule
\end{tabular}
\end{tabular}
\end{center}

Following Definition~\ref{def:core-join-plan}, the plan 
$(R_1(A,B) \corejoin R_3(C,D)) \corejoin R_2(B,C)$ would require a split $\calJ_{1,3}\circ\calJ_2$ of a join tree, where  
the join tree $\calJ_{1,3}$ has two nodes $R_1$ and $R_3$ while the join tree
$\calJ_{2}$ has one node $R_2$. However, there is no join tree that allows such a split.

The cover-join $R_1(A,B) \corejoin R_3(C,D)$ computes one of the two covers $K_{1,3}$ and $K'_{1,3}$. The result of the join of $K'_{1,3}$ and $R_2$ is empty and so is the cover-join. This means that this plan does not always compute a cover, which makes it unsound.

This problem cannot occur with cover-join plans over join trees of $Q$. The only cover-join plans over join trees of $Q$  are (up to commutativity)  
$(R_1(A,B)$ $\corejoin$ $R_2(B,C))$ $\corejoin$ $R_3(C,D)$ and $R_1(A,B) \corejoin (R_2(B,C)\corejoin R_3(C,D))$. The only cover of 
$R_1(A,B) \corejoin R_2(B,C)$ is $K_{1,2}$ above, which can be cover-joined with $R_3$. The only cover of 
$R_2(B,C) \corejoin R_3(C,D)$ is 
$K_{2,3}$ above, which can be cover-joined with $R_1$.
\end{example}

\begin{example}[Cover-Join Incompleteness]
Consider the product query $Q =R_1(A)\Join R_2(B)\Join R_3(C)$,
the following database $\inst{D}$ with relations  $R_1$, $R_2$, and  
$R_3$ and one cover $K$ of the query result over the decomposition
with bags $\{A\}$, $\{B\}$, and $\{C\}$:
\begin{center}
\begin{tabular}{llll}

\begin{tabular}{c@{\;}c@{\;}}
  \multicolumn{1}{c}{$R_1$} \\
  \toprule
  $A$  \\\midrule
  $a_1$\\
  $a_2$\\
  \bottomrule
  \\
\end{tabular}

&

\begin{tabular}{c@{\;}c@{\;}}
  \multicolumn{1}{c}{$R_2$} \\
  \toprule
  $B$  \\\midrule
  $b_1$\\
  $b_2$\\
  \bottomrule
  \\
\end{tabular}

&

\begin{tabular}{c@{\;}c@{\;}}
  \multicolumn{1}{c}{$R_3$} \\
  \toprule
  $C$  \\\midrule
  $c_1$\\
  $c_2$\\
  \bottomrule
  \\
\end{tabular}

& 

\begin{tabular}{c@{\;}c@{\;}@{\;}c@{\;}}
  \multicolumn{3}{c}{$K$} \\
  \toprule
  $A$   & $B$   & $C$   \\\midrule
  $a_1$ & $b_1$ & $c_1$ \\
  $a_1$ & $b_2$ & $c_2$ \\
  $a_2$ & $b_1$ & $c_2$ \\
  \bottomrule
\end{tabular}

\end{tabular}
\end{center}
A decomposition of  $Q$ can have up to three bags
which are not included in other bags.

In case of decompositions with three bags, each bag consists of exactly one attribute. These decompositions correspond to the join trees that are permutations of the three relation symbols.
There are three possible cover-join plans (up to commutativity) over these 
join trees: 
$\varphi_1= R_1(A) \corejoin (R_2(B) \corejoin R_3(C))$, 
$\varphi_2= R_2(B) \corejoin (R_1(A) \corejoin R_3(C))$ and 
$\varphi_3= R_3(C) \corejoin (R_1(A) \corejoin R_2(B))$. None of these plans can yield the cover $K$ above. As discussed after Definition~\ref{def:core-join}, a minimal edge cover corresponding to a cover computed by a {\em binary} cover-join operator can only have paths of one or two edges. For instance, $\pi_{\{A,B\}}K$, which should correspond to a cover of $R_1(A) \corejoin R_2(B)$, has the path of three edges $b_2 - a_1 - b_1 - a_2$. The cover-join $R_1(A) \corejoin R_2(B)$ would not create this path since it corresponds to a non-minimal edge cover. Similarly, $\pi_{\{A,C\}}K$ and $\pi_{\{B,C\}}K$ have paths of three edges.

For decompositions with two bags, two of the three attributes are in the same bag. Without loss of generality, assume $A$ and $B$ are in the same bag. Following Proposition~\ref{prop:rewriting2}, this bag is covered by a new relation $R_{1,2}$ that is the product of $R_1$ and $R_2$. This means that $K$ has to be the cover of $R_{1,2}(A,B)\corejoin R_3(C)$, yet $\pi_{\{A,B\}}K$ is not 
$R_{1,2}$! 

The decomposition with one bag consisting of all three attributes has this bag covered by a new relation that is the product of the three relations. This relation is the Cartesian product of the three relations that is the full query result and different from $K=\pi_{\{A,B,C\}}K$.

We conclude that the cover  $K$ cannot be computed using cover-join plans with binary cover-join operators.
\end{example}

\begin{example}[Incomparable Sets of Covers]\label{ex:incomparable}
Consider the product query $Q=R_1(A)\Join R_2(B)\Join R_3(C)$ and 
the following database $\{R_1, R_2, R_3\}$:

\begin{center}
\begin{tabular}{llllll}

\begin{tabular}{c@{\;}c@{\;}}
  \multicolumn{1}{c}{$R_1$} \\
  \toprule
  $A$  \\\midrule
  $a_1$\\
  $a_2$\\
  \bottomrule
  \\
\end{tabular}

&

\begin{tabular}{c@{\;}c@{\;}}
  \multicolumn{1}{c}{$R_2$} \\
  \toprule
  $B$  \\\midrule
  $b_1$\\
  $b_2$\\
  \bottomrule
  \\
\end{tabular}

&

\begin{tabular}{c@{\;}c@{\;}}
  \multicolumn{1}{c}{$R_3$} \\
  \toprule
  $C$  \\\midrule
  $c_1$\\
  $c_2$\\
  $c_3$\\
  \bottomrule
\end{tabular}

& 

\begin{tabular}{c@{\;}c@{\;}@{\;}c@{\;}}
  \multicolumn{3}{c}{$K$} \\
  \toprule
  $A$   & $B$   & $C$   \\\midrule
  $a_1$ & $b_1$ & $c_1$ \\
  $a_2$ & $b_2$ & $c_2$ \\
  $a_1$ & $b_2$ & $c_3$ \\
  \bottomrule
\end{tabular}

& 

\begin{tabular}{c@{\;}c@{\;}@{\;}}
  \multicolumn{2}{c}{$K_{1,2}$} \\
  \toprule
  $A$   & $B$    \\\midrule
  $a_1$ & $b_1$  \\
  $a_2$ & $b_2$  \\
  \bottomrule
  \\
\end{tabular}

& 

\begin{tabular}{c@{\;}c@{\;}@{\;}}
  \multicolumn{2}{c}{$K'_{1,2}$} \\
  \toprule
  $A$   & $B$    \\\midrule
  $a_1$ & $b_2$  \\
  $a_2$ & $b_1$  \\
  \bottomrule
  \\
\end{tabular}

\end{tabular}
\end{center}

Let us consider the join tree $\calJ=R_1-R_2-R_3$ of $Q$.
There are (up to commutativity) two possible cover-join plans over 
$\calJ$: $\varphi_1= R_1(A) \corejoin (R_2(B) \corejoin R_3(C))$ and $\varphi_2=(R_1(A) \corejoin R_2(B)) \corejoin R_3(C)$. 
The above relation $K$ is a cover of the result of $Q$ and can be computed by $\varphi_1$, which cover-joins $R_1(A)$ and a cover of the join of $R_2(B)$ and $R_3(C)$. This cover cannot be computed by $\varphi_2$. Indeed, $\varphi_2$ first cover-joins 
$R_1(A)$ and $R_2(B)$, yielding $K_{1,2}$ or $K'_{1,2}$ as the only possible covers. Then, cover-joining any of them with $R_3(C)$ does not yield the cover $K$ since 
$\pi_{\{A,B\}}K$ is different from both $K_{1,2}$ and $K'_{1,2}$. Similarly, $\varphi_2$ computes covers that cannot be computed by $\varphi_1$.
\end{example}
\section{Covers for Functional Aggregate Queries}
\label{sec:applications}
We first give a brief introduction to functional aggregate queries (FAQ)
\cite{FAQ:PODS:2016}. A detailed description can be found in the 
appendix. 

Given an attribute set $S$, we use $\vala_S$ to 
indicate that tuple $\vala$ has schema $S$. For $S' \subseteq S$,
we denote by $\vala_{S'}$ the restriction of 
$\vala$ to $S'$.
A functional aggregate query has the following form 
 (slightly adapted to our notation):

\begin{equation}
\varphi(\vala_{\{A_1,\ldots,A_f\}}) = \underset{a_{f+1} \in \dom(A_{f+1})}{\bigoplus\ ^{(f+1)}} \cdots \underset{a_{n} \in \dom(A_{n})}{\bigoplus\ ^{(n)}}\  \ \underset{S\in\mathcal{E}}{\bigotimes}\ \psi_S(\vala_S),\text{ where:}\label{eq:faq}
\end{equation}

\begin{itemize}
\item $H=( \calV, \calE)$ is the multi-hypergraph of the query with ${\cal V}=\{A_i\}_{i\in[n]}$.

\item $\textsf{Dom}$ is a fixed (output) domain, such as $\{$\textsf{true},\textsf{false}$\}$, $\{0,1\}$, or $\mathbb{R}^+$.

\item $\calV_{\text{free}} = \{A_1,\ldots,A_f\}$ is the set of result or free attributes; all other attributes are bound.

\item For each attribute $A_i$ with $i>f$, $\oplus^{(i)}$ is a binary (aggregate) operator on the domain $\textsf{Dom}$. Different bound attributes may have different aggregate operators. 

\item For each attribute $A_i$ with $i>f$, either $\oplus^{(i)}$ is $\otimes$ or $(\textsf{Dom},\oplus^{(i)},\otimes)$ forms a commutative semiring with the same additive identity ${\bf 0}$ and multiplicative identity ${\bf 1}$ for all semirings.

\item For every hyperedge $S$ in $\cal E$, 
$\psi_S: \prod_{A\in S}\dom(A) \rightarrow \Dom$ 
is an (input)  function.
\end{itemize}

FAQs are a semiring generalization of aggregates over join queries, where the aggregates are the operators $\oplus^{(i)}$ and the natural join 
is expressed by $\bigotimes_{S\in\mathcal{E}} \psi_S({\sf a}_S)$.
The listing representation $R_{\psi_S}$ of a function $\psi_S$ is a relation over the schema $S \cup \{\psi_S(S)\}$ which consists of all input-output pairs for $\psi_S$ where the output is non-zero, i.e., $R_{\psi_S}$ contains a tuple $\vala_{S \cup \{\psi_S(S)\}}$ if and only if $\psi_S(\vala_S) = \vala_{\psi_S(S)} \neq \zero$. An input database for $\varphi$ contains for each $\psi_S$ its listing representation.
We say that $\calT$ is 
a decomposition of $\varphi$
if $\calT$ is a decomposition of the hypergraph $H$
of $\varphi$.
Given an FAQ $\varphi$ and database $\inst{D}$, the FAQ-problem is to compute the query result $\varphi(\inst{D})$.

Each FAQ $\varphi$ has an FAQ-width $\faqw{\varphi}$ 
which is 
defined similarly to the fractional hypertree width 
of the hypergraph of $\varphi$. 
For instance, in case where  all attributes of $\varphi$ are free,
$\faqw{\varphi}$ is equal to the fractional hypertree width of the 
hypergraph of $\varphi$.

Given an FAQ $\varphi$ and a database $\inst{D}$,
the \textsf{InsideOut} algorithm \cite{FAQ:PODS:2016} 
solves the FAQ-problem as follows. 
First, it 
eliminates all bound attributes
along with their corresponding aggregate operators
by performing equivalence-preserving  transformations
on $\varphi$. Then, it computes the listing representation 
of the remaining query. 
The algorithm runs in time 
$\widetilde{\calO}(|\inst{D}|^{\faqw{\varphi}} + Z)$ 
where 
$Z$ is the size 
of the output, i.e., the listing representation of $\varphi$.

We can compute  a cover of the result of a given FAQ $\varphi$
in time $\widetilde{\calO}(|\inst{D}|^{\faqw{\varphi}})$, which 
does not depend on the size of the listing representation  
of $\varphi$. 
Our strategy is as follows. 
We first eliminate all bound attributes 
in $\varphi$ by using \textsf{InsideOut} resulting in an
FAQ $\varphi'$. We then take a 
decomposition $\calT$ of $\varphi'$ and compute 
bag functions $\beta_B$, $B \in \sign{\calT}$,
with $\varphi'(\vala_{\calV_{\text{free}}}) = 
  \bigotimes_{B \in \sign{\calT}} \beta_B(\vala_{B})$.
Finally, we compute a cover of the join result of the listing representations   
of the bag functions over the extension of $\calT$ that
 contains, for each bag $B$, the attribute 
$\beta_B(B)$ for the values of the function 
$\beta_B$. Keeping the $\beta_B(B)$-values 
of the bag functions in the cover is necessary 
for recovering the output values 
of $\varphi$ when enumerating the result 
of $\varphi$ from the cover.

\begin{example}\label{ex:faq}
We consider the following FAQ $\varphi$ over the sum-product semiring $(\mathbb{N},+,\cdot)$ (for simplicity we skip the explicit iteration over the domains of the attributes in $\varphi$):
\begin{align*}
\varphi(a,b,d) = \sum_{c,e,f,g,h} \psi_1(a,b,c)\cdot \psi_2(b,d,e)\cdot \psi_3(d,e,f)\cdot \psi_4(f,h)\cdot \psi_5(e,g), \mbox{ where }
\end{align*}
$\varphi$, $\psi_1$, $\psi_2$, 
$\psi_3$, $\psi_4$ and $\psi_5$ are over
$\{A,B,D\}$, $\{A,B,C\}$, $\{B,D,E\}$,
$\{D,E,F\}$, $\{F,H\}$
and $\{E,G\}$, respectively.   
We first run \textsf{InsideOut} on $\varphi$ to eliminate the bound attributes
and obtain the following FAQ:

\begin{align*}
\varphi'(a,b,d) &= \underbrace{\big(\sum_{c} \psi_1(a,b,c)\big)}_{\psi_6(a,b)}\cdot \underbrace{\sum_e \big(\psi_2(b,d,e)\cdot \underbrace{\sum_f \big(\psi_3(d,e,f)\cdot \underbrace{\sum_h \psi_4(f,h)}_{\psi_7(f)}\big)}_{\psi_{9}(d,e)}\cdot \underbrace{\sum_g \psi_5(e,g)}_{\psi_8(e)}\big)}_{\psi_{10}(b,d)}.
\end{align*}
We consider the decomposition $\calT$
of $\varphi'$ with two bags 
$B_1 = \{A,B \}$ and $B_2 = \{B,D\}$ and bag functions
$\psi_6$ and respectively $\psi_{10}$. 
Then, we 
execute the cover-join plan 
$R_{\psi_6}\ \corejoin\ R_{\psi_{10}}$
over the extended decomposition $\calT'$ with bags
$\{A,B, \psi_{6}(A,B)\}$ and $\{B,D, \psi_{10}(B,D)\}$.
While the computation of the result of $\varphi'$
can take quadratic time, the above cover-join plan takes linear time. 
We exemplify the computation of the cover-join plan. Assume the following tuples in $\psi_6$ and $\psi_{10}$, 
where $\gamma_1,\ldots,\gamma_4,\delta_1,\ldots,\delta_3\in\mathbb{N}$:

\vspace*{-1em}
\begin{center}
\begin{tabular}{l@{\hspace*{2em}}l@{\hspace*{2em}}l@{\hspace*{2em}}l}

\begin{tabular}{cc|c}
  \multicolumn{3}{c}{$\psi_6$} \\
  \toprule
  $A$    &  $B$   & $\psi_{6}(A,B)$\\\midrule
  $a_1$  &  $b_1$ & $\gamma_1$\\
  $a_2$  &  $b_1$ & $\gamma_2$\\
  $a_3$  &  $b_2$ & $\gamma_3$\\
  $a_4$  &  $b_2$ & $\gamma_4$\\
  \bottomrule
\end{tabular}

&

\begin{tabular}{cc|c}
  \multicolumn{3}{c}{$\psi_{10}$} \\
  \toprule
  $B$     & $D$   &  $\psi_{10}(B,D)$\\\midrule
  $b_1$   & $d_1$ &  $\delta_1$\\
  $b_1$   & $d_2$ &  $\delta_2$\\
  $b_2$   & $d_3$ &  $\delta_3$\\
  \bottomrule
  \\
\end{tabular}

&

\begin{tabular}{ccc|cc}
  \multicolumn{5}{c}{$K$} \\
  \toprule
  $A$     &  $B$   &  $D$   &  $\psi_6(A,B)$  &  $\psi_{10}(B,D)$\\\midrule
  $a_1$   &  $b_1$ &  $d_1$ &  $\gamma_1$  &  $\delta_1$ \\
  $a_2$   &  $b_1$ &  $d_2$ &  $\gamma_2$  &  $\delta_2$ \\
  $a_3$   &  $b_2$ &  $d_3$ &  $\gamma_3$  &  $\delta_3$ \\
  $a_4$   &  $b_2$ &  $d_3$ &  $\gamma_4$  &  $\delta_3$ \\
  \bottomrule
\end{tabular}

\end{tabular}
\end{center}
\vspace*{1em}

The relation $K$ is a possible 
cover computed by the cover-join plan. 
The cover carries over the aggregates in columns 
$\psi_{6}(A,B)$  and  $\psi_{10}(B,D)$, one per bag
of $\calT'$. The aggregate of the first tuple in 
$K$ is $\gamma_1\cdot\delta_1$ (or $\gamma_1\otimes\delta_1$ under a semiring with multiplication $\otimes$).
\end{example}

The following theorem relies
on Lemma 
\ref{lemma:core-plan-acyclic} 
and Theorem \ref{theo:time_core_join_plan}
that give an 
upper bound on the time complexity for constructing 
covers of join results. 

\begin{theorem}\label{theo:core_FAQ_upper_bound}
For any FAQ $\varphi$ and database $\inst{D}$, a cover of the query result $\varphi(\inst{D})$
can be computed in time 
 $\widetilde{\calO}(\sizeof{\inst{D}}^{\faqw{\varphi}})$.
\end{theorem}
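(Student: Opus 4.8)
The plan is to follow the three-stage strategy already outlined in the text immediately before the statement, and to account carefully for the time spent in each stage. First I would run the \textsf{InsideOut} algorithm on $\varphi$ to eliminate all bound attributes together with their aggregate operators, obtaining an FAQ $\varphi'$ all of whose attributes are free. By the guarantee cited from \cite{FAQ:PODS:2016}, the variable-elimination phase of \textsf{InsideOut} (i.e.\ everything except materializing the final listing representation) runs in time $\widetilde{\calO}(|\inst{D}|^{\faqw{\varphi}})$; crucially, this does not include the output-size term $Z$, since we stop before listing the result. Moreover, $\faqw{\varphi'}$ equals the fractional hypertree width of the hypergraph of $\varphi'$, which is at most $\faqw{\varphi}$, because $\varphi'$ has only free attributes.

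Second, I would pick a decomposition $\calT$ of $\varphi'$ of width $\fhtw{\calT}=\faqw{\varphi'}\le\faqw{\varphi}$ and compute, for each bag $B\in\sign{\calT}$, a bag function $\beta_B$ over the attributes of $B$ so that $\varphi'(\vala_{\calV_{\free}})=\bigotimes_{B\in\sign{\calT}}\beta_B(\vala_B)$. This is the standard factorization of a free-attribute-only FAQ along a decomposition: each $\beta_B$ is obtained by combining the input functions $\psi_S$ with $S\subseteq B$ (distributing each $\psi_S$ to exactly one bag that contains $S$, which exists by the coverage property, and filling remaining bags with the constant $\one$), and each $\beta_B$ can be materialized in time $\widetilde{\calO}(|\inst{D}|^{\weight{\gamma_B}})\subseteq\widetilde{\calO}(|\inst{D}|^{\fhtw{\calT}})$ by a worst-case optimal join over the relations $R_{\psi_S}$ with $S\subseteq B$. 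Thus each listing representation $R_{\beta_B}$ has size $\calO(|\inst{D}|^{\fhtw{\calT}})$.

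Third, I would form the extended decomposition $\calT'$ whose bags are $B\cup\{\beta_B(B)\}$ for $B\in\sign{\calT}$ — these extra output-value attributes are pairwise disjoint and appear in a single bag each, so $\calT'$ still satisfies coverage and connectivity and has the same width $\fhtw{\calT}$ — and run a cover-join plan over the join tree corresponding to $\calT'$ on the database $\{R_{\beta_B}\}_{B\in\sign{\calT}}$, after first calibrating it to be globally consistent. By Proposition~\ref{prop:rewriting2} the calibration costs $\widetilde{\calO}(|\inst{D}|^{\fhtw{\calT}})$, and by Lemma~\ref{lemma:core-plan-acyclic} the cover-join plan then produces a cover $K$ of $\bigl(\Join_{B}R_{\beta_B}\bigr)$ over $\calT'$ in time $\widetilde{\calO}(|K|)$ with $|K|\le\sum_B|R_{\beta_B}|=\calO(|\inst{D}|^{\fhtw{\calT}})$. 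This $K$ is the desired cover of $\varphi(\inst{D})$: its projection onto each original bag $B$ agrees with $\pi_B\varphi(\inst{D})$, and the retained $\beta_B(B)$-columns let one recover the aggregate value $\bigotimes_B\beta_B(\vala_B)=\varphi'(\vala)=\varphi(\vala)$ for each result tuple. Summing the three stages gives total time $\widetilde{\calO}(|\inst{D}|^{\faqw{\varphi}})$.

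**Main obstacle.** The routine parts are the width bookkeeping and the cover-join time bound, both of which follow from earlier results. The delicate point is the interface with \textsf{InsideOut}: one must argue that after eliminating bound attributes the residual FAQ $\varphi'$ indeed admits a decomposition of width $\le\faqw{\varphi}$ whose bags can each be materialized within the claimed budget, and that introducing the bag-function value attributes and then cover-joining genuinely preserves \emph{all} the information in $\varphi(\inst{D})$ — i.e.\ that no result tuple and no aggregate value is lost when we pass from the (possibly quadratically larger) listing representation of $\varphi'$ to a cover. The semiring structure matters here only insofar as $\otimes$ distributes over each $\oplus^{(i)}$, which is exactly what justifies both the \textsf{InsideOut} elimination and the factorization $\varphi'=\bigotimes_B\beta_B$; I would state this explicitly but not belabor it.
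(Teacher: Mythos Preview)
Your proposal is correct and follows essentially the same three-stage approach as the paper: run the bound-attribute-elimination phase of \textsf{InsideOut}, materialize bag functions for a decomposition $\calT$ of the residual free-attribute FAQ, then cover-join the bag relations over the extended decomposition. The paper packages this as Proposition~\ref{prop:core_FAQ_tau_upper_bound} (parametrized by a $\varphi$-equivalent ordering $\tau$) and derives the theorem by choosing $\tau$ with $\faqw{\tau}=\faqw{\varphi}$; the existence of a decomposition with $\fhtw{\calT}\le\faqw{\tau_1}$, which you flag as the delicate point, is exactly what the paper imports as Proposition~\ref{prop:induced_hypertree} from \cite{FAQ:KhamisNRR15:arxiv}. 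One small difference: the paper's bag functions (Definition~\ref{def:bag_functions}) bake in indicator projections $\psi_{S/B}$ to keep the listing representations consistent and of the right size, whereas you materialize the bags na\"ively and then calibrate via Proposition~\ref{prop:rewriting2}; both routes achieve global consistency within the same time budget.
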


Any enumeration algorithm for covers of join results can 
be used to enumerate the tuples of an FAQ result from one of its covers.  We thus have the following corollary: 

\begin{corollary}[Corollary \ref{theo:const-delay-enum}]
\label{cor:enum_faq} 
Given a cover $K$
of the result $\varphi(\inst{D})$ of an FAQ $\varphi$ over a database $\inst{D}$,
the tuples in the query result $\varphi(\inst{D})$ 
can be enumerated
with $\widetilde{\calO}(|K|)$ pre-computation time and $\calO(1)$ delay and extra space.
\end{corollary}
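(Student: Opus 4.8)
The plan is to reduce the enumeration of an FAQ result from one of its covers to the already-established enumeration of a join-query result from a cover (Corollary~\ref{theo:const-delay-enum}), and then to append a constant-time post-processing step per enumerated tuple that recovers the aggregate value. Recall the construction preceding Example~\ref{ex:faq} and underlying Theorem~\ref{theo:core_FAQ_upper_bound}: after eliminating the bound attributes of $\varphi$ with \textsf{InsideOut} we obtain an FAQ $\varphi'$ with $\att{\varphi'}=\calV_{\text{free}}$; we fix a decomposition $\calT$ of $\varphi'$ and bag functions $\beta_B$ for $B\in\sign{\calT}$ with $\varphi'(\vala_{\calV_{\text{free}}})=\bigotimes_{B\in\sign{\calT}}\beta_B(\vala_B)$; and a cover $K$ of $\varphi(\inst{D})$ is, by this construction, a cover of the join query $Q'=\Join_{B\in\sign{\calT}}R_{\beta_B}$ over the extended decomposition $\calT'$ whose bag for $B$ is $B\cup\{\beta_B(B)\}$, where $R_{\beta_B}$ is the listing representation of $\beta_B$ and $\inst{D}'=\{R_{\beta_B}\}_{B\in\sign{\calT}}$.

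First I would check that $\calT'$ is a decomposition of $Q'$, so that Corollary~\ref{theo:const-delay-enum} applies to $(Q',\calT',\inst{D}')$: each hyperedge $B\cup\{\beta_B(B)\}$ of $Q'$ is itself a bag of $\calT'$ (coverage); the fresh attribute $\beta_B(B)$ occurs in exactly one bag; and each free attribute occurs in the same set of bags as in $\calT$, which is connected because $\calT$ is a decomposition of $\varphi'$ (connectivity); indeed $\calT'$ corresponds to a join tree of $Q'$. Invoking Corollary~\ref{theo:const-delay-enum} on $(Q',\calT',\inst{D}')$ with the cover $K$ then yields an enumeration of the tuples of $Q'(\inst{D}')$ with $\widetilde{\calO}(|K|)$ pre-computation time and $\calO(1)$ delay and extra space. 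Each such tuple $t$ has schema $\att{\calT'}=\calV_{\text{free}}\cup\{\beta_B(B)\mid B\in\sign{\calT}\}$, and I would emit the tuple of $\varphi(\inst{D})$ consisting of $t$ restricted to $\calV_{\text{free}}$ together with the output value $\bigotimes_{B\in\sign{\calT}}t(\beta_B(B))$. Since the query is fixed, the number of bags and of free attributes is constant, so this projection and this semiring product cost $\calO(1)$ time and space per tuple in data complexity; hence constant delay and extra space are preserved, and the extra one-off work folds into the $\widetilde{\calO}(|K|)$ pre-computation bound.

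It remains to argue that this enumerates exactly $\varphi(\inst{D})$ without repetition. By Proposition~\ref{Q(D)-preserv=join_preserv}, $Q'(\inst{D}')$ is the natural join of the listing representations $R_{\beta_B}$, so its projection onto $\calV_{\text{free}}$ consists precisely of the free tuples $\vala_{\calV_{\text{free}}}$ with $\beta_B(\vala_B)\neq\zero$ for every $B$, and on each such tuple the $\beta_B(B)$-column of $t$ carries exactly $\beta_B(\vala_B)$; the emitted value therefore equals $\bigotimes_{B\in\sign{\calT}}\beta_B(\vala_B)=\varphi'(\vala_{\calV_{\text{free}}})=\varphi(\vala_{\calV_{\text{free}}})$, and no free tuple is produced twice since distinct tuples of $Q'(\inst{D}')$ have distinct restrictions to $\calV_{\text{free}}$ (the value $\beta_B(\vala_B)$ is determined by $\vala_B$). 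The one point that needs care — and essentially the only obstacle — is the corner case of a semiring with zero divisors, where a product $\bigotimes_{B}\beta_B(\vala_B)$ of non-$\zero$ factors can still be $\zero$, so that the corresponding free tuple is not in $\varphi(\inst{D})$ and must be dropped after the product is computed. For the semirings of practical interest, such as $(\mathbb{R}^+,+,\cdot)$ or the Boolean semiring, this cannot happen and the procedure above already meets the stated bounds; in general the product is tested in $\calO(1)$ per enumerated tuple, preserving the time bounds, and the constant-delay guarantee in the absence of zero divisors.
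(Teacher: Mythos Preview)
Your approach is essentially the same as the paper's: reduce to the join-enumeration guarantee of Corollary~\ref{theo:const-delay-enum} on the cover $K$ over the extended decomposition, and post-process each emitted tuple by projecting to $\calV_{\text{free}}$ and multiplying the stored $\beta_B(B)$-values. The paper packages the correctness argument as Proposition~\ref{prop:equiv_phi_join}, which is exactly the equivalence you prove inline.

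One remark: you go further than the paper in flagging the zero-divisor corner case. The paper's Proposition~\ref{prop:equiv_phi_join} is stated only for $v\neq\zero$ and the paper does not discuss what happens when a join tuple yields a product equal to $\zero$; it implicitly assumes this does not occur. Your observation that in the presence of zero divisors the $\calO(1)$-delay guarantee may fail (since an unbounded run of zero-product tuples may have to be skipped) is correct and is not addressed in the paper. For semirings without zero divisors your argument and the paper's coincide and are complete.
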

\section{Conclusion}

Results of join and functional aggregate queries 
entail redundancy in both their computation and representation. In this paper we propose the notion of covers of query results to reduce such redundancy. While covers can be more succinct than the query results, they nevertheless enjoy desirable properties such as listing representation and constant-delay enumeration of result tuples. 
For a given database and a join or functional aggregate query, the query result can be normalized as a globally consistent database over an acyclic schema. Covers represent one-relational, lossless, linear-size encodings of such normalized databases.

\begin{definition}
\begin{quote}
{\bf borged} \texttt{/b\^orjd/} : Buy One Relation, Get Entire Database!
\end{quote}
\end{definition}

\subparagraph*{Acknowledgements.} This work has received funding from the European Union's Horizon 2020 research and innovation programme under grant agreement 682588. The authors would like to thank Milos Nikolic, Max Schleich, and the anonymous reviewers for their feedback on drafts of this paper, and Yu Tang for inspiring discussions that led to the concept of cover as a relational alternative to factorized representations.

\bibliographystyle{abbrv}
\bibliography{bibliography}
\appendix

\section{Further Preliminaries}
\label{sec:prel_proofs}

We introduce necessary notation for the 
proofs in the following sections.

\smallskip
{\noindent\bf Restrictions of Queries and Databases.}
Given a set $X$ of attributes and a natural join query 
$Q= R_1 \Join \ldots \Join R_n$, 
the $X$-restriction  of $Q$ is defined 
as $Q_X = R_1^X \Join \ldots \Join R_n^X$
where each $R_i^X$ results from $R$ by restricting 
its schema to $X$.
Likewise, we obtain the  
$X$-restriction $\inst{D}_X$ of
a database $\inst{D}$ by projecting each relation
in $\inst{D}$ onto the attributes in $X$.

\section{From Covers to D-Representations}\label{sec:intro_d-reps}

We next give a brief introduction to 
d-representations; for a detailed description, 
we refer the reader to the literature~\cite{FDB:TODS:2015}. We then discuss a translation from covers to d-representations. 

\begin{figure}[h]
\begin{center}

\begin{tabular}{lllll}

\begin{tabular}{c@{\;}c@{\;}}
  \multicolumn{2}{c}{$R_1$} \\
  \toprule
  $A$ & $B$  \\\midrule
  $a_1$ & $b_1$\\
  $a_2$ & $b_1$\\
  $a_3$ & $b_2$\\
  $a_4$ & $b_2$\\  
  \bottomrule
  \\
  \\
  \\
  \\
\end{tabular}

&

\begin{tabular}{c@{\;}c@{\;}}
  \multicolumn{2}{c}{$R_2$} \\
  \toprule
  $B$ & $C$  \\\midrule
  $b_1$ & $c_1$\\
  $b_2$ & $c_1$\\   
  \bottomrule
  \\
  \\
  \\
  \\
  \\
  \\
  \end{tabular}

&

\begin{tabular}{c@{\;}c@{\;}}
  \multicolumn{2}{c}{$R_3$} \\
  \toprule
  $C$ & $D$  \\\midrule
  $c_1$ & $d_1$\\
  $c_1$ & $d_2$\\
  \bottomrule
  \\
  \\
  \\
  \\
  \\
  \\
\end{tabular}

&

\begin{tabular}{c@{\;}c@{\;}c@{\;}c@{\;}c@{\;}c@{\;}}
  \multicolumn{4}{c}{$Q(\inst{D})$} \\
  \toprule
  $A$ & $B$ & $C$ & $D$  \\\midrule 
  {$a_1$} & {$b_1$} & {$c_1$} &  {$d_1$}\\
  {$a_1$} & {$b_1$} & {$c_1$} &  {$d_2$}\\
  {$a_2$} & {$b_1$} & {$c_1$} &  {$d_1$}\\
  {$a_2$} & {$b_1$} & {$c_1$} &  {$d_2$}\\
  {$a_3$} & {$b_2$} & {$c_1$} & {$d_1$}\\
  {$a_3$} & {$b_2$} & {$c_1$} & {$d_2$}\\
  {$a_4$} & {$b_2$} & {$c_1$} & {$d_1$}\\
  {$a_4$} & {$b_2$} & {$c_1$} & {$d_2$}\\  
  \bottomrule
\end{tabular}

&

\begin{tabular}{c@{\;}c@{\;}c@{\;}c@{\;}c@{\;}c@{\;}}
  \multicolumn{4}{c}{$K\subseteq Q(\inst{D})$} \\
  \toprule
  $A$ & $B$ & $C$ & $D$  \\\midrule 
  {$a_1$}  & {$b_1$} & {$c_1$} & {$d_1$}\\
  {$a_2$}  & {$b_1$} & {$c_1$} & {$d_1$}\\
  {$a_3$}  & {$b_2$} & {$c_1$} & {$d_2$}\\
  {$a_4$}  & {$b_2$} & {$c_1$} & {$d_2$}\\
  \bottomrule
  \\
  \\
  \\
  \\
\end{tabular}
\end{tabular}\\[1em]
\end{center}

\begin{center}
\scalebox{0.8}{
\begin{tikzpicture}
     \node (T)  at(-3, -4.5) {$\calT$:};
     \node[draw, ellipse] (1)  at(-1, -4.5) {$B$};

        \node[draw, ellipse] (2) [below left of =1, node distance=2.5cm] {$A,B$};
        \node[draw, ellipse] (3) [below right of =1, node distance=2.5cm] {
        $B,C$};
        
        \node[draw, ellipse] (4) [below of =3, node distance=2cm] {
        $C,D$};

        \draw (1)--(2);
        \draw (1)--(3);
        \draw (3)--(4);
\end{tikzpicture}
}
\hspace{2cm}
\scalebox{0.8}{
\begin{tikzpicture}
     \node (T)  at(-3, -4.5) {$\calT'$:};
      \node (1)  at(-1, -4.5) {$B$};

        \node (2) [below left of =1, node distance=2cm] {$A$};

        \node (3) [below right of =1, node distance=2cm] {$C$};

        \node (4) [below of =3, node distance=1.5cm] {$D$};
        
        \draw (1)--(2);
        \draw (1)--(3);
        \draw (3)--(4);        

        \node (k1) at(0.5, -4.5) {$key(B)=\emptyset\hspace*{1.5em}$};
        \node (k2) [left of =2, node distance=1.4cm] {$key(A) =\{B\}$};
        \node (k3) [right of =3, node distance=1.5cm] {$key(C)= \{B\}$};
        \node (k4) [right of =4, node distance=1.5cm] {$key(D)=\{C\}$};
\end{tikzpicture}
}
\end{center}
\caption{Top row: database $\inst{D}=\{R_1,R_2, R_3\}$, the result $Q(\inst{D})$ of the path query $Q = R_1\Join R_2\Join R_3$, and a cover $K\subseteq Q(\inst{D})$ over the decomposition $\calT$; bottom row: decomposition $\calT$ of $Q$ and an equivalent d-tree $\calT'$.}
\label{fig:tree_dec_d_tree}
\end{figure}

\subsection{D-Representations in a Nutshell}
D-representations are a succinct and lossless representation 
for relational data. A d-representation is a set of named relational algebra expressions 
$\{N_1: =E_1, \ldots , N_n:= E_n\}$, where each
$N_i$ is a unique name (or a {\em pointer}) and each $E_i$  is a relational algebra expression  with unions, Cartesian products, singleton relations, 
i.e., unary relations with one tuple, and name references in place of singleton relations. The size $\sizeof{E}$ of a d-representation $E$
is the number of its singletons. 

We consider a special class of d-representations that encode results of join queries and whose nesting structure is given by  so-called d-trees. In the literature, d-trees are defined as orderings on query variables. We give here an alternative, equivalent definition that is in line with our notion of fractional hypertree decomposition.
Given a query $Q$, a d-tree of $Q$ is a decomposition
of $Q$ where each bag is partitioned into one
attribute $A$, called the {\em bag attribute}, and a 
set of attributes, called the {\em key} of $A$ and
denoted by $\mathit{key}(A)$. There is one bag per distinct attribute $A$ in $Q$.
Each decomposition $\calT$ of a query $Q$ 
can be translated into a d-tree $\calT'$ of $Q$
with $\fhtw{\calT'} \leq \fhtw{\calT}$ 
(Proposition 9.3 
in \cite{FDB:TODS:2015}).
Given a query $Q$, a d-tree $\calT$ of $Q$, and a database 
$\inst{D}$, a d-representation $E$ of $Q(\inst{D})$ over $\calT$ with 
size $\calO(\sizeof{\inst{D}}^{\fhtw{\calT}})$ can be computed 
in time $\softO(\sizeof{\inst{D}}^{\fhtw{\calT}})$ 
(Theorem 7.13 and Proposition 8.2 in \cite{FDB:TODS:2015}).

\begin{example}
We consider the path query 
$Q = R_1(A,B) \Join R_2(B,C) \Join R_3(C,D)$.
Figure~\ref{fig:tree_dec_d_tree} depicts a database with 
relations $R_1$, $R_2$ and $R_3$ and the result of $Q$ 
over the input database $\{R_1, R_2, R_3\}$.  It also shows  
a decomposition $\calT$ of $Q$  and a cover $K$
of the query result over $\calT$. Finally, it depicts
a d-tree $\calT'$ (right below) derived from $\calT$ by using the translation in 
the proof of 
Proposition 9.3 
in \cite{FDB:TODS:2015}.

\begin{figure}[t]
\begin{center}
    \begin{tikzpicture}[xscale=0.3, yscale=0.65]
    
   \node at (0, 0) (u1) {$\cup$};

      \node at (-4, -1) (b1) {$b_1$} edge[-] (u1);
      \node at (4, -1)  (b2) {$b_2$} edge[-] (u1);

      \node at (-4, -2) (p1) {$\times$} edge[-] (b1);
      \node at (4, -2)  (p2) {$\times$} edge[-] (b2);

      \node at (-6, -3) (u2) {$\cup$} edge[-] (p1);
      \node at (-2, -3) (u3) {$\cup$} edge[-] (p1);
      \node at (2, -3)  (u4) {$\cup$} edge[-] (p2);
      \node at (6, -3)  (u5) {$\cup$} edge[-] (p2);

      \node at (-7, -4) (a1) {$a_1$} edge[-] (u2);
      \node at (-5, -4) (a2) {$a_2$} edge[-] (u2);
      \node at (-2, -4) (c1) {$c_1$} edge[-] (u3);
      \node at (5, -4) (a11) {$a_3$} edge[-] (u5);
      \node at (7, -4) (a21) {$a_4$} edge[-] (u5);

      \node at (-2, -5) (u6) {$\cup$} edge[-] (c1);
     \node at (2, -4) (c2) {$c_1$} edge[-] (u4);

      \node at (-3, -6) (d1) {$d_1$} edge[-] (u6);
      \node at (-1, -6) (d2) {$d_2$} edge[-] (u6);

      \draw[dotted,red] (c2) -- (u6);

    \end{tikzpicture}
    \hspace{0.5cm}
\newcolumntype{K}[1]{>{\centering\arraybackslash}p{#1}}
    \begin{tikzpicture}[xscale=0.3, yscale=0.65]
        \node at (-6, 4.5) {
        \begin{tabular}{ K{1.0cm}  K{1.0cm} }
        \multicolumn{2}{c}{$m_B$} \\
     	\toprule   
  	$\mathit{key}(B)$ & $B$ \\
	\toprule
	$()$ & $b_1$\\
	$()$ & $b_2$\\
	\bottomrule
  	\end{tabular}  
	};
	
        \node at (7, 3.8) {
        \begin{tabular}{ K{1.0cm}  K{1.0cm} }
        \multicolumn{2}{c}{$m_A$} \\
     	\toprule 
  	$\mathit{key}(A)$ & $A$ \\
    \toprule 
	$ b_1 $ & $a_1$\\
	$ b_1 $ & $a_2$\\
	$ b_2 $ & $a_3$\\
	$ b_2 $ & $a_4$\\
	\bottomrule
  	\end{tabular}  
	};
	
        \node at (-6, -1) {
      \begin{tabular}{ K{1.0cm}  K{1.0cm} }
        \multicolumn{2}{c}{$m_C$} \\
        \toprule 
  	$\mathit{key}(C)$ & $C$ \\
    \toprule 
	$ b_1 $ & $c_1$\\
	$ b_2 $ & $c_1$\\
	\bottomrule
  	\end{tabular}  
	};
	
        \node at (7, -1) {
      \begin{tabular}{ K{1.0cm}  K{1.0cm} }
        \multicolumn{2}{c}{$m_D$} \\
        \toprule 
  	$\mathit{key}(D)$ & $D$ \\
        \toprule 
	$ c_1$ & $d_1$\\
	$ c_1$ & $d_2$\\
	\bottomrule
  	\end{tabular}  
	};

  \end{tikzpicture}  
\end{center}
\caption{A d-representation encoded as a parse graph (left) and as 
a set of multimaps (right).}
\label{fig:d_rep_d_hashmap_rep}
\end{figure}

D-representations admit encoding as parse graphs and sets of multi-maps.  
Figure \ref{fig:d_rep_d_hashmap_rep} visualizes  
the d-representation  
of the query result from Figure \ref{fig:tree_dec_d_tree} over the 
d-tree $\calT'$ in the forms of a parse graph and of 
multi-maps.
The parse graph follows the structure of the d-tree. 
At the top level we have a union of $B$-values.
Then, given any $B$-value, the $A$-values are independent 
of the values for $C$ and $D$. Therefore, under each $B$-value, 
the $A$-values are represented in a different branch than the values 
for $C$ and $D$.  
Within the branches for $C$ and $D$, 
the values are first grouped by $C$ and then by
$D$. The information on keys is used to share subtrees across branches.
Since the key of attribute $D$ is $C$ only, 
all  $C$-nodes with the same value point to the same union 
of $D$-values.
In our example,
both $c_1$-nodes point to the same set $\{d_1,d_2\}$ of $D$-values.   

The cover $K$ from Figure \ref{fig:tree_dec_d_tree}  
can be mapped immediately to the parse graph: Under each product node, we take a minimum number of combinations of its children to ensure that every value under the product node occurs in one of these combinations.
To enumerate the tuples in the query result, it suffices to choose in turn one branch of each union node and all branches of each product node. For instance, the left product node represents the combinations of $\{a_1,a_2\}$ with $\{d_1,d_2\}$, together with the values $b_1$ and $c_1$. There are four combinations, so four tuples in the result. The first two tuples in the cover represent two of them, yet they are sufficient to recover all these tuples.

The multi-map encoding of a d-representation 
consists of one multi-map for each bag attribute: $m_A$ maps tuples over the attributes in $key(A)$ to (possibly several) values of $A$. Figure~\ref{fig:d_rep_d_hashmap_rep} shows these maps as relations whose columns are distinctly separated into those for the key attributes (the map keys) and the column for the attribute $A$ itself (the map payload). We have,
 for instance, $m_A(b_1) = a_1$ and $m_A(b_1) = a_2$, whereas
 $m_C(b_1) = c_1$. Since  $\mathit{key}(A)=\{B\}$
and there are two $B$-values in the d-representation leading to the 
sets $\{a_1,a_2\}$ and $\{a_3,a_4\}$, respectively, 
 $m_A$ maps the $B$-value $ b_1$ to both $A$-values $a_1$ and $a_2$ and the $B$-value $b_2$ to both $A$-values $a_3$ and $a_4$. 
\end{example}

\subsection{Translating Covers into D-Representations}
\begin{figure}[t]
\begin{center}
\begin{tabular}{|l|}\hline
{\bf cover2factorization} (cover $K$, decomposition $\calT$)\\\hline 
convert $\calT$ into an equivalent d-tree $\calT'$
following Proposition 9.3 
in \cite{FDB:TODS:2015}; \\  
$\LET\STAB {\bf V}$ be the set of attributes in $\calT'$;\\ 
$\FOREACH$ attribute $A\in{\bf V}\STAB\DO\STAB$ \\
$\TAB\TAB$ create multi-map $m_A: \prod_{X\in\mathit{key}(A)} \mbox{dom}(X) \mapsto \mbox{dom}(A)$;\\

  $\FOREACH$ tuple $t \in K \STAB\DO\STAB$ \\
 
 $\TAB\TAB\FOREACH$ attribute  $A\in{\bf V} \STAB\DO$ \\
 
 $\TAB\TAB\TAB\TAB$ insert assignment $\pi_{key(A)} t \mapsto \pi_A t$ into $m_A$;\\
  
 $\RETURN$ $\{m_A\}_{A\in{\bf V}}$;\\\hline
\end{tabular}
\end{center}
\caption{Translating a cover $K$ over a decomposition $\calT$ into an equivalent d-representation.}
\label{fig:factorize_from_cover}
\end{figure}

Figure~\ref{fig:factorize_from_cover} gives an algorithm that constructs an equivalent d-representation from a cover over a decomposition. Both the cover $K$ and the output d-representation are for the same query result $Q(\inst{D})$ of a query $Q$. The decomposition $\calT$ is for the query $Q$.

The algorithm creates a multi-map for each attribute $A$ and populates 
it with assignments of tuples over the keys of $A$ to the values of $A$ as encountered in the tuples of the cover.

\begin{example} 
We consider the cover $K$ over the decomposition
$\calT$ in 
 Figure~\ref{fig:tree_dec_d_tree} and the d-tree $\calT'$  equivalent to $\calT$. Following the algorithm in 
 Figure~\ref{fig:factorize_from_cover}, the cover $K$
 is translated into a d-representation over $\calT'$ as follows. 
After reading the first tuple $(a_1,b_1,c_1,d_1)$,  
 we add $() \mapsto b_1$ to $m_B$, 
 $b_1 \mapsto  a_1$ to $m_A$,
 $b_1 \mapsto  c_1$ to $m_C$, and
 $c_1 \mapsto  d_1$ to $m_D$,
 where $()$ means the empty tuple. 
 After processing the second tuple
 $(a_2,b_1,c_1,d_1)$, we only change 
 $m_A$ by adding
$b_1 \mapsto  a_2$ to $m_A$.
After the third tuple $(a_3,b_2,c_1,d_2)$, we add the following new assignments:
$() \mapsto b_2$ to $m_B$, 
$b_2 \mapsto  a_3$ to $m_A$,
$b_2 \mapsto  c_1$ to $m_C$, and
$c_1 \mapsto  d_2$ to $m_D$.
After reading the last tuple $(a_4,b_2,c_1,d_2)$, we add the new 
assignment
$b_2 \mapsto  a_4$ to $m_A$.
\end{example}

\section{Cover-Join Plans Computing Covers of Non-Minimum Size}
\label{app:cover_join_plan_non-minimum}

\begin{example}\label{ex:cover_join_plan_non_minimal}
We consider the acyclic natural join query 
$Q=R_1(A,B)\Join R_2(B,C)\Join R_3(C,D)$,
the database $\inst{D}=\{R_1, R_2, R_3\}$ globally consistent with respect to $Q$,
 and the join tree $\calJ=R_1-R_2-R_3$. The relations $R_i$ are depicted below. 

\begin{center}
\begin{tabular}{lllllll}

\begin{tabular}{c@{\;}c@{\;}}
  \multicolumn{2}{c}{$R_1$} \\
  \toprule
  $A$ & $B$  \\\midrule
  $a_1$ & $b_1$\\
  $a_2$ & $b_1$\\
  $a_3$ & $b_1$\\
  \bottomrule
\\
\end{tabular}

&

\begin{tabular}{c@{\;}c@{\;}}
  \multicolumn{2}{c}{$R_2$} \\
  \toprule
  $B$ & $C$  \\\midrule
  $b_1$ & $c_1$\\
  $b_1$ & $c_2$\\
  \bottomrule
\\
\\
\end{tabular}

&

\begin{tabular}{c@{\;}c@{\;}}
  \multicolumn{2}{c}{$R_3$} \\
  \toprule
  $C$ & $D$  \\\midrule
  $c_1$ & $d_1$\\
  $c_2$ & $d_1$\\
  $c_2$ & $d_2$\\
  \bottomrule
  \\
\end{tabular}

& 

\begin{tabular}{c@{\;}c@{\;}c@{\;}c@{\;}}
  \multicolumn{4}{c}{$K$} \\
  \toprule
  $A$ & $B$   & $C$   & $D$  \\\midrule
  $a_1$ & $b_1$ & $c_1$ & $d_1$\\
  $a_2$ & $b_1$ & $c_2$ & $d_1$\\
  $a_3$ & $b_1$ & $c_2$ & $d_2$\\
  \bottomrule
  \\
\end{tabular}

& 

\begin{tabular}{c@{\;}c@{\;}c@{\;}}
  \multicolumn{3}{c}{$K_{1,2}$} \\
  \toprule
  $A$ & $B$   & $C$   \\\midrule
  $a_1$ & $b_1$ & $c_1$ \\
  $a_2$ & $b_1$ & $c_1$ \\
  $a_3$ & $b_1$ & $c_2$ \\
  \bottomrule
  \\
\end{tabular}

& 

\begin{tabular}{c@{\;}c@{\;}c@{\;}c@{\;}}
  \multicolumn{4}{c}{$K'$} \\
  \toprule
$A$ & $B$   & $C$   & $D$   \\\midrule
$a_1$ &  $b_1$ & $c_1$ & $d_1$ \\
$a_2$ &  $b_1$ & $c_1$ & $d_1$ \\
$a_3$ &  $b_1$ & $c_2$ & $d_1$ \\
$a_3$ &  $b_1$ & $c_2$ & $d_2$ \\
  \bottomrule
\end{tabular}
\end{tabular}
\end{center}

\noindent 
The relation $K$ is a cover of the query result $Q(\inst{D})$
over the decomposition $\calT$ corresponding to $\calJ$. 
It follows from Proposition \ref{prop:size_bounds_hypergraph} 
that every cover of $Q(\inst{D})$ over  $\calT$ 
must have size 
at least three. Hence, $K$ is a minimum-sized 
cover of $Q(\inst{D})$ over  $\calT$.

We take the cover-join plan 
$(R_1(A,B) \corejoin R_2(B,C)) \corejoin R_3(C,D)$
over  $\calJ$ and assume that the cover-join operator 
computes for each two input relations $R$ and $R'$,
a minimum-sized cover of $R \Join R'$ over the 
decomposition with 
bags $\sign{R}$ and $\sign{R'}$.
Then, a possible output of the sub-plan 
$R_1(A,B) \corejoin R_2(B,C)$
is the relation $K_{1,2}$. A possible result 
of the cover-join of the latter relation with $R_3$ is 
the relation $K'$. Although $K'$ is a valid cover 
of $Q(\inst{D})$ over $\calT$, it is not a minimum-sized
cover of $Q(\inst{D})$ over $\calT$. 
\end{example}

\section{Covers for Equi-Join Queries}
\label{sec:covers_equi_join}

In this section, we extend the class of queries from natural join queries to arbitrary equi-join queries, whose relation  symbols may map to the same database relation.

\smallskip 

{\noindent\bf Equi-join Queries.} 
An equi-join query, aka full conjunctive query, has the form 
$Q = \sigma_\psi(R_1(S_1) \times \ldots \times R_n(S_n))$, 
where each $R_i$ is a relation symbol with schema $S_i$ and $\psi$ is a conjunction of equalities of the form $A_1=A_2$ with attributes $A_1$ and $A_2$. We require that all relation symbols in the query as well as all attributes occurring in the schemas of the relation symbols are distinct. We assume that each query comes with mappings 
$(\lambda, \{\mu_{R_i}\}_{i \in [n]})$, called the signature mappings
of $Q$,  where $\lambda$
maps the relation symbols in $Q$ to relation symbols in the schema of the database 
and each $\mu_{R_i}$ is a bijective mapping from the attributes of $R_i$ 
to the attributes of $\lambda(R_i)$. 
Since we do not require $\lambda$ to be injective, distinct relation symbols
in $Q$ might refer to the same relation in the database
(cf. Example \ref{ex:hypergraph_equi_join}).
The joins in equi-join queries are expressed by the equalities in $\psi$. 
The transitive closure $\psi^+$ of $\psi$ under the equality on attributes defines the attribute equivalence classes: The equivalence class ${\cal A}$ of an attribute $A$ is the set consisting of $A$ and of all attributes equal to $A$ in $\psi^+$. 
For a set $S$ of attributes, $S^+$ denotes the set of attributes transitively equivalent to those in $S$.

Hypergraphs and hypertree decompositions of equi-join queries are defined
just like for natural join queries with the additional requirement 
that each hyperedge or bag includes all
equivalent attributes for each contained attribute. More formally,
the hypergraph of an equi-jon query $Q$ consists of one node $A$ for each attribute 
$A$ in $Q$ and one edge $\sign{R}^+$ for each 
relation symbol $R\in\sign{Q}$. Similarly, 
a hypertree decomposition $\calT$ (of the hypergraph $H$) of $Q$ is a pair
$(T,\chi)$, where $T$ is a tree and $\chi$ is a function mapping each node in $T$ to a set $V^+$ where $V$ is a subset of the nodes of $H$. All other notions and notations
introduced in Section~\ref{sec:preliminaries}
as well as the definitions of result preservation and covers in Section~\ref{sec:cores_join_results} carry over to equi-join queries without any change.

\begin{example}\label{ex:hypergraph_equi_join} 
We consider the equi-join query 
$Q=\sigma_\psi(R_1(A_1,A_2)\times R_2(A_3,A_4))$, 
where $\psi$ consists of the equality $A_2=A_3$. 
Let $(\lambda, \{\mu_{R_1}, \mu_{R_2}\})$ be the signature mappings 
of the query. 
Assume that $\lambda(R_1(A_1,A_2)) = \lambda(R_2(A_3,A_4)) = R(A,B)$,
$\mu_{R_1}(A_1) = \mu_{R_2}(A_4) = A$ and
 $\mu_{R_1}(A_2) = \mu_{R_2}(A_3) = B$,
 i.e., both relation symbols   
are mapped to the same relation symbol $R(A,B)$, 
attributes $A_1$ and $A_4$ are mapped
to attribute $A$ and attributes $A_2$ and $A_3$ are mapped to attribute
$B$. Let $\inst{D} = \{R\}$ where $R$ is defined as in  
Figure~\ref{fig:hypergraph_equi_join}.
The figure  depicts in the top row (besides $R$)
the query result $Q(\inst{D})$, a cover 
$K$ of the query result over the decomposition $\calT$ 
depicted in the bottom row and two relations 
$R_1', R_2'$ obtained from $R$ by the application  
of Proposition \ref{prop:equi_to_natural} (given below). 
The bottom row shows 
the hypergraph
of $Q$, the hypergraph of $Q(\inst{D})$ over the attribute sets
 $\{\{A_1,A_2,A_3\},\{A_2,A_3,A_4\}\}$, and a minimal edge cover $M$
 of the latter hypergraph with $\relof{M} = K$. 
 \end{example}

\begin{figure}[h]

\begin{center}
\begin{tabular}{lllll}

\begin{tabular}{c@{\;}c@{\;}}
  \multicolumn{2}{c}{$R$} \\
  \toprule
  $A$ & $B$  \\\midrule
  $a_1$ & $b_1$\\
  $a_2$ & $b_1$\\
  $a_1$ & $b_2$\\
  $a_2$ & $b_2$\\
  \bottomrule
  \\
  \\
  \\
  \\
\end{tabular}

&

\begin{tabular}{c@{\;}c@{\;}c@{\;}c@{\;}}
  \multicolumn{4}{c}{$Q(\inst{D})$} \\
  \toprule
  $A_1$ & $A_2$ & $A_3$ & $A_4$   \\\midrule 
  {\color{red}$a_1$} & {\color{red}$b_1$} & {\color{red}$b_1$} & {\color{red}$a_1$} 
  \\
  {\color{blue}$a_1$} & {\color{blue}$b_1$} & {\color{blue}$b_1$} & {\color{blue}$a_2$} \\
  {\color{goodgreen}$a_2$} & {\color{goodgreen}$b_1$} & {\color{goodgreen}$b_1$} & {\color{goodgreen}$a_1$}\\
  $a_2$ & $b_1$ & $b_1$ & $a_2$ \\
  {\color{red}$a_1$} & {\color{red}$b_2$} & {\color{red}$b_2$} & {\color{red}$a_1$}
   \\
  {\color{blue}$a_1$} & {\color{blue}$b_2$} & {\color{blue}$b_2$} & {\color{blue}$a_2$} \\
  {\color{goodgreen}$a_2$} & {\color{goodgreen}$b_2$} & {\color{goodgreen}$b_2$} & {\color{goodgreen}$a_1$} \\
  $a_2$ & $b_2$ & $b_2$ & $a_2$\\  
  \bottomrule
\end{tabular}

&

\begin{tabular}{c@{\;}c@{\;}c@{\;}c@{\;}}
  \multicolumn{4}{c}{$K = \relof{M}$} \\
  \toprule
  $A_1$ & $A_2$ & $A_3$ & $A_4$  \\\midrule 
  {\color{red}$a_1$} & {\color{red}$b_1$} & {\color{red}$b_1$} & {\color{red}$a_1$} \\
   $a_2$ & $b_1$ & $b_1$ & $a_2$ \\
  {\color{blue}$a_1$} & {\color{blue}$b_2$} & {\color{blue}$b_2$} & {\color{blue}$a_2$} \\
  {\color{goodgreen}$a_2$} & {\color{goodgreen}$b_2$} & {\color{goodgreen}$b_2$} & {\color{goodgreen}$a_1$}\\  
  \bottomrule
  \\
  \\
  \\
  \\
\end{tabular}

&

\begin{tabular}{c@{\;}c@{\;}c@{\;}}
  \multicolumn{3}{c}{$R_1'$} \\
  \toprule
  $A_1$ & $A_2$ & $A_3$  \\\midrule
  $a_1$ & $b_1$ & $b_1$\\
  $a_2$ & $b_1$ & $b_1$\\
  $a_1$ & $b_2$ & $b_2$ \\
  $a_2$ & $b_2$ & $b_2$ \\
  \bottomrule
  \\
  \\
  \\
  \\
\end{tabular}

&

\begin{tabular}{c@{\;}c@{\;}c@{\;}}
  \multicolumn{3}{c}{$R_2'$} \\
  \toprule
  $A_2$ & $A_3$ & $A_4$  \\\midrule
  $b_1$ & $b_1$ & $a_1$\\
  $b_1$ & $b_1$ & $a_2$\\
  $b_2$ & $b_2$ & $a_1$ \\
  $b_2$ & $b_2$ & $a_2$ \\
  \bottomrule
  \\
  \\
  \\
  \\
\end{tabular}

\end{tabular}\\[1em]

\scalebox{0.8}{
\begin{tikzpicture}

\node at(0,-3.2) (joinGraph){Hypergraph of query};
\node at(0,-3.6) {\& decomposition  $\calT$};

\node (1)  at(-1, -4.7) {$A_1$};

\node (2) [below of =1, node distance=1.1cm] {$A_2,A_3$};

\node (3) [below of =2, node distance=1.1cm] {$A_4$};

    \begin{scope}[fill opacity=0.8]
    \draw
       ($(1)+(0,0.4)$) 
         to[out=0,in=0] ($(2) + (0.4,-0.3)$)
        to[out=0,in=180] ($(2) + (0,-0.3)$)
        to[out=180,in=0] ($(2) + (-0.4,-0.3)$)
        to[out=180,in=180] ($(1) + (0,0.4)$);

         \draw
       ($(3)+(0,-0.4)$) 
         to[out=0,in=0] ($(2) + (0.4,0.3)$)
        to[out=0,in=180] ($(2) + (0,0.3)$)
        to[out=180,in=0] ($(2) + (-0.4,0.3)$)
        to[out=180,in=180] ($(3) + (0,-0.4)$);       
        
     \end{scope}

\node (A1)  at(1, -4.5) {$A_1$};
\node (A2)  [below of =A1, node distance=0.5cm] {$A_2,A_3$};

\node (A23) [below of =A1, node distance=1.7cm] {$A_2,A_3$};
\node (A231) [below of =A23, node distance=0.5cm] {$A_4$};

\draw [decorate,decoration={brace,amplitude=6pt,raise=0pt}] 
(0.5,-4.4) -- (1.5,-4.4);
\draw [decorate,decoration={brace,amplitude=6pt,raise=0pt,mirror}] 
(0.5,-5.2) -- (1.5,-5.2);

\draw [decorate,decoration={brace,amplitude=6pt,raise=0pt}] 
(0.5,-6) -- (1.5,-6);
\draw [decorate,decoration={brace,amplitude=6pt,raise=0pt,mirror}] 
(0.5,-6.9) -- (1.5,-6.9);

\draw [decorate, segment length=20] (1,-5.4) -- (1,-5.8);

\node at(5,-3.2) (joinGraph){Hypergraph $H$ of $Q(\inst{D})$ over $\sign{\calT}$};   
\node at(2.7,-4.05) (11) {$a_1$};
\node (12') [right of =11,node distance=0.5cm] {$b_1$};
\node (12) [right of =12',node distance=0.5cm] {$b_1$};

\node (21) [below of =11,node distance=1.4cm] {$a_2$};
\node (22') [right of =21,node distance=0.5cm] {$b_1$};
\node (22) [right of =22',node distance=0.5cm] {$b_1$};

\node (31) [below of =21,node distance=1.4cm] {$a_1$};
\node (32') [right of =31,node distance=0.5cm] {$b_2$};
\node (32) [right of =32',node distance=0.5cm] {$b_2$};

\node (31new) [below of =31,node distance=1.4cm] {$a_2$};
\node (32new') [right of =31new,node distance=0.5cm] {$b_2$};
\node (32new) [right of =32new',node distance=0.5cm] {$b_2$};

\tikzstyle{background}=[rectangle,
                                                rounded corners=1mm]

\begin{pgfonlayer}{background}                   
   \node [background,
                    fit=(11)(12)(12'),
                    fill=gray!15,inner sep= -1] {};        
                    
   \node [background,
                    fit=(21)(22)(22'),
                    fill=gray!15,inner sep= -1] {};
                    
   \node [background,
                    fit=(31)(32)(32'),
                    fill=gray!15,inner sep= -1] {};                                                          

   \node [background,
                    fit=(31new)(32new)(32new'),
                    fill=gray!15,inner sep= -1] {};                                                          
\end{pgfonlayer}


\node (71) [right of =11,node distance=4cm]  {$b_1$};
\node (71') [right of =71,node distance=0.5cm] {$b_1$};
\node (72) [right of =71',node distance=0.5cm] {$a_1$};

\node (71new) [below of =71,node distance=1.4cm]  {$b_1$};
\node (71new') [right of =71new,node distance=0.5cm]  {$b_1$};
\node (72new) [right of =71new',node distance=0.5cm] {$a_2$};

\node (81) [below of =71new,node distance=1.4cm] {$b_2$};
\node (81') [right of =81,node distance=0.5cm] {$b_2$};
\node (82) [right of =81',node distance=0.5cm] {$a_1$};

\node (91) [below of =81,node distance=1.4cm] {$b_2$};
\node (91') [right of =91,node distance=0.5cm] {$b_2$};
\node (92) [right of =91',node distance=0.5cm] {$a_2$};

\tikzstyle{background}=[rectangle,
                                                rounded corners=1mm]

\begin{pgfonlayer}{background}                   
   \node [background,
                    fit=(71)(71')(72),
                    fill=gray!15,inner sep= -1] {};        
                    
   \node [background,
                    fit=(71new)(71new')(72new),
                    fill=gray!15,inner sep= -1] {};                            
                    
   \node [background,
                    fit=(81)(81')(82),
                    fill=gray!15,inner sep= -1] {};
                    
 \node [background,
                    fit=(91)(91')(92),
                    fill=gray!15,inner sep= -1] {};  
                     
\end{pgfonlayer}

   
    \begin{scope}[fill opacity=0.8]
    \draw[color=red]
       ($(11)+(0,0.4)$) 
        to[out=0,in=180] ($(12) + (0,0.4)$)
        to[out=0,in=180] ($(71)+(0,0.4)$)      
        to[out=0,in=180] ($(72)+(0,0.4)$)            
        to[out=0,in=90] ($(72)+(0.4,0)$)                     
        to[out=270,in=0] ($(72)+(0,-0.4)$)                        
        to[out=180,in=0] ($(71)+(0,-0.4)$)                                       
        to[out=180,in=0] ($(12)+(0,-0.4)$) 
        to[out=180,in=0] ($(11)+(0,-0.4)$)
        to[out=180,in=270] ($(11)+(-0.4,0)$)    
        to[out=90,in=180] ($(11)+(0,0.4)$);

    \draw[color=blue]
       ($(11)+(0,0.6)$) 
        to[out=0,in=180] ($(12) + (0,0.6)$)
        to[out=0,in=180] ($(71new)+(0,0.4)$)      
        to[out=0,in=180] ($(72new)+(0,0.4)$)            
        to[out=0,in=90] ($(72new)+(0.4,0)$)                     
        to[out=270,in=0] ($(72new)+(0,-0.4)$)                        
        to[out=180,in=0] ($(71new)+(0,-0.4)$)                        
        to[out=180,in=0] ($(12)+(0,-0.6)$) 
        to[out=180,in=0] ($(11)+(0,-0.6)$)
        to[out=180,in=270] ($(11)+(-0.6,0)$)    
        to[out=90,in=180] ($(11)+(0,0.60)$);

      \draw[color=goodgreen]
       ($(21)+(0,0.4)$) 
        to[out=0,in=180] ($(22) + (0,0.4)$)
        to[out=0,in=180] ($(71)+(0,0.6)$)      
        to[out=0,in=180] ($(72)+(0,0.6)$)            
        to[out=0,in=90] ($(72)+(0.6,0)$)                     
        to[out=270,in=0] ($(72)+(0,-0.6)$)                        
        to[out=180,in=0] ($(71)+(0,-0.6)$)                        
        to[out=180,in=0] ($(22)+(0,-0.4)$) 
        to[out=180,in=0] ($(21)+(0,-0.4)$)
        to[out=180,in=270] ($(21)+(-0.4,0)$)   
        to[out=90,in=180] ($(21)+(0,0.4)$);

      \draw
       ($(21)+(0,0.6)$) 
        to[out=0,in=180] ($(22) + (0,0.6)$)
        to[out=0,in=180] ($(71new)+(0,0.6)$)      
        to[out=0,in=180] ($(72new)+(0,0.6)$)            
        to[out=0,in=90] ($(72new)+(0.6,0)$)                     
        to[out=270,in=0] ($(72new)+(0,-0.6)$)                        
        to[out=180,in=0] ($(71new)+(0,-0.6)$)                        
        to[out=180,in=0] ($(22)+(0,-0.6)$) 
        to[out=180,in=0] ($(21)+(0,-0.6)$)
        to[out=180,in=270] ($(21)+(-0.6,0)$)   
        to[out=90,in=180] ($(21)+(0,0.6)$);

     
      \draw[color=red]
       ($(31)+(0,0.4)$) 
        to[out=0,in=180] ($(32) + (0,0.4)$)
        to[out=0,in=180] ($(81)+(0,0.4)$)      
        to[out=0,in=180] ($(82)+(0,0.4)$)            
        to[out=0,in=90] ($(82)+(0.4,0)$)                       
        to[out=270,in=0] ($(82)+(0,-0.4)$)                        
        to[out=180,in=0] ($(81)+(0,-0.4)$)                        
        to[out=180,in=0] ($(32)+(0,-0.4)$) 
        to[out=180,in=0] ($(31)+(0,-0.4)$)
        to[out=180,in=270] ($(31)+(-0.4,0)$)   
        to[out=90,in=180] ($(31)+(0,0.4)$);

      \draw[color=blue]
       ($(31)+(0,0.6)$) 
        to[out=0,in=180] ($(32) + (0,0.6)$)
        to[out=0,in=180] ($(91)+(0,0.4)$)      
        to[out=0,in=180] ($(92)+(0,0.4)$)            
        to[out=0,in=90] ($(92)+(0.4,0)$)                     
        to[out=270,in=0] ($(92)+(0,-0.4)$)                        
        to[out=180,in=0] ($(91)+(0,-0.4)$)                        
        to[out=180,in=0] ($(32)+(0,-0.6)$) 
        to[out=180,in=0] ($(31)+(0,-0.6)$)
        to[out=180,in=270] ($(31)+(-0.6,0)$)    
        to[out=90,in=180] ($(31)+(0,0.6)$);

     \draw[color=goodgreen]
       ($(31new)+(0,0.4)$) 
        to[out=0,in=180] ($(32new) + (0,0.4)$)
        to[out=0,in=180] ($(81)+(0,0.6)$)      
        to[out=0,in=180] ($(82)+(0,0.6)$)            
        to[out=0,in=90] ($(82)+(0.6,0)$)                     
        to[out=270,in=0] ($(82)+(0,-0.6)$)                        
        to[out=180,in=0] ($(81)+(0,-0.6)$)                        
        to[out=180,in=0] ($(32new)+(0,-0.4)$) 
        to[out=180,in=0] ($(31new)+(0,-0.4)$)
        to[out=180,in=270] ($(31new)+(-0.4,0)$)    
        to[out=90,in=180] ($(31new)+(0,0.4)$);

             \draw
       ($(31new)+(0,0.6)$) 
        to[out=0,in=180] ($(32new) + (0,0.6)$)
        to[out=0,in=180] ($(91)+(0,0.6)$)      
        to[out=0,in=180] ($(92)+(0,0.6)$)            
        to[out=0,in=90] ($(92)+(0.6,0)$)                     
        to[out=270,in=0] ($(92)+(0,-0.6)$)                        
        to[out=180,in=0] ($(91)+(0,-0.6)$)                        
        to[out=180,in=0] ($(32new)+(0,-0.6)$) 
        to[out=180,in=0] ($(31new)+(0,-0.6)$)
        to[out=180,in=270] ($(31new)+(-0.6,0)$)    
        to[out=90,in=180] ($(31new)+(0,0.6)$);

     \end{scope}

\node at(11.7,-3.2)(edgeSubset){Minimal edge cover $M$ of $H$};
\node at(9.4,-4.05) (11) {$a_1$};
\node (12') [right of =11,node distance=0.5cm] {$b_1$};
\node (12) [right of =12',node distance=0.5cm] {$b_1$};

\node (21) [below of =11,node distance=1.4cm] {$a_2$};
\node (22') [right of =21,node distance=0.5cm] {$b_1$};
\node (22) [right of =22',node distance=0.5cm] {$b_1$};

\node (31) [below of =21,node distance=1.4cm] {$a_1$};
\node (32') [right of =31,node distance=0.5cm] {$b_2$};
\node (32) [right of =32',node distance=0.5cm] {$b_2$};

\node (31new) [below of =31,node distance=1.4cm] {$a_2$};
\node (32new') [right of =31new,node distance=0.5cm] {$b_2$};
\node (32new) [right of =32new',node distance=0.5cm] {$b_2$};

\tikzstyle{background}=[rectangle,
                                                rounded corners=1mm]

\begin{pgfonlayer}{background}                   
   \node [background,
                    fit=(11)(12)(12'),
                    fill=gray!15,inner sep= -1] {};        
                    
   \node [background,
                    fit=(21)(22)(22'),
                    fill=gray!15,inner sep= -1] {};
                    
   \node [background,
                    fit=(31)(32)(32'),
                    fill=gray!15,inner sep= -1] {};                                                          

   \node [background,
                    fit=(31new)(32new)(32new'),
                    fill=gray!15,inner sep= -1] {};                                                          
\end{pgfonlayer}


\node (71) [right of =11,node distance=4cm]  {$b_1$};
\node (71') [right of =71,node distance=0.5cm] {$b_1$};
\node (72) [right of =71',node distance=0.5cm] {$a_1$};

\node (71new) [below of =71,node distance=1.4cm]  {$b_1$};
\node (71new') [right of =71new,node distance=0.5cm]  {$b_1$};
\node (72new) [right of =71new',node distance=0.5cm] {$a_2$};

\node (81) [below of =71new,node distance=1.4cm] {$b_2$};
\node (81') [right of =81,node distance=0.5cm] {$b_2$};
\node (82) [right of =81',node distance=0.5cm] {$a_1$};

\node (91) [below of =81,node distance=1.4cm] {$b_2$};
\node (91') [right of =91,node distance=0.5cm] {$b_2$};
\node (92) [right of =91',node distance=0.5cm] {$a_2$};

\tikzstyle{background}=[rectangle,
                                                rounded corners=1mm]

\begin{pgfonlayer}{background}                   
   \node [background,
                    fit=(71)(71')(72),
                    fill=gray!15,inner sep= -1] {};        
                    
   \node [background,
                    fit=(71new)(71new')(72new),
                    fill=gray!15,inner sep= -1] {};                            
                    
   \node [background,
                    fit=(81)(81')(82),
                    fill=gray!15,inner sep= -1] {};
                    
 \node [background,
                    fit=(91)(91')(92),
                    fill=gray!15,inner sep= -1] {};  
                     
\end{pgfonlayer}

   
    \begin{scope}[fill opacity=0.8]
    \draw[color=red]
       ($(11)+(0,0.4)$) 
        to[out=0,in=180] ($(12) + (0,0.4)$)
        to[out=0,in=180] ($(71)+(0,0.4)$)      
        to[out=0,in=180] ($(72)+(0,0.4)$)            
        to[out=0,in=90] ($(72)+(0.4,0)$)                     
        to[out=270,in=0] ($(72)+(0,-0.4)$)                        
        to[out=180,in=0] ($(71)+(0,-0.4)$)                                       
        to[out=180,in=0] ($(12)+(0,-0.4)$) 
        to[out=180,in=0] ($(11)+(0,-0.4)$)
        to[out=180,in=270] ($(11)+(-0.4,0)$)    
        to[out=90,in=180] ($(11)+(0,0.4)$);

      \draw
       ($(21)+(0,0.4)$) 
        to[out=0,in=180] ($(22) + (0,0.4)$)
        to[out=0,in=180] ($(71new)+(0,0.4)$)      
        to[out=0,in=180] ($(72new)+(0,0.4)$)            
        to[out=0,in=90] ($(72new)+(0.4,0)$)                     
        to[out=270,in=0] ($(72new)+(0,-0.4)$)                        
        to[out=180,in=0] ($(71new)+(0,-0.4)$)                        
        to[out=180,in=0] ($(22)+(0,-0.4)$) 
        to[out=180,in=0] ($(21)+(0,-0.4)$)
        to[out=180,in=270] ($(21)+(-0.4,0)$)   
        to[out=90,in=180] ($(21)+(0,0.4)$);


      \draw[color=blue]
       ($(31)+(0,0.4)$) 
        to[out=0,in=180] ($(32) + (0,0.4)$)
        to[out=0,in=180] ($(91)+(0,0.4)$)      
        to[out=0,in=180] ($(92)+(0,0.4)$)            
        to[out=0,in=90] ($(92)+(0.4,0)$)                     
        to[out=270,in=0] ($(92)+(0,-0.4)$)                        
        to[out=180,in=0] ($(91)+(0,-0.4)$)                        
        to[out=180,in=0] ($(32)+(0,-0.4)$) 
        to[out=180,in=0] ($(31)+(0,-0.4)$)
        to[out=180,in=270] ($(31)+(-0.4,0)$)    
        to[out=90,in=180] ($(31)+(0,0.4)$);

     \draw[color=goodgreen]
       ($(31new)+(0,0.4)$) 
        to[out=0,in=180] ($(32new) + (0,0.4)$)
        to[out=0,in=180] ($(81)+(0,0.4)$)      
        to[out=0,in=180] ($(82)+(0,0.4)$)            
        to[out=0,in=90] ($(82)+(0.4,0)$)                     
        to[out=270,in=0] ($(82)+(0,-0.4)$)                        
        to[out=180,in=0] ($(81)+(0,-0.4)$)                        
        to[out=180,in=0] ($(32new)+(0,-0.4)$) 
        to[out=180,in=0] ($(31new)+(0,-0.4)$)
        to[out=180,in=270] ($(31new)+(-0.4,0)$)    
        to[out=90,in=180] ($(31new)+(0,0.4)$);

     \end{scope}

\end{tikzpicture}
}
\end{center}
\caption{
Top row: database $\inst{D}=\{R\}$, the result $Q(\inst{D})$ of the query 
$Q$ in Example~\ref{ex:hypergraph_equi_join}, a cover $K$ of 
$Q(\inst{D})$ over $\calT$, and relations $R_1',R_2'$ obtained 
from $R$ by the application of Proposition \ref{prop:equi_to_natural}; 
bottom row: the hypergraph of $Q$,  a decomposition $\calT$ of $Q$, the hypergraph of $Q(\inst{D})$ over the attribute sets $\sign{\calT}$, and a minimal edge cover $M$ of this hypergraph.}
\label{fig:hypergraph_equi_join}
\end{figure}

\smallskip
{\noindent\bf Adaption of the results on covers to equi-join queries.} 
Due to the following two propositions, all results on covers in Sections 
\ref{sec:cores_join_results} and  \ref{sec:core-plans}
carry over to equi-join queries.

\begin{proposition}\label{prop:equi_to_natural}
Given an equi-join query $Q$, a decomposition $\calT$ of $Q$, and 
a database $\inst{D}$, there exist a natural join query 
$Q'$ and a database $\inst{D}'$ such that: $Q'(\inst{D}')=Q(\inst{D})$, 
$Q'$ has the decomposition $\calT$ and
can be constructed in time  $\calO(|Q|)$, and $\inst{D}'$ can be constructed in time 
$\calO(|\inst{D}|)$.
\end{proposition}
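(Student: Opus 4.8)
\emph{Proof plan.} The plan is to eliminate the selection $\sigma_\psi$ by folding the \emph{within-relation} equalities of $\psi$ into the data and encoding the \emph{across-relation} equalities as shared attribute names, so that a plain natural join recovers them. First I would set $\sign{R_i'} = S_i^+$ for each relation symbol $R_i(S_i)$ of $Q$ and define $R_i'$ in three steps: rename the attributes of $\lambda(R_i)$ back to $S_i$ via $\mu_{R_i}^{-1}$ to obtain a relation $\tilde R_i$ over $S_i$; discard every tuple $t\in\tilde R_i$ with $t(A)\neq t(B)$ for some $\psi^+$-equivalent $A,B\in S_i$; and extend each surviving tuple to schema $S_i^+$ by copying, for every $A\in S_i^+\setminus S_i$, the value $t(B)$ of some $B\in S_i$ equivalent to $A$ (such a $B$ exists by definition of $S_i^+$, and after the filtering step the value does not depend on the choice of $B$). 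Then I would take $Q' = R_1'(S_1^+)\Join\cdots\Join R_n'(S_n^+)$ and $\inst{D}' = \{R_1',\dots,R_n'\}$.

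Next I would check schemas, structure, and running times. Since every attribute occurring in $\psi$ belongs to some $S_j$, we have $S_i^+\subseteq\bigcup_j S_j$, hence $\att(Q')=\bigcup_i S_i^+=\bigcup_i S_i=\att(Q)$, so $Q$ and $Q'$ have the same output schema. The hypergraph of the equi-join query $Q$ has exactly the edges $S_1^+,\dots,S_n^+$, which are precisely the edges $\sign{R_1'},\dots,\sign{R_n'}$ of the hypergraph of the natural join query $Q'$; the two hypergraphs coincide, and since the bag restrictions of a decomposition only depend on this hypergraph, every (fractional hypertree) decomposition of $Q$ — in particular $\calT$ — is one of $Q'$. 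Computing the equivalence classes of $\psi^+$ and the sets $S_i^+$ takes time $\calO(|Q|)$ and determines $Q'$; each $R_i'$ is produced by one scan of $\lambda(R_i)$ with $\calO(1)$ work per tuple (the query is fixed), so $\inst{D}'$ is built in time $\calO(|\inst{D}|)$.

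Then I would prove correctness, $Q'(\inst{D}') = Q(\inst{D})$, both relations over $\att(Q)$. For the containment $Q(\inst{D})\subseteq Q'(\inst{D}')$: if $t\in Q(\inst{D})$ then $t\models\psi$, hence $t\models\psi^+$, so for each $i$ the tuple $\pi_{S_i}t$ survives the filtering step, and, as $t$ equalizes all $\psi^+$-equivalent attributes, its extension to $S_i^+$ is exactly $\pi_{S_i^+}t$; thus $\pi_{S_i^+}t\in R_i'$ for all $i$ and $t\in Q'(\inst{D}')$. Conversely, if $t\in Q'(\inst{D}')$ then $\pi_{S_i^+}t\in R_i'$, so the $\mu_{R_i}$-renaming of $\pi_{S_i}t$ lies in $\lambda(R_i)$; to see $t\models\psi$, take an equality $A=B$ of $\psi$ with $A\in S_i$, $B\in S_j$. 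If $i=j$ this is a within-relation equality of $S_i$, handled directly by the filtering step. If $i\neq j$, then $B\in S_i^+$ is equivalent to $A\in S_i$, so the extension step set $t(B)=t(A')$ for some $A'\in S_i$ equivalent to $B$, and since $A,A'\in S_i$ are $\psi^+$-equivalent the filtering step gives $t(A)=t(A')=t(B)$. Hence $t\models\psi$ and $t\in Q(\inst{D})$.

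The main obstacle I anticipate is the equivalence-class bookkeeping inside the correctness argument: I must make sure the extension step is well-defined and that the combination ``filter within each $S_i$'' $+$ ``copy across each $S_i^+$'' $+$ ``natural join'' is both sound and complete with respect to the transitive closure $\psi^+$, i.e., that no equality of $\psi$ is lost and none is spuriously forced. Handling the case where $\lambda$ is not injective (several query relation symbols mapped to one database relation) needs a little care in the renaming but introduces nothing essential. Everything else is routine column renaming and a linear scan.
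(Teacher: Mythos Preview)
Your proposal is correct and follows essentially the same approach as the paper: define $Q'$ by replacing each $R_i(S_i)$ with $R_i'(S_i^+)$, build $\inst{D}'$ by renaming via $\mu_{R_i}^{-1}$, filtering out tuples that violate within-relation equalities of $\psi^+$, and then extending each surviving tuple to $S_i^+$ by copying equivalent-attribute values; the paper just packages these same three steps as explicit intermediate databases $\inst{D}_1,\inst{D}_2,\inst{D}'$ with auxiliary signature mappings. Your correctness argument and the hypergraph-coincidence justification for $\calT$ being a decomposition of $Q'$ are likewise the same as the paper's.
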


We briefly explain the construction. The query 
$Q'$ is obtained  from $Q$ by replacing each relation symbol 
$R(S)$ in $Q$ by a relation symbol $R'(S^+)$. The database $\inst{D}'$
contains, for each relation symbol $R'(S^+)$ in $Q'$, 
a relation over the same schema that is obtained 
from relation $\lambda(R(S))$ 
as follows: for each attribute $A$ contained in $S^+$ but not in
$S$, $\lambda(R(S))$ is extended by a new $A$-column that is 
a copy of any $B$-column in $\lambda(R(S))$ such that $A$ is equivalent to $B$.
Figure~\ref{fig:hypergraph_equi_join} gives in the top row 
two relations $R_1'$ and $R_2'$ 
that result from relation $R$ by the application of 
Proposition \ref{prop:equi_to_natural} in case $Q$ is defined
as in Example \ref{ex:hypergraph_equi_join}.  

It follows from Proposition \ref{prop:equi_to_natural}
 that, since  $Q'(\inst{D}')=Q(\inst{D})$, any relation 
$K$ is a cover of $Q(\inst{D})$ over $\calT$
if and only if $K$ is a cover of $Q'(\inst{D}')$ over $\calT$.
Given the construction times for $Q'$ and $D'$, all our results on natural join queries in Sections~\ref{sec:cores_join_results} and~\ref{sec:core-plans}, except 
the lower size bound on covers in Theorem \ref{theo:general_size_bounds}(ii),  hold for equi-join queries, too.

The following proposition is the counterpart of 
Theorem \ref{theo:general_size_bounds}(ii) for equi-join queries. 

\begin{proposition}\label{prop:lower_bound_equi_join}
For any equi-join query $Q$ and any decomposition $\calT$ of $Q$,
there are arbitrarily large databases $\inst{D}$ such that 
each cover of $Q(\inst{D})$ over $\calT$
has size $\Omega(|\inst{D}|^{\fhtw{\calT}})$.
\end{proposition}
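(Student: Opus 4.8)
The plan is to reduce, via Proposition~\ref{prop:size_bounds_hypergraph}, to making a single bag-projection large, and then to adapt the product-form worst-case construction behind Theorem~\ref{theo:general_size_bounds}(ii) while coping with the fact that $\lambda$ may be non-injective. Concretely, fix a bag $B\in\sign{\calT}$ with $\weight{\gamma_B}=\fhtw{\calT}=:w$; since $\gamma_B$ is a minimum-weight fractional edge cover of the subgraph $H[B]$ of the hypergraph $H$ of $Q$ induced on $B$, we have $\rho^*(H[B])=w$. Every cover $K$ of $Q(\inst{D})$ over $\calT$ satisfies $\pi_B K=\pi_B Q(\inst{D})$, hence $\sizeof{K}\geq\sizeof{\pi_B Q(\inst{D})}$ by Proposition~\ref{prop:size_bounds_hypergraph}, so it is enough to build arbitrarily large databases $\inst{D}$ for $Q$ with $\sizeof{\pi_B Q(\inst{D})}=\Omega(\sizeof{\inst{D}}^w)$.

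For the construction I would take an optimal solution $y$ of the LP dual to the fractional-edge-cover LP of $H[B]$: $y$ assigns a nonnegative rational to each node of $B$ with $\sum_{v\in e}y_v\leq 1$ for each edge $e$ of $H[B]$ and $\sum_{v\in B}y_v=w$; clearing denominators, $y_v=p_v/q$. For a growing parameter $N$ I would give each attribute $v\in B$ a domain of size $N^{p_v}$, pin every attribute outside $B$ to a single value, and realize this by a product-form database in which each database relation is (the pullback through the signature mappings of) the Cartesian product of the domains of its attributes. With the outside-$B$ attributes pinned, the database is globally consistent with respect to $Q$ and the join restricted to $B$ is the full product, so $\sizeof{\pi_B Q(\inst{D})}=\prod_{v\in B}N^{p_v}=N^{qw}$ while $\sizeof{\inst{D}}=O(N^q)$, which yields $\sizeof{\pi_B Q(\inst{D})}=\Omega(\sizeof{\inst{D}}^w)$.

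The hard part is that the relation symbols of $Q$ need not refer to distinct database relations, so the domain sizes above are not freely assignable: through the bijections $\mu_{R_i}$, two attributes occupying the same column of a shared database relation are forced to receive equal domains, and so are attributes identified by $\psi$. I would resolve this by choosing $y$ constant on the equivalence classes of $B$ generated by these forced identifications, arguing that such a $y$ still attains $w$ by symmetrization: using the coverage and connectivity of $\calT$ to control how $\psi$-equivalence interacts with $B$, the forced identifications turn out to act by automorphisms of $H[B]$ (permuting its nodes and preserving its edge multiset), so averaging any optimal dual solution over the induced group produces a group-invariant optimal solution, as the LP is symmetry-invariant. Verifying this group action is the main technical obstacle; the leftover bookkeeping — that the pinned attributes can indeed be dropped without breaking global consistency, so that $\pi_B Q(\inst{D})$ really is the full product on $B$, and the degenerate case where $\psi$ equates two columns of the same relation — is routine.

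An equivalent packaging is through Proposition~\ref{prop:equi_to_natural}: it suffices to exhibit a worst-case natural-join database $\inst{D}'$ for the query $Q'$ constructed there that lies in the image of the translation $\inst{D}\mapsto\inst{D}'$, since then the covers of $Q(\inst{D})$ are exactly the covers of $Q'(\inst{D}')$, which have size $\Omega(\sizeof{\inst{D}'}^w)$ by Theorem~\ref{theo:general_size_bounds}(ii), while $\sizeof{\inst{D}}=\Theta(\sizeof{\inst{D}'})$ by Proposition~\ref{prop:equi_to_natural}. The symmetrization above is precisely what guarantees that such a worst-case instance is realizable from a database for $Q$.
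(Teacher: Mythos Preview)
Your reduction to Proposition~\ref{prop:size_bounds_hypergraph} and the idea of going through the natural-join query $Q'$ of Proposition~\ref{prop:equi_to_natural} are both sound, but the core step --- the symmetrization argument that a dual LP optimum constant on forced-identification classes exists, and hence that a product-form database suffices --- is not just unverified, it is false.

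Take $Q=\sigma_{B=C}(R_1(A,B)\times R_2(C,D))$ with $\lambda(R_1)=\lambda(R_2)=R(X,Y)$, $\mu_{R_1}\colon A\mapsto X,\,B\mapsto Y$, $\mu_{R_2}\colon C\mapsto X,\,D\mapsto Y$, and the one-bag decomposition $\calT$ with bag $\{A,B,C,D\}$. The hyperedges are $\{A,B,C\}$ and $\{B,C,D\}$, so $\fhtw{\calT}=2$. The forced identifications you list give $A\sim C$ and $B\sim D$ (same database column) together with $B\sim C$ (from $\psi$), hence a single class $\{A,B,C,D\}$; any dual solution constant on this class has value at most $4/3$, not $2$. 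Concretely, a product-form $R=[N]\times[N]$ yields $\sizeof{\inst{D}}=N^2$ and $\sizeof{Q(\inst{D})}=\sizeof{\{(a,b,b,d):a,b,d\in[N]\}}=N^3=\sizeof{\inst{D}}^{3/2}$, not $\sizeof{\inst{D}}^{2}$. The ``swap the two copies of $R$'' permutation you presumably have in mind is indeed an automorphism of $H[B]$, but its orbits are $\{A,D\}$ and $\{B,C\}$, which do \emph{not} coincide with the forced-identification classes; averaging over it does not make domains assignable. Your ``equivalent packaging'' fails for the same reason: the worst-case instances for $Q'$ witnessing Theorem~\ref{theo:general_size_bounds}(ii) need not lie in the image of the translation $\inst{D}\mapsto\inst{D}'$ of Proposition~\ref{prop:equi_to_natural}.

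The paper does not try to realize a worst-case $\inst{D}'$ as such an image. Instead it shows that from \emph{any} database $\inst{D}'$ for $Q'$ one can build a database $\inst{D}$ for $Q$ with $\sizeof{\inst{D}}=\calO(\sizeof{\inst{D}'})$ and $\sizeof{\pi_B Q(\inst{D})}\geq\sizeof{\pi_B Q'(\inst{D}')}$ for every bag $B$. The key device is an injective encoding $f_{\calA}$ of tuples over each $\psi$-equivalence class $\calA$ into fresh scalar values: replace every tuple $t$ in every $R_i'$ by the tuple that carries $f_{\calA}(\pi_{\calA}t)$ on each attribute of $\calA$, then project back to the original schema $\sign{R_i}$ and take the union over all $R_i$ mapped to the same database relation. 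The freshness prevents accidental joins across the unioned copies, so the image of $Q'(\inst{D}')$ survives inside $Q(\inst{D})$. In the example above this produces $R=\{(a_i',v_*)\}_{i\in[m]}\cup\{(v_*,d_j')\}_{j\in[m]}$ and $\sizeof{Q(\inst{D})}\geq m^2=\Omega(\sizeof{\inst{D}}^2)$, which your product-form construction cannot match. This encoding-and-union trick is the missing idea.
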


In Proposition 
\ref{prop:lower_bound_equi_join}, we first construct 
a natural join query  $Q'$ from $Q$ as in Proposition 
\ref{prop:equi_to_natural}. By Theorem 
\ref{theo:general_size_bounds}(ii),  
there are arbitrarily large databases $\inst{D}'$ such that 
each cover of $Q'(\inst{D}')$ over $\calT$
has size $\Omega(|\inst{D}'|^{\fhtw{\calT}})$.
Given such a database $\inst{D}'$, it follows from Proposition \ref{prop:size_bounds_hypergraph}, that 
$\Sigma_{B \in \sign{\calT}}|\pi_BQ'(\inst{D}')| = 
\Omega(|\inst{D}'|^{\fhtw{\calT}})$, hence, 
$\max_{B \in \sign{\calT}}\{|\pi_BQ'(\inst{D}')|\}
=  \Omega(|\inst{D}'|^{\fhtw{\calT}})$. 
The database $\inst{D}'$ can be converted into 
a database $\inst{D}$ of size $\calO(|\inst{D}'|)$
such that $|\pi_BQ(\inst{D})| \geq |\pi_BQ'(\inst{D}')|$
for each $B \in \sign{\calT}$. 
By Proposition \ref{prop:size_bounds_hypergraph} 
(adapted to equi-join queries), each cover of $Q(\inst{D})$
over $\calT$
must have size at least $\max_{B \in \sign{\calT}}\{|\pi_BQ(\inst{D})|\}$.
Since $\max_{B \in \sign{\calT}}\{|\pi_BQ'(\inst{D}')|\}$
$=$  $\Omega(|\inst{D}'|^{\fhtw{\calT}})$ and 
$\max_{B \in \sign{\calT}}\{|\pi_BQ(\inst{D})|\}$ $\geq$ 
$\max_{B \in \sign{\calT}}\{|\pi_BQ'(\inst{D}')|\}$, we conclude that
each cover of $Q(\inst{D})$
over $\calT$ is of size $\Omega(|\inst{D}'|^{\fhtw{\calT}})$ $=$ $\Omega(|\inst{D}|^{\fhtw{\calT}})$.

\section{Missing Proofs of Section \ref{sec:preliminaries}}
\subsection{Proof of Proposition \ref{prop:rewriting2}}
{\noindent\bf Proposition \ref{prop:rewriting2}.}
\textit{
Given $(Q,\calT,\inst{D})$, we can compute $(Q',\calT,\inst{D}')$ 
with size ${\mathcal O}(|\inst{D}|^{\fhtw{\calT}})$  and in time $\widetilde{\mathcal O}(|\inst{D}|^{\fhtw{\calT}})$ 
such that $Q'$ is an acyclic natural join query, $\calT$ corresponds 
to  a join tree of $Q'$, 
 $\inst{D}'$ is globally consistent with respect to $Q'$ and $Q'(\inst{D}')=Q(\inst{D})$.
 }

\smallskip
\smallskip
\smallskip

The construction is standard in the literature
\cite{AHV95,SOC:SIGMOD:2016}. For convenience,
 we describe the main ideas.
 
\smallskip 
\underline{\em Construction.}
The construction comprises two transformation steps.
We first 
 compute $R_B = Q_B(\inst{D}_B)$ 
 for each $B \in \sign{\calT}$
 (recall that 
 $Q_B$ and $\inst{D}_B$ are $B$-restrictions of 
 $Q$ and $\inst{D}$, respectively).
 Let  
$\widehat{\inst{D}}=\setindexel{R_B}{B}{\sign{\calT}}$
and $\widehat{Q}=\setindexjoin{R_B}{B}{\sign{\calT}}$.
In the second transformation step, we execute a semi-join programme 
on
$\widehat{\inst{D}}$ to turn it 
into a database $\inst{D}' = \setindexel{R_B'}{B}{\sign{\calT}}$ that is pairwise consistent with respect to $\widehat{Q}$,
i.e., $\inst{D}'$ does not contain any pair of relations such that 
one of the two relations contains a tuple which cannot be joined 
with any tuple from the other relation. To achieve pairwise consistency, 
it is not necessary to consider {\em all} pairs of relations in 
$\widehat{\inst{D}}$. It suffices to execute a bottom-up and a subsequent 
top-down traversal in $\calT$ \cite{Yannakakis81}. During each traversal,
we delete for each father-child pair 
$B_1, B_2$ of bags, all tuples in 
each of the two relations 
$R_{B_1}$ and $R_{B_2}$ which
do not have any join partner in the other relation.  
We define $Q'=\setindexjoin{R_B'}{B}{\sign{\calT}}$.

\smallskip 
\underline{\em $Q'$ is an acyclic natural join query and $\calT$ corresponds 
to a join tree  of $Q'$.}
By construction, we have a one-to-one correspondence 
between  
relation symbols $R_B' \in \sign{Q'}$
and bags $B \in \sign{\calT}$
with 
$\sign{R_B'}=B$.
Hence, $\calT$ corresponds to the join tree of $Q'$ that 
is obtained from $\calT$ by, basically, replacing each bag by the corresponding relation symbol in $Q$. Since $Q'$ has a join tree, it is acyclic.  

\smallskip 
\underline{\em $\inst{D}'$ is globally consistent with respect to $Q'$.} 
The relations in $\inst{D}'$ are pairwise consistent
with respect to $Q'$.
For acyclic queries,
pairwise consistency 
implies global consistency (Theorem  6.4.5 of \cite{AHV95}).
Hence, $\inst{D}'$ is globally consistent with respect to $Q'$.

\smallskip 
\underline{\em $Q(\inst{D}) = Q'(\inst{D}')$.} 
Since the second transformation step only deletes
tuples in $\widehat{\inst{D}}$ which do not contribute to the result 
of $\widehat{Q}(\widehat{\inst{D}})$, it suffices 
to show that
$Q(\inst{D}) = \widehat{Q}(\widehat{\inst{D}})$.
Let $Q = \indexjoin{R_i}{i}{n}$.

We first show $Q(\inst{D}) \subseteq  \widehat{Q}(\widehat{\inst{D}})$. 
Let $t \in Q(\inst{D})$.
Since $\pi_{B} Q(\inst{D}) \subseteq Q_B(\inst{D}_B)$,
it follows that $\pi_{B} t \in Q_B(\inst{D}_B)$
 for each $B \in \sign{\calT}$. 
 Hence, $\pi_{B} t \in R_B$
 for each $B \in \sign{\calT}$. 
 Since $t = \setindexjoin{\pi_B t}{B}{\sign{\calT}}$, 
 we derive that
 $t \in \setindexjoin{R_B}{B}{\sign{\calT}}$, thus,  
$t \in \widehat{Q}(\widehat{\inst{D}})$.

We now show $\widehat{Q}(\widehat{\inst{D}}) \subseteq  Q(\inst{D})$.
Let $t \in \widehat{Q}(\widehat{\inst{D}})$. By definition, 
$\pi_{B}t \in R_B$ for each $B \in \sign{\calT}$.
By the fact that the attributes of each relation symbol in $Q$
are covered by at least one bag of $\calT$ and by the 
construction of the relations $R_B$, it holds that
$\pi_{\sign{R_i}}t \in R_i$ for each $i \in \indexnat{n}$.
This implies $t \in  Q(\inst{D})$.  

\smallskip 
\underline{\em Construction size.} 
Each relation 
$R_B$ in $\widehat{\inst{D}}$  has size
$\calO(\sizeof{\inst{D}_B}^{\rho^*(Q_B)})$ \cite{AtseriasGM13}.
 Since $\fhtw{\calT}=
\takemax{B \in \sign{\calT}}{\rho^*(Q_B)}$,
it follows that the size of 
$\widehat{\inst{D}}$ is 
$\calO(\sizeof{\inst{D}}^{\fhtw{\calT}})$.
The semi-join program on $\widehat{\inst{D}}$
does not increase the size of the database.
The size of  $Q'$ is $\calO(|Q|)$.
Altogether, the size of 
$(Q',\calT,\inst{D}')$ is 
$\calO(\sizeof{\inst{D}}^{\fhtw{\calT}})$.
 
 \smallskip 
\underline{\em Construction time.} 
Each relation 
$R_B$ in $\widehat{\inst{D}}$  is computable 
in time 
$\softO(\sizeof{\inst{D}_B}^{\rho^*(Q_B)})$ \cite{Ngo:SIGREC:2013}. 
By $\fhtw{\calT}=
\takemax{B \in \sign{\calT}}{\rho^*(Q_B)}$,
we derive that the computation time 
for $\widehat{\inst{D}}$ is  
$\softO(\sizeof{\inst{D}}^{\fhtw{\calT}})$.
During the semi-join program on $\widehat{\inst{D}}$,
we can achieve consistency between each pair 
$R_{B_1}, R_{B_2}$ of father-child relations as follows.
We first sort 
both relations on the join attributes. In a subsequent scan we 
delete in each of the relations each tuple with no join partner in the 
other relation. Hence, the semi-join programme can be realised 
in time $\softO(\sizeof{\inst{D}}^{\fhtw{\calT}})$. 
It follows that the overall running time is 
$\softO(\sizeof{\inst{D}}^{\fhtw{\calT}})$.

\section{Missing Proofs of Section \ref{sec:cores_join_results}}
\subsection{Proof of Proposition \ref{Q(D)-preserv=join_preserv}}
{\noindent\bf Proposition \ref{Q(D)-preserv=join_preserv}.}
\textit{
Given $(Q,\calT,\inst{D})$, a relation $K$ with schema $\mathit{att}(Q)$ is result-preserving with respect to $(Q,\calT,\inst{D})$ if and only if $\Join_{B\in\sign{\calT}} \pi_B K = Q(\inst{D})$.
}

\smallskip
\smallskip
\smallskip

\underline{\em Proof of the ``$\Rightarrow$''-direction.} 
Assume that $K$ is 
result-preserving with respect to $(Q,\calT,\inst{D})$.
We show in two steps that $\setindexjoin{\pi_{B}K}{B}{\sign{\calT}} = Q(\inst{D})$.
 
\begin{itemize}
\item $\setindexjoin{\pi_{B}K}{B}{\sign{\calT}} \subseteq Q(\inst{D})$:
Let $t$ be an arbitrary tuple from 
$\setindexjoin{\pi_{B}K}{B}{\sign{\calT}}$. This means that 
$\pi_{B}t \in \pi_{B}K$ for every $B \in \sign{\calT}$.
Since $K$ is result-preserving with respect to $(Q,\calT,\inst{D})$, 
we derive that 
$\pi_{B}t \in \pi_{B}Q(\inst{D})$ for every $B \in \sign{\calT}$. 
By the definition of decompositions, for every relation symbol $R$ 
in $Q$, there is at least one bag of $\calT$
  containing all attributes of $R$. Hence,  
  $\pi_{\sign{R}}t \in \pi_{\sign{R}}Q(\inst{D})$ 
  for every $R \in \sign{Q}$. It follows that $t$
  is included in $Q(\inst{D})$. Thus, 
  $\setindexjoin{\pi_{B}K}{B}{\sign{\calT}} \subseteq Q(\inst{D})$.
  
  \item $Q(\inst{D}) \subseteq \setindexjoin{\pi_{B}K}{B}{\sign{\calT}}$:
    Let $t \in Q(\inst{D})$.
    It follows that $\pi_B t \in \pi_B Q(\inst{D})$ for every $B \subseteq \sign{Q(\inst{D})}$, hence, in particular for every $B\in \sign{\calT}$.
Due to result-preservation of $K$ with respect to $(Q,\calT, \inst{D})$,
this implies that $\pi_B t \in \pi_B K$ for every $B\in \sign{\calT}$
which means that $t \in \setindexjoin{\pi_{B}K}{B}{\sign{\calT}}$.
Hence, $Q(\inst{D}) \subseteq \setindexjoin{\pi_{B}K}{B}{\sign{\calT}}$. 
\end{itemize}

\underline{\em Proof of the ``$\Leftarrow$''-direction.}
Assume that $\setindexjoin{\pi_B K}{B}{\sign{\calT}} = Q(\inst{D})$.
 Given any $B \in \sign{\calT}$, we show in two steps that 
 $\pi_B K = \pi_B Q(\inst{D})$. 
  
  \begin{itemize}
  \item $\pi_B K \subseteq \pi_B Q(\inst{D})$: 
  Let $t$ be an arbitrary tuple 
  from $\pi_B K$. 
  This means that there is a tuple $t' \in K$ with 
$\pi_B t'=t$. 
Since $\pi_{B'} t' \in \pi_{B'}K$ for each $B' \in \sign{\calT}$, 
we derive that
$t'\in \setindexjoin{\pi_{B'}K}{B'}{\sign{\calT}}$. 
Using our assumption
$\setindexjoin{\pi_{B'} K}{B'}{\sign{\calT}} = Q(\inst{D})$, we get   
$t'\in Q(\inst{D})$. From the latter and the fact that 
$t = \pi_B t'$, it follows $t \in \pi_B Q(\inst{D})$.
Altogether, we conclude  $\pi_B K \subseteq \pi_B Q(\inst{D})$. 

  \item $\pi_B Q(\inst{D}) \subseteq \pi_B K$:
Let $t$ be an arbitrary tuple from $\pi_B Q(\inst{D})$.
This means that there is a tuple $t' \in Q(\inst{D})$
with $\pi_B t' = t$. 
By assumption,  
$t' \in \setindexjoin{\pi_{B'} K}{B'}{\sign{\calT}}$.
Since $B$ is an element of $\sign{\calT}$,
 the latter implies
$\pi_B t'= t \in \pi_B K$.
Altogether, we get $\pi_B Q(\inst{D}) \subseteq \pi_B K$.
\end{itemize}

\subsection{Proof of Proposition \ref{characterize_cores}}
{\noindent\bf Proposition \ref{characterize_cores}.}
\textit{
Given $(Q,\calT,\inst{D})$, a relation $K$ is a cover of the query result $Q(\inst{D})$ over $\calT$ if and only if the hypergraph of $Q(\inst{D})$ over 
$\sign{\calT}$ has a minimal edge cover $M$ such that $\relof{M}=K$.
}

\smallskip
\smallskip
\smallskip

We first recall that $Q(\inst{D})$ is obviously result-preserving with respect to 
$(Q,\calT,\inst{D})$ and therefore, by 
Proposition \ref{Q(D)-preserv=join_preserv}, it holds
$\Join_{B \in \sign{\calT}}\pi_B Q(\inst{D}) = Q(\inst{D})$.
Let $H=(V,E)$
be the hypergraph of $Q(\inst{D})$ over $\sign{\calT}$.
Let $\tupleofs_V$ be a function mapping each 
node $v \in V$ to its corresponding tuple in 
$\bigcup_{B \in \sign{\calT}}\pi_B Q(\inst{D})$.
Furthermore, let $\tupleofs_E$ 
be a function mapping each edge $e \in E$ to 
$\setindexjoin{\tupleofs_V(v)}{v}{e}$. 
The function $\tupleofs_V$ 
is a bijection from $V$ to 
$\bigcup_{B \in \sign{\calT}}\pi_B Q(\inst{D})$.
Since $\Join_{B \in \sign{\calT}}\pi_B Q(\inst{D}) = Q(\inst{D})$, 
$\tupleofs_E$ is a bijection from $E$ to 
$\setindexjoin{\pi_B Q(\inst{D})}{B}{\sign{\calT}}$. 
Likewise, the function
$\relofs$ (as defined in Section \ref{sec:preliminaries}) is a bijection from 
subsets of $E$ to subsets of 
$\setindexjoin{\pi_B Q(\inst{D})}{B}{\sign{\calT}}$.

\underline{\em Proof of the ``$\Rightarrow$''-direction.}
Let $K$ be a cover of $Q(\inst{D})$ over $\calT$.
We show that $\relofs^-(K)$ is defined and a minimal edge 
cover of $H$. 

Let $t \in K$. This means that for each $B \in \sign{\calT}$,
there is $t_B \in \pi_B K$ with $t = \Join_{B \in \sign{\calT}} t_B$.  
Since $K$ is result-preserving with respect to $(Q,\calT,\inst{D})$,
each $t_B$ is included in $\pi_B Q(\inst{D})$.
Hence, $\tupleofs_V^-(t_B)$ must be defined for each $B \in \sign{\calT}$. 
Since $Q(\inst{D}) = \Join_{B \in \sign{\calT}}Q(\inst{D})$, 
$t$ is included in $Q(\inst{D})$. 
It follows that $\tupleofs_E^-(t) = \{\tupleofs_V^-(t_B)\}_{B \in \sign{\calT}}$
is defined. Thus, $\relofs^-(K) = \{\tupleofs_E^-(t)\}_{t \in K}$
is defined.

It follows from $\pi_BQ(\inst{D}) = \pi_BK$, $B \in \sign{\calT}$,
that $\relofs^-(K)$ is an edge cover of $H$.
It remains to show that $\relofs^-(K)$ is a {\em minimal}
edge cover of $H$. For the sake of contradiction, assume 
that $\relofs^-(K)$ is not a minimal edge cover of $H$.
This implies that there is an edge $\overline{e} \in \relofs^-(K)$ 
such that $\relofs^-(K) \backslash \{\overline{e}\}$ is an edge 
cover of $H$. It follows that for each node $v \in V$, there 
is an edge $e \in \relofs^-(K) \backslash \{\overline{e}\}$ 
with $v \in e$. This means that for each tuple $t_B \in \pi_B Q(\inst{D})
= \pi_B K$, there is a tuple $t \in K \backslash \{\tupleofs_E(\overline{e})\}$   
with $\pi_B t = t_B$.
We conclude that $K \backslash \{\tupleofs_E(\overline{e})\}$
is result-preserving with respect to $(Q,\calT,\inst{D})$.
The latter is, however, a contradiction to our assumption that $K$
is a cover of $Q(\inst{D})$ over $\calT$ and, therefore, a 
minimal result-preserving relation with respect to $(Q,\calT,\inst{D})$.

\underline{\em Proof of the ``$\Leftarrow$''-direction.}
Let $M$ be a minimal edge cover of $H$.
We show that $\relofs(M)$ is a cover of $Q(\inst{D})$
over $\calT$. 

We first observe that $\sign{\relofs(M)} = \sign{Q(\inst{D})} = \att{Q}$.
Since $Q(\inst{D})$ is result-preserving with respect to 
$(Q,\calT, \inst{D})$ and $M$ is an edge cover of $H$,
$\relofs(M)$ must also be result-preserving with respect to $(Q,\calT,\inst{D})$.  
It remains to show that $\relofs(M)$ is a {\em minimal} result-preserving relation with 
respect to $(Q,\calT,\inst{D})$.
For the sake of contradiction, assume that $\relofs(M)$ is not minimal
in that respect. It follows that there is a tuple $\overline{t}\in \relofs(M)$
such that $\relofs(M) \backslash \{\overline{t}\}$ is result-preserving
with respect to $(Q,\calT, \inst{D})$. This means that 
for each $B \in \sign{\calT}$ and each tuple $t_B \in \pi_BQ(\inst{D})$,
there is a tuple $t \in \relofs(M) \backslash \{\overline{t}\}$
with $\pi_Bt = t_B$. This implies that for
each node $v \in V$, there is an edge 
$e \in M \backslash \{\tupleofs_E^-(\overline{t})\}$
with $v \in e$. We derive that      
$M \backslash \{\tupleofs_E^-(\overline{t})\}$
is a minimal edge cover of $H$, a contradiction 
to the minimality of $M$.

\subsection{Proof of Proposition \ref{prop:core_subset}}
{\noindent\bf Proposition \ref{prop:core_subset}.}
\textit{Given $(Q,\calT,\inst{D})$, each cover of $Q(\inst{D})$ over $\calT$ is a subset of $Q(\inst{D})$.}

\smallskip
\smallskip
\smallskip

Let 
$K$ be a cover of $Q(\inst{D})$ over $\calT$
and let 
$t\in K$ be an arbitrary tuple from $K$. 
We show that $t$ must be included in $Q(\inst{D})$.
For each $B \in \sign{\calT}$, let $t_B = \pi_{B}t$.
It holds that 
$t = \setindexjoin{t_B}{B}{\sign{\calT}}$.
As $K$
is result-preserving with respect to 
$(Q,\calT,\inst{D})$, $t_B$ must be included in 
$\pi_{B}Q(\inst{D})$ for each $B \in \sign{\calT}$.
Since by Proposition \ref{Q(D)-preserv=join_preserv},
$Q(\inst{D}) = \setindexjoin{\pi_BQ(\inst{D})}{B}{\sign{\calT}}$, 
it follows that $t$ is included in
$Q(\inst{D})$.

\subsection{Proof of Proposition \ref{prop:size_bounds_hypergraph}}
{\noindent\bf Proposition \ref{prop:size_bounds_hypergraph}.}
\textit{
Given $(Q,\calT,\inst{D})$, the size of each cover $K$ 
of $Q(\inst{D})$
over $\calT$ satisfies the inequalities
$\takemax{B\in\sign{\calT}}{\sizeof{\pi_BQ(\inst{D})}}$ $\leq$ 
$\sizeof{K}$ $\leq$ 
$\Sigma_{B\in \sign{\calT}}\sizeof{\pi_BQ(\inst{D})}$.
}

\smallskip
\smallskip
\smallskip

The first inequality holds due to $K$ being result-preserving 
with respect to $(Q,\calT,\inst{D})$. The second inequality is implied by 
Proposition \ref{characterize_cores}, since 
the hypergraph $H=(V,E)$ of $Q(\inst{D})$ over $\sign{\calT}$
must have a minimal edge cover $M$
with $\relof{M} = K$.
Each  hyperedge $e$ in $M$ must cover at least one 
node in $V$ which is not covered by any other hyperedge in $M$. 
Otherwise,  
$\minus{M}{e}$ would be an edge cover, 
which is a contradiction 
to the minimality of $M$. 
Hence, the total number of edges in $M$ is upper-bounded 
by $\sizeof{V}$.
As $\sizeof{V}=\Sigma_{B \in \sign{\calT}} \sizeof{\pi_B Q(\inst{D}) }$
and $\sizeof{M} = \sizeof{K}$, we derive
that the number of tuples in $K$ is upper-bounded 
by 
$\Sigma_{B \in \sign{\calT}} 
\sizeof{\pi_B Q(\inst{D})}$.

\subsection{Proof of Theorem \ref{theo:general_size_bounds}}
{\noindent\bf Theorem \ref{theo:general_size_bounds}.}
\textit{
Let $Q$ be a natural join query and $\calT$ a decomposition of $Q$.
}
\begin{enumerate}[(i)] 
\item \textit{For any database $\inst{D}$, each cover of the query result $Q(\inst{D})$ over 
$\calT$ has size $\calO(\sizeof{\inst{D}}^{\fhtw{\calT}})$.}

\item \textit{There are arbitrarily large databases $\inst{D}$ such that 
each cover of the query result $Q(\inst{D})$ over $\calT$
has size $\Omega(\sizeof{\inst{D}}^{\fhtw{\calT}})$.
}
\end{enumerate}

\smallskip
\smallskip
\smallskip

Our proof relies on the 
results that 
for any natural join query $Q$ and database $\inst{D}$, 
it holds $\sizeof{Q(\inst{D})} = \calO(\sizeof{\inst{D}}^{\rho^*(Q)})$
and there are arbitrarily large 
databases $\inst{D}$ with 
$\sizeof{Q(\inst{D})}$ $=$ $\Omega(\sizeof{\inst{D}}^{\rho^*(Q)})$
\cite{AtseriasGM13}.

Let $\calT =(T,\chi,\setindexel{\gamma_t}{t}{T})$.  
  Given a node $t$ in $T$ with $\chi(t) = B$ for some set $B$, 
  we recall that $\weight{\gamma_{t}} = \rho^*(Q_{B})$.
Moreover, if  $\weight{\gamma_{t}}$ 
  is maximal over all weight functions in $\calT$, 
  then $\weight{\gamma_{t}}$
$=\fhtw{\calT}$.

\underline{\em Proof of statement (i).}
Let $K$ be a cover of $Q(\inst{D})$ over $\calT$
and let $t$ be an arbitrary node of $T$ with $\chi(t)=B$
for some set $B$.
It holds 
$\sizeof{Q_B(\inst{D}_B)}$ 
$=$ 
$\calO(\sizeof{\inst{D}_B}^{\rho^*(Q_B)})$ 
\cite{AtseriasGM13},
thus,   
$\sizeof{Q_B(\inst{D}_B)}$
 $=$ 
 $\calO(\sizeof{\inst{D}_B}^{\fhtw{\calT}})$
 $=$ 
 $\calO(\sizeof{\inst{D}}^{\fhtw{\calT}})$.
Since
$\sizeof{\pi_B Q(\inst{D})}$ 
$\leq$ 
$\sizeof{Q_B (\inst{D}_B)}$  (Proposition 3.2 of \cite{FDB:TODS:2015}), 
it follows 
that 
$\sizeof{\pi_B Q(\inst{D})}$
 $=$ 
 $\calO(\sizeof{\inst{D}}^{\fhtw{\calT}})$.
 Using Proposition \ref{prop:size_bounds_hypergraph},   
 we conclude
$\sizeof{K}$ 
$\leq$ 
$\Sigma_{B\in \sign{\calT}} \sizeof{\pi_BQ(\inst{D})}$
$=$ 
 $\calO(\sizeof{\sign{\calT}} \cdot \sizeof{\inst{D}}^{\fhtw{\calT}})$
$=$ 
$\calO(\sizeof{\inst{D}}^{\fhtw{\calT}})$.

\underline{\em Proof of statement (ii).}
Let $t$ be a node in $T$ such that $\gamma_t$
has maximal weight and let $\chi(t)=B$.
There are arbitrarily large databases 
$\inst{D}'$ such that 
$\sizeof{Q_B(\inst{D}')}$ 
$=$ 
$\Omega(\sizeof{\inst{D}'}^{\rho^*(Q_B)})$
$=$
$\Omega(\sizeof{\inst{D}'}^{\fhtw{\calT}})$ 
\cite{AtseriasGM13}. 
For each such database $\inst{D}'$, 
there exists a database $\inst{D}$
with $\sizeof{\inst{D}} = \calO(\sizeof{\inst{D}'})$
and $\sizeof{\pi_B Q(\inst{D})}  = \Omega(\sizeof{Q_B(\inst{D}')})
= \Omega(\sizeof{\inst{D}'}^{\fhtw{\calT}})$
(Lemma 7.18 of \cite{FDB:TODS:2015}).
This means that there are 
arbitrarily large databases $\inst{D}$ such that 
$\sizeof{\pi_BQ(\inst{D})}$ 
$=$ 
$\Omega(\sizeof{\inst{D}}^{\fhtw{\calT}})$.
Due to Proposition \ref{prop:size_bounds_hypergraph}, 
each cover $K$ of $Q(\inst{D})$ over $\calT$
must be at least of size $\sizeof{\pi_BQ(\inst{D})}$, hence, 
$\sizeof{K}= \Omega(\inst{D}^{\fhtw{\calT}})$.

\subsection{Proof of Proposition \ref{prop:core_to_d_rep}}
{\noindent\bf Proposition \ref{prop:core_to_d_rep}.}
\textit{
Given $(Q,\calT,\inst{D})$, each cover $K$ of the query result $Q(\inst{D})$ over $\calT$ can be translated into a d-representation of $Q(\inst{D})$ of size $\calO(\sizeof{K})$ and in time $\widetilde{\calO}(\sizeof{K})$.
}

\smallskip
\smallskip
\smallskip

Using the algorithm in Figure \ref{fig:factorize_from_cover}, 
we construct from $K$ and $\calT$
a d-representation of $Q(\inst{D})$ encoded as 
a set M of maps. Recall that the constructed 
d-representation is over a d-tree $\calT'$ equivalent 
to $\calT$.  

\smallskip
\underline{\em Correctness of the construction.}
For each $m_A \in M$, we denote by $R_A$ the 
listing representation of $m_A$ as presented in Figure 
\ref{fig:d_rep_d_hashmap_rep}. 
For each bag attribute $A$ in $\calT'$, 
the set $\{A\} \cup \mathit{key}(A)$ constitutes 
a bag in the signature $\sign{\calT'}$ of $\calT'$. We write
 $B_A$ to express that the bag attribute of $B_A$
 is $A$.
 By the definition of d-representations, the query result  
 represented by the map set $M$ is
 $R = \setindexjoin{R_A}{B_A}{\sign{\calT'}}$
 \cite{FDB:TODS:2015}.
 It remains to show that $R = Q(\inst{D})$. 
 By construction
 of the maps in $M$, we have 
 $R_A = \pi_{B_A}K$ for each $B_A$. 
 For each $B_A \in \sign{\calT'}$,
there is a $B \in \sign{\calT}$ with $B_A \subseteq B$
(proof of Proposition 9.3 
in \cite{FDB:TODS:2015}).
Hence, by the definition of covers, we have 
$R_A = \pi_{B_A}K = \pi_{B_A} Q(\inst{D})$
for each $B_A$.
As $\calT'$ is a valid decomposition of $Q$,
it follows from Proposition \ref{Q(D)-preserv=join_preserv}
that $\setindexjoin{\pi_{B_A}K}{B_A}{\sign{\calT'}} = Q(\inst{D})$.
 Since for each $B_A$, we have $\pi_{B_A}K = R_A$
 and $R = \setindexjoin{R_{A}}{B_A}{\sign{\calT'}}$, 
 it follows $R = Q(\inst{D})$. 
 
\smallskip
\underline{\em Construction size and translation time.}
The number of the maps in $M$ is bounded by the number of attributes in 
$K$. We consider the cover $K$ sorted using a topological order of the 
decomposition $\calT'$, so that inserts into the multimaps become appends (alternatively, inserts in sorted order would take logarithmic time in the number of entries). For each tuple in $K$ we insert at most one tuple in the multimap of each attribute. Thus, the overall size of the set of multimaps, and thus of the d-representation, is $\calO(|K|)$ with respect to data complexity (the linear factor in the number of attributes is ignored).
The data complexity of the overall translation time is thus $\widetilde{\calO}(|K|)$.

\section{Missing Proofs of Section \ref{sec:core-plans}}
\subsection{Proof of Proposition \ref{prop:comp_core_two_rel}}
{\noindent\bf Proposition \ref{prop:comp_core_two_rel}.}
\textit{
Given two consistent relations $R_1$ and $R_2$, the cover-join computes a cover $K$ of their join result over the decomposition with bags $\sign{R_1}$ and $\sign{R_2}$ in time $\widetilde{\calO}(|R_1|+|R_2|)$ and with size $\max\{|R_1|, |R_2|\}\leq |K|\leq |R_1|+|R_2|$.
}

\smallskip
\smallskip
\smallskip

Let $Q = R_1 \Join R_2$, $\inst{D} = \{R_1, R_2\}$.
Moreover, let 
$\calT$ be the decomposition of $Q$ with bags
$\sign{R_1}$ and $\sign{R_2}$.
By Proposition 
\ref{characterize_cores}, 
a relation $K$ is a cover of $Q(\inst{D})$ 
over $\calT$ if and only if 
the hypergraph $H$ of  
$Q(\inst{D})$
 over the attribute sets
$\{\sign{R_1}, \sign{R_2}\}$
has a minimal edge cover $M$ with 
$\relof{M}=K$. 
The hypergraph 
$H$ is a collection of disjoint complete bipartite
subgraphs. 
The set of nodes of each such subgraph
corresponds to a maximal subset of tuples of the input relations 
agreeing on the join attributes. 
A minimal edge cover of $H$ is
a collection of minimal edge covers
for these subgraphs.
We construct a cover $K$ of minimum 
size such that each maximal subset of tuples in $K$ 
agreeing  on the join attributes corresponds 
to a minimal edge cover of one of the
complete bipartite
subgraphs of $H$.

\smallskip
\underline{\em Construction.}
Let $\calA$
be the set of common attributes of 
$R_1$ and $R_2$.
For $i \in \{1,2\}$ and $t \in \pi_{\calA}R_i$, we call
$\sigma_{\calA = t} R_i$ the {\em $t$-block} in $R_i$
and denote its size by $n^t_i$.
Since $R_1$ and $R_2$ are consistent,
for each $t$-block in $R_1$,
there must be a corresponding $t$-block in 
$R_2$, and vice-versa.
First, the algorithm sorts
$R_1$ and $R_2$
with respect to the values of the attributes in $\calA$.  
After sorting, the $t$-blocks 
occur in the same order
in both relations. 
The cover $K$
is constructed by performing 
the following procedure 
for each 
pair of corresponding $t$-blocks in 
$R_1$ and $R_2$.     
Without loss of generality, assume 
$n_1^t \geq n_2^t$.
For each $j < n_{2}^t$, the $j$-th
tuple $t'$ in the $t$-block of 
$R_1$ is combined 
with the $j$-th tuple $t''$ in the $t$-block 
of $R_2$
resulting in a new tuple $t' \Join t''$. 
Then, all remaining 
tuples in the $t$-block of  $R_1$ are combined with the 
$n_{2}^t$-th tuple in the $t$-block of 
$R_2$.
All new tuples are added to $K$.

\smallskip
\underline{\em Construction time.}
The sorting phase can be realised 
 in time 
$\softO(\sizeof{R_1}+\sizeof{R_2})$. 
 The phase for constructing the new tuples 
 can be done in one pass over the sorted relations. 
 Hence, the overall running time 
 of the described algorithm is $\softO(\sizeof{R_1}+\sizeof{R_2})$.

\smallskip
\underline{\em Size of the Cover.}
The size bounds 
$\max\{|R_1|, |R_2|\}\leq |K|\leq |R_1|+|R_2|$
follow from Proposition \ref{prop:size_bounds_hypergraph} and the assumption
 that $R_1$ and $R_2$ are consistent, 
 so we have $\pi_{\sign{R_i}} Q(\inst{D}) = R_i$
 for each $i \in \{1,2\}$.

\smallskip
Our algorithm above constructs a specific cover. 
Other covers can be constructed within the same
time bounds.  We exemplify the construction of some
further covers following different patterns. 
In our construction above, after combining the first 
$n_2^t-1$ tuples in the 
$t$-block of $R_1$
with the first $n_2^t-1$ tuples in the 
$t$-block of $R_2$, we combined the
last tuple 
in the $t$-block of $R_2$ with all remaining tuples 
in the $t$-block of $R_1$. Alternatively,
we can fix any tuple $t'$ in the $t$-block of 
 $R_2$, combine  the first $n_2^t-1$ tuples in the 
$t$-block of $R_1$
with all tuples besides $t'$  in the 
$t$-block of $R_2$
and then combine the remaining tuples
in the $t$-block of $R_1$ with $t'$.

\subsection{Proof of Lemma \ref{lemma:core-plan-acyclic}}
{\noindent\bf Lemma \ref{lemma:core-plan-acyclic}.}
\textit{
Given $(Q,\calJ,\inst{D})$ where $\inst{D}=\{R_i\}_{i\in[n]}$
is globally consistent with respect to $Q$, 
any cover-join plan over the join tree $\calJ$ 
computes a cover $K$ of $Q(\inst{D})$ over the decomposition corresponding to $\calJ$ in 
time $\widetilde{\calO}(|K|)$ and with size $\takemax{{i\in[n]}}{\sizeof{R_i}} \leq|K|\leq \sum_{i\in[n]}|R_i|$.
}

\smallskip
\smallskip
\smallskip

\noindent
\underline{\em Any cover-join plan 
over $\calJ$ computes 
a cover $K$ of $Q(\inst{D})$ over the decomposition corresponding to $\calJ$.}
\newline 
We show by induction on the 
structure of cover-join plans
that given $(Q,\calJ,\inst{D})$, 
where $\inst{D}$ is globally consistent 
with respect to $Q$, 
any cover-join plan over the join tree $\calJ$ 
computes a cover $K$ of $Q(\inst{D})$ over the decomposition 
corresponding to $\calJ$.

For the base case, assume that $\varphi$
consists of a single relation symbol $R$.
By Definition 
\ref{def:core-join-plan}, $\calJ$ consists 
of a single node $R$, hence, $Q=R$. 
The decomposition $\calT$ corresponding 
to $\calJ$ consists of a single bag $\sign{R}$. 
By Definition \ref{def:core-join-plan},
$\varphi$ returns the relation $R$.
By Definition \ref{def:cores}, $R$ is indeed 
the unique cover of $Q(\{R\})$ over
$\calT$.

Assume now that $\varphi$ is of the form
$\varphi_1 \corejoin \varphi_2$. By definition of
cover-join plans, there are subtrees $\calJ_1$
and $\calJ_2$ of $\calJ$ such that  
$\calJ = \calJ_1 \circ \calJ_2$ 
and each $\varphi_i$ is a cover-join plan over $\calJ_i$.
Let $\calT_1$ and $\calT_2$ be the decompositions
corresponding to $\calJ_1$ and $\calJ_2$, respectively.
The decomposition corresponding to $\calJ$
is obtained by connecting  
$\calT_1$ and $\calT_2$ by the same tree edge 
connecting  $\calJ_1$ and $\calJ_2$ in $\calJ$.   
We have $Q = Q_1 \Join Q_2$
where each $Q_i$ expresses the join of the relation 
symbols occurring in $\calJ_i$.
Moreover, $\inst{D} = \inst{D}_1 \cup \inst{D}_2$ where
 $\inst{D}_i= \setindexel{R}{R}{\sign{Q_i}}$, $i \in [2]$.
Note that for each $i \in [2]$, 
$Q_i$ is acyclic, $\calJ_i$ is a join tree 
of $Q_i$ and $\inst{D}_i$ is globally consistent 
with respect to $Q_i$. The latter follows simply
from the globally consistency of $\inst{D}$
with respect to $Q$.
Hence, by induction hypothesis,
each $\varphi_i$ returns a cover $K_i$
of $Q_i(\inst{D}_i)$ over $\calT_i$. 

Due to Proposition \ref{prop:comp_core_two_rel},
in case $K_1$ and $K_2$ are consistent,  
the cover-join operator computes a cover $K$
of $K_1 \Join K_2$ over the decomposition
with bags $\sign{K_1}$ and $\sign{K_2}$. 
Thus, by Definition \ref{def:core-join-plan},
the plan $\varphi$ returns $K$.
We proceed as follows. 
First, we show that 
$K_1$ and  $K_2$      
must be consistent. Then, we 
prove that $K$ is a cover of 
$Q(\inst{D})$
 over $\calT$, that is, 
 $K$ is result-preserving with respect to 
$(Q, \calT, \inst{D})$
and it is minimal in this respect.

\begin{itemize}
\item $K_1$ and $K_2$ are consistent:
Let $R_1$ and $R_2$ be the two relation symbols 
incident to the single edge connecting $\calJ_1$
and $\calJ_2$ in $\calJ$ and let $\calB$ be the set of common 
attributes of these relation symbols.
 Let $\calA$ be the set of common attributes 
of $K_1$ and $K_2$. 
We first show that $\calA \subseteq \calB$.
Let $A \in \calA$.
Since each $K_i$ is computed by the 
plan $\calJ_i$, there must be at least one 
relation symbol $R_1'$ in $\calJ_1$
and at least one relation symbol $R_2'$ in  
$\calJ_2$ containing $A$ in their schemas. 
Due to the construction of join trees,
$A$ must occur in the schemas of all relation symbols 
on the  single path between $R_1'$ and $R_2'$.
Since $R_1$ and $R_2$ are on this path, both must 
include  $A$. Hence, $\calA \subseteq \calB$.

Since $\inst{D}$ is globally consistent, 
the relations $R_1$ and $R_2$ must be consistent as well. 
As each $K_i$ is result-preserving
with respect to $(Q_i,\calT_i,\inst{D}_i)$,
$\pi_{\sign{R_i}} Q_i(\inst{D}_i) = R_i$ (due to global consistency)
and $\calB \subseteq \sign{R_i}$,
it follows $\pi_{\calB}K_1 = \pi_{\calB}Q_1(\inst{D}_1) = \pi_{\calB}R_1
= \pi_{\calB}R_2 = \pi_{\calB}Q_2(\inst{D}_2) = \pi_{\calB}K_2$.
As $\calA \subseteq \calB$, the relations $K_1$
and $K_2$ must be consistent.

\item $K$ is result-preserving with respect to 
$(Q, \calT, \inst{D})$:
Let $B$ be an arbitrary bag of $\calT$.
Since $\calT$ corresponds to $\calJ$,
the join tree $\calJ$ must have a node $R$
with $\sign{R} = B$. 
Without loss of generality, assume that 
 $R \in\inst{D}_1$ (the other case is handled 
 along the same lines).
Since, by induction hypothesis, 
 $K_1$ is result-preserving with respect to
 $(Q_1, \calT_1, \inst{D}_1)$ and $\inst{D}_1$ is globally consistent, 
 we have 
 $R= \pi_{\sign{R}}K_1$. 
 Since $\pi_{\sign{K_1}}K = K_1$
 and $\sign{R} \subseteq \sign{K_1}$,
 we get $R= \pi_{\sign{R}}K$.
 Using the global consistency of $\inst{D}$
 with respect to $Q$, we conclude 
 $\pi_{B}Q(\inst{D}) =  R=  \pi_{B}K$.

\item $K$ is a {\em minimal} result-preserving relation 
with respect to 
$(Q, \calT, \inst{D})$:
For the sake of contradiction, assume
that $K$ is not minimal in this respect. 
This means that there is 
a tuple $t^- \in K$ such that $\minus{K}{t^-}$ is still 
result-preserving with respect to $(Q, \calT, \inst{D})$.
It follows that 
$\pi_{\sign{K_i}} (\minus{K}{t^-})$ is result-preserving 
with respect to 
$(Q_i, \calT_i, \inst{D}_i)$
for each $i \in \indexnat{2}$.
Observe that the
minimal edge cover $M$ with $\relof{M} = K$ 
in the hypergraph of 
$K_1 \Join K_2$ 
over the attribute sets $\{\sign{K_1}, \sign{K_2}\}$ 
must contain an edge $e^-$ connecting 
$\pi_{\sign{K_1}}t^-$ and $\pi_{\sign{K_2}}t^-$.
This implies that $M$ cannot have two further edges $e_1$ and $e_2$ 
such that $e_1$ covers $\pi_{\sign{K_1}}t^-$ and $e_2$ covers 
$\pi_{\sign{K_2}}t^-$. Indeed, in this case, $\minus{M}{e^-}$ would be an edge cover, contradicting the minimality of 
$M$. 
Hence, there is no tuple $t \neq t^-$
in $K$ with $\pi_{\sign{K_1}}t^- = 
\pi_{\sign{K_1}}t$ or 
there is no tuple $t \neq t^-$
in $K$ with 
$\pi_{\sign{K_2}}t^-= \pi_{\sign{K_2}}t$.
It follows that  
$\pi_{\sign{K_1}} (\minus{K}{t^-}) \subset 
\pi_{\sign{K_1}} K$ or 
$\pi_{\sign{K_2}} (\minus{K}{t^-}) \subset 
\pi_{\sign{K_2}} K$. 
Using the consistency of $K_1$ and $K_2$, we obtain 
$\pi_{\sign{K_1}}$ $(\minus{K}{t^-})$ 
$\subset \pi_{\sign{K_1}} K$ $=$ 
$K_1$ or 
$\pi_{\sign{K_2}}$ $(\minus{K}{t^-})$ 
$\subset \pi_{\sign{K_2}} K$ $=$ $K_2$.
However, as we noticed that 
$\pi_{\sign{K_i}} (\minus{K}{t^-})$ is 
result-preserving with respect to
$(Q_i, \calT_i, \inst{D}_i)$
for each $i \in \indexnat{2}$,
the statement of the last sentence  contradicts the induction hypothesis that 
each $K_i$ is a {\em minimal} result-preserving relation with respect to 
$(Q_i, \calT_i, \inst{D}_i)$.

\end{itemize}

\underline{\em Size of $K$.}
From the global consistency of $\inst{D}$
with respect to $Q$
and Proposition 
 \ref{prop:size_bounds_hypergraph}, it follows 
 for any cover  
 $K$ of $Q(\inst{D})$ over the tree decomposition corresponding to 
 $\calJ$ that
$\takemax{{i\in[n]}}{\sizeof{R_i}} \leq |K|\leq \sum_{i\in[n]}|R_i|$.

\underline{\em Computation time for $K$.}
By Proposition  \ref{prop:comp_core_two_rel}, we can design
an algorithm for the cover-join operator which for every 
two input covers $K_1$ and $K_2$, computes 
a cover-join result of size 
$\calO(\sizeof{K_1}+ \sizeof{K_2})$
and
in time $\softO(\sizeof{K_1}+ \sizeof{K_2})$. Hence, given a triple 
$(Q,\calJ,\inst{D})$
and a cover-join plan $\varphi$ over $\calJ$,
starting from the innermost expressions of $\varphi$,
we can compute a cover $K$ of $Q(\inst{D})$
over the tree decomposition corresponding to $\calJ$ in time 
$\softO(\sizeof{K})$.

\section{Missing Details and Proofs in Section \ref{sec:applications}}
\label{app:sec:faq}
Given the hypergraph $H$ of an FAQ and 
an attribute set $U$, we denote by $H_U$ the hypergraph 
obtained from 
$H$ by restricting each hyperedge in $H$ to the attributes
in $U$.
For the rest of this section we fix an FAQ $\varphi$
as written in  (\ref{eq:faq}).

\subsection{Recap on FAQs}
Indicator projections are used in the \textsf{InsideOut} algorithm 
\cite{FAQ:PODS:2016} solving the FAQ-problem.
They will also occur in our construction of FAQ-covers. 
 
\begin{definition}[Indicator projections]
Given two attribute sets $S$ and $T$ with 
$S \cap T \neq \emptyset$ and a function $\psi_S$, the function
$\psi_{S/T}: \prod_{A\in (S \cap T)}\dom(A) \rightarrow \Dom$
defined by 
$$\psi_{S/T}({\textsf a}_{S \cap T}) =
  \begin{cases}
    \one
     & \quad \exists {\textsf b}_{S} \text{ s.t. } \psi_S(\valb_S) \neq 0 \text{ and }
       \vala_{S\cap T} = \valb_{S\cap T}, \\
    \zero  & \quad \text{otherwise } \\
  \end{cases}
$$
is called the indicator projection of $\psi_S$ onto $T$.
\end{definition}
In particular, if $S \subseteq T$, then 
$\psi_{S/T}(\vala_{S}) = \one$ if and
only if $\psi_S(\vala_{S}) \neq \zero$.

\smallskip
{\noindent\bf Equivalent attribute orderings.}
A $\varphi$-equivalent attribute ordering $\tau = \tau(1), \ldots , \tau(n)$ 
is a permutation of the indices of the attributes in $\calV$ 
 satisfying the following conditions:
\begin{enumerate}[(a)]
\item $\{A_{\tau(1)}, \ldots , A_{\tau(f)}\} = \{A_1, \ldots , A_f\}$ and
\item 
$$
\varphi'(\vala_{\{A_{\tau(1)},\ldots,A_{\tau(f)}\}}) = \underset{a_{\tau(f+1)}\in\textsf{dom}(A_{\tau(f+1)})}{\bigoplus\ ^{(\tau(f+1))}} \cdots \underset{a_{\tau(n)}\in\textsf{dom}(A_{\tau(n)})}{\bigoplus\ ^{(\tau(n))}}\  \underset{S\in\mathcal{E}}{\bigotimes}\ \psi_S({\sf a}_S)$$
is equivalent to $\varphi$ irrespective of the definition of the input functions 
$\psi_S$.
\end{enumerate}
We denote by $\evo(\varphi)$ the set of all $\varphi$-equivalent attribute orderings.

\paragraph*{The \textsf{InsideOut} algorithm}
Given an FAQ $\varphi$, a database $\inst{D}$ and a $\varphi$-equivalent 
attribute ordering, the \textsf{InsideOut} algorithm computes the 
listing representation of $\varphi(\inst{D})$.  
The algorithm first rewrites the query
according to the given attribute ordering and then processes
the resulting query in two phases:
{\em bound attribute elimination} and {\em output computation}.
We sketch the main steps of the algorithm on input 
$\varphi$, some database $\inst{D}$ and the attribute ordering that 
corresponds to the identity permutation. Thus, 
the initial rewriting step
does not change the structure of $\varphi$.

In the bound attribute elimination phase,
 the algorithm eliminates 
 attributes $A_{f+1},$ $\ldots,$ $A_{n}$   
along with their corresponding aggregate operators
in reverse order. 
 When eliminating an attribute $A_j$ 
 it distinguishes between the cases whether 
$\bigoplus^{(j)}$ is 
different from $\bigotimes$ or not. 
We demonstrate the two cases in the elimination step for $A_n$.
In case that $\bigoplus^{(n)}$ is 
different from $\bigotimes$, the algorithm first rewrites the query as follows
\begin{align*}
 &  
\underset{a_{f+1}\in\dom(A_{f+1})}{\bigoplus\ ^{(f+1)}} \cdots \underset{a_{n}\in\dom(A_{n})}{\bigoplus\ ^{(n)}}  \ \underset{S\in\mathcal{E}}{\bigotimes} \ \psi_S({\sf a}_S) \\
= & \underset{a_{f+1}\in\dom(A_{f+1})}{\bigoplus\ ^{(f+1)}} \cdots \underset{a_{n-1}\in\dom(A_{n-1})}{\bigoplus\ ^{(n-1)}} \ \underset{S\in\calE\backslash \partial(n)}{\bigotimes} \psi_S(\vala_S)
\otimes \Big(\underbrace{\underset{a_n \in \dom(A_n)}{\bigoplus\ ^{(n)}} \bigotimes_{S \in \partial(n)} \psi_S(\vala_S)}_{\delta}\Big),
\end{align*}
where $\partial(n)= \{S \in \calE \mid A_n \in S\}$
and $U_n = \bigcup_{S\in \partial(n)} S$.
The correctness of the rewriting 
follows from the distributivity of $\otimes$ over $\oplus^{(n)}$.
Then, the algorithm computes the listing representation 
of a function 
$\psi'_{U_n\backslash \{A_n\}}$ such that 
replacing $\delta$
by $\psi'_{U_n\backslash \{A_n\}}$
does not change the semantics of $\varphi$.
Observe that the cartesian product of 
the domains of the attributes in $U_n \backslash \{A_n\}$ can contain 
tuples $\vala_{U_n \backslash \{A_n\}}$ such that 
\begin{inparaenum}[(i)]
\item there is a $\psi_S$ with $S \in \calE \backslash \partial(n)$,  
$S \cap (U_n \backslash \{A_n\}) \neq \emptyset$ and
\item there is no $\valb_S$ that agrees with 
$\vala_{U_n \backslash \{A_n\}}$ on the common attributes 
and $\psi_S(\valb_S)\neq 0$.
\end{inparaenum}
Such tuples will not occur in the final result. To rule them out 
in advance, 
indicator projections are used inside $\psi'_{U_n\backslash \{A_n\}}$. 
The function $\psi'_{U_n\backslash \{A_n\}}$ is  defined as   
 $$\psi'_{U_n\backslash \{A_n\}}(\vala_{U_n\backslash \{A_n\}}) = \underset{a_n \in \dom(A_n)}{\bigoplus\ ^{(n)}} \bigg[ \Big( \bigotimes_{S \in \partial(n)}\psi_S(\vala_S) \Big) \otimes \Big(\bigotimes_{\substack{S \notin \partial(n) \\
 S \cap U_n \neq \emptyset}}  \psi_{S/U_{n}} (\vala_{S\cap U_n}) \Big)\bigg].$$
The computation of the listing representation of this function 
requires  the computation of the join of the 
listing representations of the functions $\psi_S$ with $S \in \partial(n)$
and the indicator projections. 
The computation time for this elimination step is  
$\softO(|\inst{D}|^{\rho^{\ast}(H_{U_n})})$.

In case that $\bigoplus^{(n)}$ is equal to $\bigotimes$, 
the formula is rewritten as follows

\begin{align*}
& 
\underset{a_{f+1}\in\dom(A_{f+1})}{\bigoplus\ ^{(f+1)}} \cdots \underset{a_{n}\in\dom(A_{n})}{\bigoplus\ ^{(n)}}  \ \underset{S\in\mathcal{E}}{\bigotimes} \ \psi_S(\vala_S) \\
= & \underset{a_{f+1}\in\dom(A_{f+1})}{\bigoplus\ ^{(f+1)}} \cdots 
\underset{a_{n-1}\in\dom(A_{n-1})}{\bigoplus\ ^{(n-1)}} \
\underset{a_{n}\in\dom(A_{n})}{\bigotimes}  \ 
\underset{S\in\mathcal{E}}{\bigotimes} \ \psi_S(\vala_S) \\
= &  \underset{a_{f+1}\in\dom(A_{f+1})}{\bigoplus\ ^{(f+1)}} \cdots \underset{a_{n-1}\in\dom(A_{n-1})}{\bigoplus\ ^{(n-1)}}  \ 
\underset{S\notin\partial(n)}{\bigotimes} \ \psi_S(\vala_S)^{|\dom(A_n)|}
\underset{S \in \partial(A_n)}{\bigotimes} \ \underbrace{\underset{a_n \in \dom(A_n)}{\bigotimes} \
\psi_S(\vala_S)}_{\delta^S},
\end{align*}
where $\partial(n)$ is defined as above.  
Then, the algorithm  computes for each 
$S \notin \partial(n)$, a function 
$\psi_S'$ equivalent to 
$\psi_S^{|\dom(A_n)|}$
and for each $S \in \partial(n)$,
a function $\psi_{S\backslash A_n}'$
equivalent to $\delta^S$. This elimination step can be realised in 
time 
$\softO(|\inst{D}|)$.

After the elimination of all bound attributes
we are 
left with a formula $\varphi'_{\vala_{\{A_1, \ldots ,A_f\}}}$ without any 
bound attributes.  
In the output computation phase 
the algorithm first computes 
(a factorized representation of) the set of tuples 
$\vala_{\{A_1, \ldots ,A_f\}}$
for which $\varphi'_{\vala_{\{A_1, \ldots ,A_f\}}}(\vala_{\{A_1, \ldots ,A_f\}})\neq \zero$ and then reports the output.

\smallskip
Before giving the overall running time of \textsf{InsideOut}, we 
introduce elimination hypergraph sequences corresponding 
to attribute orderings.

\paragraph*{Elimination hypergraph sequence}
Given a $\varphi$-equivalent 
attribute ordering $\tau = \tau(1),$ $\ldots ,$ $\tau(n)$, 
we recursively define 
the elimination hypergraph sequence 
$H_n^{\tau}, \ldots , 
H_1^{\tau}$ associated with 
$\tau$. For each $j$ with $n \geq j \geq 1$, we additionally define 
two sets $U_j^{\tau}$ and $\partial^{\tau}(j)$.
For the sake of readability, in the following we skip the 
superscript $\tau$ in our notation.

We set $H_n = (\calV_n, \calE_n) = H$ and 
define
$\partial(n) = \{S \in \calE_n \mid A_{\tau(n)} \in S\}$ and $U_n = \bigcup_{S \in \partial(n)}S.$

For each $j$ with $n-1 \geq j \geq 1$, we define:
\begin{itemize}
\item If $\bigoplus\ ^{(\tau(j+1))} = \bigotimes$, then, 
$\calV_j = \{A_{\tau(1)}, \ldots , A_{\tau(j)}\}$ and 
$\calE_j$ is 
obtained from $\calE_{j+1}$ by removing $A_{\tau(j+1)}$ 
from all edges in 
$\calE_{j+1}$.
\item Otherwise, $\calV_j = \{A_{\tau(1)}, \ldots , A_{\tau(j)}\}$ and 
   $\calE_j = (\calE_{j+1} \backslash \partial(j+1) ) \cup 
   (U_{j+1} \backslash \{A_{\tau(j+1)}\})$.
\end{itemize}
We further set 
$\partial(j) = \{S \in \calE_j \mid A_{\tau(j)} \in S\}$ and 
$U_j = \bigcup_{S \in \partial(j)}S.$

\paragraph*{Running time of \textsf{InsideOut}}
For a $\varphi$-equivalent
attribute ordering $\tau$,
let $K = [f] \cup \{ j \mid j > f, \oplus^{(\tau(j))} \neq \otimes \}$.
The FAQ-width of $\tau$ is defined as 
$\faqw{\tau} = 
\takemax{j \in K}{\rho^{\ast}(H_{U_j^{\tau}})}$.
For a given $\tau$, \textsf{InsideOut} runs in time
$\softO(|\inst{D}|^{\faqw{\tau}} + Z)$ 
where $Z$ is the size 
of the output.
The FAQ-width of $\varphi$ is defined as
$\faqw{\varphi} = \takemin{\tau \in \evo(\varphi)}{\faqw{\tau}}$.
 Hence, given the best attribute ordering (i.e., with smallest FAQ-width),
 the running time of 
\textsf{InsideOut} is 
$\softO(|\inst{D}|^{\faqw{\varphi}} + Z)$.

\paragraph*{From attribute orderings to decompositions}
We say that $\calT$ is 
a decomposition of $\varphi$
if $\calT$ is a decomposition of the hypergraph $H$
of $\varphi$.

\begin{proposition}[\cite{FAQ:KhamisNRR15:arxiv}, Proposition C.2]\label{prop:induced_hypertree}  
For any FAQ $\varphi$ without bound attributes 
and any $\varphi$-equivalent attribute ordering
$\tau$, one can construct  a  
decomposition $\calT$ of $\varphi$
with $\fhtw{\calT} \leq \faqw{\tau}$.
\end{proposition}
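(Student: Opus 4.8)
Since $\varphi$ has no bound attributes we have $f=n$, so $K=[n]$, and in the elimination hypergraph sequence $H^\tau_n=H,\ H^\tau_{n-1},\ \ldots,\ H^\tau_1$ of $\tau$ every step is of join type (no $\otimes$-elimination occurs): passing from $\calE^\tau_j$ to $\calE^\tau_{j-1}$ one deletes the edges of $\partial^\tau(j)$ and adds the single fill-in edge $U^\tau_j\setminus\{A_{\tau(j)}\}$, and edges are never merely shrunk. The plan is to carry out the classical construction that turns the corresponding vertex-elimination order into a tree decomposition, and then to read off the width.

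Concretely, I would define $\calT=(T,\chi,\{\gamma_t\}_{t\in T})$ as follows. The tree $T$ has one node $t_j$ for each $j\in[n]$ with bag $\chi(t_j):=U^\tau_j$; if $U^\tau_j\neq\{A_{\tau(j)}\}$ then the parent of $t_j$ is $t_{p(j)}$ where $p(j):=\max\{\, i<j : A_{\tau(i)}\in U^\tau_j \,\}$, and the finitely many remaining roots (the $t_j$ with $U^\tau_j=\{A_{\tau(j)}\}$, one per connected component of $H$) are linked into a path. For each $j$, take $\gamma_{t_j}$ to be a minimum-weight fractional edge cover of the subgraph of $H$ obtained by restricting every edge to $U^\tau_j$ (i.e.\ of $H_{U^\tau_j}$), so that $\weight{\gamma_{t_j}}=\rho^*(H_{U^\tau_j})$.

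It then remains to verify the three defining properties. \emph{Coverage}: for an original edge $S\in\calE$ put $j:=\max\{\, i : A_{\tau(i)}\in S \,\}$; no vertex of $S$ has been eliminated before step $j$, so $S\in\calE^\tau_j$, and since $A_{\tau(j)}\in S$ we get $S\in\partial^\tau(j)$, hence $S\subseteq U^\tau_j=\chi(t_j)$. \emph{Connectivity}: fix $A=A_{\tau(i)}$; any bag containing $A$ other than $\chi(t_i)$ is some $\chi(t_j)$ with $j>i$ (because $U^\tau_j$ only contains attributes $A_{\tau(i')}$ with $i'\le j$), and the fill-in edge $U^\tau_j\setminus\{A_{\tau(j)}\}$ created when $A_{\tau(j)}$ is eliminated contains $A$ and persists until its highest-index vertex, namely $A_{\tau(p(j))}$, is processed; this forces $A\in U^\tau_{p(j)}$ with $i\le p(j)<j$, so iterating $p$ gives a strictly decreasing chain $j>p(j)>\cdots$ that stays $\ge i$ and terminates at $i$. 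Thus $t_i$ is an ancestor of $t_j$ and every node on the tree path from $t_j$ to $t_i$ contains $A$, i.e.\ $\{\, t : A\in\chi(t) \,\}$ is a subtree of $T$ rooted at $t_i$. \emph{Width}: by construction $\fhtw{\calT}=\max_{j\in[n]}\weight{\gamma_{t_j}}=\max_{j\in[n]}\rho^*(H_{U^\tau_j})=\faqw{\tau}$, which is in particular at most $\faqw{\tau}$.

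The only step that is not a direct unfolding of definitions is the connectivity check, and it is precisely there that the specific bags $U^\tau_j$ matter: the argument hinges on the fill-in rule $\calE^\tau_{j-1}\supseteq\{\, U^\tau_j\setminus\{A_{\tau(j)}\} \,\}$, which is what carries each shared attribute upward along the parent pointers until the node at which that attribute is eliminated. Because the construction already yields $\fhtw{\calT}=\faqw{\tau}$ with equality, the claimed inequality follows at once and no further (sub-optimal) adjustment of the decomposition is needed; the only standing assumption to record is the usual one that every attribute of $\varphi$ occurs in some hyperedge, so that each $\gamma_{t_j}$ is finite.
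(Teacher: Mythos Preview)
The paper does not prove this proposition; it simply cites it as Proposition~C.2 of \cite{FAQ:KhamisNRR15:arxiv}. There is therefore no in-paper proof to compare against.

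Your argument is the standard vertex-elimination construction and is correct. One point worth making explicit: the paper's definition of the elimination hypergraph sequence branches on whether $\oplus^{(\tau(j+1))}=\otimes$, but when $f=n$ the operator $\oplus^{(\tau(j+1))}$ is undefined for every $j+1\le n$. Your choice to use the fill-in rule (the ``Otherwise'' branch) throughout is the right one and is forced by the definition of $\faqw{\tau}$: since $K=[n]$, every $U^\tau_j$ contributes to the width, which only makes sense if each step produces the fill-in edge. You might state this convention rather than asserting ``no $\otimes$-elimination occurs.'' Also, your indexing (``from $\calE^\tau_j$ to $\calE^\tau_{j-1}$ delete $\partial^\tau(j)$'') is shifted by one relative to the paper's (``from $\calE_{j+1}$ to $\calE_j$ delete $\partial(j+1)$''); this is harmless but could confuse a reader checking against the text. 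Otherwise the coverage and connectivity checks are clean, and you correctly obtain $\fhtw{\calT}=\faqw{\tau}$ with equality.
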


\subsection{Covers for FAQs}
Given two input functions $\psi_S$ and $\psi_T$
with $T \subseteq S$, we can always compute 
the function $\psi'_S = \psi_S \otimes \psi_T$
in time $\softO(|R_{\psi_S}| + |R_{\psi_T}|)$  
and replace $\psi_S \otimes \psi_T$     
by $\psi'_S$ without changing the semantics of 
the FAQ. 
To do this, we first sort the listing representations 
$R_{\psi_S}$ and $R_{\psi_T}$ of
 $\psi_S$ and $\psi_T$ on the attributes in $T$. 
 During a subsequent 
scan through both relations we add 
for each pair $\vala_{S \cup \{\psi_S(S)\}} \in R_{\psi_S}$ and
$\valb_{T\cup \{\psi_T(T)\}} \in R_{\psi_T}$
with $\vala_T = \valb_T$,
the tuple 
$\valc_{S\cup \{\psi'_S(S)\}}$
with 
$\valc_S = \vala_S$ and $\valc_{\{\psi'_S(S)\}}= 
\psi_S(\vala_S) \otimes \psi_T(\valb_T)$
to the listing representation of $\psi'_S$.
Hence, in the following 
we assume, without loss of generality, that $\varphi$
does not contain any function 
whose attributes are included in 
the attribute set of another function.

\paragraph*{Bag functions}
Given an FAQ $\varphi$ without bound attributes and 
a  decomposition of $\varphi$,
we define bag functions 
which are the counterparts of bag relations in case of join queries. 
 Our goal is to define for each bag $B$ of $\calT$,
  a function $\beta_B$ such that  
  $\varphi(\vala_{\calV}) = 
  \bigotimes_{B \in \sign{\calT}} \beta_B(\vala_{B})$.
  While in case of join queries 
   it is harmless to include all relations sharing attributes with 
   $B$ into the join computing  the bag relation of $B$, 
   in case of FAQs
   we have to be a bit
   careful. Including the same input function 
   into the computation of bag functions 
   of several bags can violate the above equality. 
   Therefore, in the definition below we use a mapping 
    from input functions to bags.
   To keep the sizes of the bag functions small 
   we also use indicator projections which achieve pairwise 
   consistency between 
   listing representations of  
   bag functions sharing attributes.

\begin{definition}[Bag functions]\label{def:bag_functions}
Given an FAQ $\varphi$ without bound attributes 
and a  decomposition $\calT$
of $\varphi$, 
a set $\{\beta_B\}_{B \in \sign{\calT}}$
is called a set of bag functions for $\varphi$
and $\calT$ 
if there is a mapping 
$m: \calE \rightarrow \sign{\calT}$ such that 
$S \subseteq m(S)$ for each $S \in \calE$ and 
$\beta_B$ is 
defined by
$$\beta_{B}(\vala_B) = 
 \underset{{S \in \calE:S \cap B \neq \emptyset}}{\bigotimes} \psi_{S/B}({\vala}_{B\cap S}) \ \otimes \underset{S \in \calE: m(S) = B}{\bigotimes} \psi_S(\vala_{S})$$
for each $B \in \sign{\calT}$.

We define 
$\calB(\varphi,\calT) = \{\setindexel{\beta_B}{B}{\sign{\calT}}
\mid$  $\setindexel{\beta_B}{B}{\sign{\calT}}$ 
is a set of bag functions for $\varphi$ and $\calT\}$.
\end{definition}
Note that since each hyperedge in the hypergraph of $\varphi$
must be included in at least one bag of the decomposition, 
one can always find a mapping 
$m$ meeting the condition given in the above definition. 
Observe also that for bags $B$ to which no input function is mapped, 
the function $\beta_B$ is just the product of indicator projections of all
$\psi_S$ sharing attributes with $B$ onto $B$.

\begin{observation}\label{obs:faq_equiv_bag_fac}
Given an FAQ $\varphi$ without bound attributes, 
a  decomposition $\calT$
of $\varphi$ and a set 
$\{\beta_B\}_{B \in \sign{\calT}} \in \calB(\varphi,\calT)$,
it holds 
$$\varphi(\vala_{\calV}) = \underset{B\in \sign{\calT}}{\bigotimes}\ 
\beta_B({\vala}_{B}).$$
\end{observation}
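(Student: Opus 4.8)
The plan is to verify the identity pointwise, for every tuple $\vala_{\calV}$ over $\calV$, using nothing beyond associativity and commutativity of $\otimes$, the fact that $\zero$ is absorbing for $\otimes$ in a commutative semiring, and the definition of indicator projections. Fix such a tuple $\vala_{\calV}$ and let $m\colon\calE\to\sign{\calT}$ be the mapping witnessing that $\{\beta_B\}_{B\in\sign{\calT}}$ is a set of bag functions, so $S\subseteq m(S)$ for all $S\in\calE$. Since $\varphi$ has no bound attributes, Equation~(\ref{eq:faq}) degenerates to $\varphi(\vala_{\calV})=\bigotimes_{S\in\calE}\psi_S(\vala_S)$. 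Substituting the definition of each $\beta_B$ into $\bigotimes_{B\in\sign{\calT}}\beta_B(\vala_B)$ and regrouping the finitely many factors, one gets
\begin{equation*}
\bigotimes_{B\in\sign{\calT}}\beta_B(\vala_B)
\;=\;\Bigl(\bigotimes_{B\in\sign{\calT}}\ \bigotimes_{\substack{S\in\calE\\ S\cap B\neq\emptyset}}\psi_{S/B}(\vala_{B\cap S})\Bigr)\;\otimes\;\Bigl(\bigotimes_{B\in\sign{\calT}}\ \bigotimes_{\substack{S\in\calE\\ m(S)=B}}\psi_S(\vala_S)\Bigr).
\end{equation*}
Write $P$ for the first factor on the right. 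Because $m$ is a function, the sets $\{S\in\calE : m(S)=B\}$ for $B\in\sign{\calT}$ partition $\calE$, so the second factor equals $\bigotimes_{S\in\calE}\psi_S(\vala_S)=\varphi(\vala_{\calV})$; it thus remains to argue that $P\otimes\varphi(\vala_{\calV})=\varphi(\vala_{\calV})$.

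I would conclude by a two-case analysis. If $\varphi(\vala_{\calV})=\zero$, then $\bigotimes_{B\in\sign{\calT}}\beta_B(\vala_B)=P\otimes\zero=\zero=\varphi(\vala_{\calV})$, using that $\zero$ is $\otimes$-absorbing. If $\varphi(\vala_{\calV})\neq\zero$, then $\psi_S(\vala_S)\neq\zero$ for every $S\in\calE$; hence for every $B\in\sign{\calT}$ and every $S\in\calE$ with $S\cap B\neq\emptyset$, the restriction $\valb_S:=\vala_S$ is a witness showing $\psi_{S/B}(\vala_{B\cap S})=\one$ directly from the definition of indicator projection. Therefore $P$ is a product of copies of $\one$, so $P=\one$, and $\bigotimes_{B\in\sign{\calT}}\beta_B(\vala_B)=\one\otimes\varphi(\vala_{\calV})=\varphi(\vala_{\calV})$.

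I do not expect a genuine obstacle here; the proof is elementary once the regrouping above is set up. The two points that need a little care are (i) invoking the absorbing property of $\zero$ under $\otimes$ (an axiom of commutative semirings) to dispatch the degenerate case, and (ii) the observation that $m$ induces a partition of $\calE$, which is exactly what guarantees that each input function $\psi_S$ is counted once among the non-indicator factors. Note that neither the connectivity nor the coverage property of $\calT$ is used beyond the fact that coverage is what makes a suitable $m$ exist, and this is already built into the hypothesis $\{\beta_B\}_{B\in\sign{\calT}}\in\calB(\varphi,\calT)$.
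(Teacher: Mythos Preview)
Your argument is correct. The paper states this result as an Observation without proof, so there is no approach to compare against; your two-case verification via the absorbing property of $\zero$ and the definition of indicator projections is exactly the routine check the paper leaves implicit. One cosmetic remark: when you say the fibres $\{S\in\calE:m(S)=B\}$ partition $\calE$, some of these may be empty (nothing forces $m$ to be surjective), but this is harmless since the empty product is $\one$ and your regrouping only needs that each $S\in\calE$ occurs in exactly one fibre.
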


Given $\{\beta_B\}_{B \in \sign{\calT}} \in \calB(\varphi,\calT)$,
we denote by $\ext(\calT,\{\beta_B\}_{B \in \sign{\calT}})$
the decomposition obtained from $\calT$ by adding 
into each bag $B$ the attribute $\beta_B(B)$.  
Observe that if $\calT$ is a decomposition of  
$\varphi(\vala_{\calV}) = \bigotimes_{B\in \sign{\calT}} 
\beta_B({\vala}_{B})$, then, $\ext(\calT,\{\beta_B\}_{B \in \sign{\calT}})$
is a decomposition of the query joining 
the listing representations of the functions 
$\beta_B$. Moreover, $\calT$
and $\ext(\calT,\{\beta_B\}_{B \in \sign{\calT}})$
have the same fractional hypertree width.

\paragraph*{Covers of FAQ results}
We turn towards the general case where 
FAQs can contain bound attributes also. 
Let $\tau = \tau_1\tau_2$
be a $\varphi$-equivalent attribute ordering 
where $\tau_1$ consists of the free and $\tau_2$
consists of the bound attributes in $\varphi$.
By $\varphi^{\tau}_{\free}$ we denote the FAQ constructed 
by the \textsf{InsideOut} algorithm after eliminating all bound attributes in $\varphi$
according to the ordering $\tau_2$.
We write $(\varphi,\tau,\calT, \inst{D})$ to express
that $\varphi$ is an FAQ, $\tau$ is a $\varphi$-equivalent 
attribute ordering, $\calT$ is a  decomposition
of  $\varphi^{\tau}_{\free}$ with $\fhtw{\calT} \leq \faqw{\tau_1}$
and $\inst{D}$ is an input database for $\varphi$.
Note that due to Proposition \ref{prop:induced_hypertree},
for any $\tau$
such a decomposition $\calT$ is always constructible.

\begin{definition}[Covers of FAQ results]\label{def:faq_covers}
Given $(\varphi,\tau, \calT,\inst{D})$,  
a relation $K$ is a cover of the query result $\varphi(\inst{D})$ over $\calT$
induced by $\tau$ if  there is a set  
$\{\beta_B\}_{B \in \sign{\calT}}\in \calB(\varphi^{\tau}_{\free},\calT)$ 
such that $K$ is a cover 
of the join of the relations $\setindexel{R_{\beta_B}}{B}{\sign{\calT}}$ over 
$\ext(\calT,\{\beta_B\}_{B \in \sign{\calT}})$.

We call $\{\beta_B\}_{B \in \sign{\calT}}$ the set of bag functions underlying $K$.
\end{definition}
Observe that if $K$ is a cover of $\varphi(\inst{D})$ over $\calT$ with underlying bag functions 
$\{\beta_B\}_{B \in \sign{\calT}}$, then, 
$\pi_{\calV_{\text{free}}} K$ must be a cover
of $\setindexjoin{\pi_B}{B}{\sign{\calT}}R_{\beta_B}$
over $\calT$.

The following Proposition relies
on Lemma 
\ref{lemma:core-plan-acyclic} 
and Theorem \ref{theo:time_core_join_plan}
which give an 
upper bound on the time complexity for constructing 
covers of join results. 

\begin{proposition}\label{prop:core_FAQ_tau_upper_bound}
Given $(\varphi,\tau, \calT,\inst{D})$, 
a cover of the query result $\varphi(\inst{D})$ over $\calT$ induced by $\tau$ 
can be computed in time 
 $\softO(|\inst{D}|^{\faqw{\tau}})$.
\end{proposition}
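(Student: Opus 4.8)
The plan is to reduce the claim to the join-query case handled by Lemma~\ref{lemma:core-plan-acyclic}, by first using \textsf{InsideOut} to remove the bound attributes and then replacing the remaining join query by an acyclic one whose relations are the listing representations of a set of bag functions. Concretely I would proceed as follows. First, run \textsf{InsideOut} on $\varphi$ following $\tau$ to eliminate the bound attributes in the order $\tau_2$; this yields the FAQ $\varphi^{\tau}_{\free}$ together with the listing representations of its input functions. Every $\otimes$-elimination step is linear, and every $\oplus$-elimination step with $\oplus\neq\otimes$ runs in time $\softO(|\inst{D}|^{\rho^*(H_{U_j})})$ for an attribute set $U_j$; since the index of each such step contributes to the maximum defining $\faqw{\tau}$, this phase runs in time $\softO(|\inst{D}|^{\faqw{\tau}})$ and (the query being fixed) creates only $\calO(1)$ new functions, each of size $\calO(|\inst{D}|^{\faqw{\tau}})$. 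Second, merge in linear time any input function of $\varphi^{\tau}_{\free}$ whose schema is contained in another's, so that Definition~\ref{def:bag_functions} applies. Third, fix a mapping $m:\calE\to\sign{\calT}$ with $S\subseteq m(S)$ for each edge $S$ of $\varphi^{\tau}_{\free}$ (it exists because $\calT$ covers every edge) and let $\{\beta_B\}_{B\in\sign{\calT}}$ be the induced bag functions. Fourth, compute the listing representations $R_{\beta_B}$, $B\in\sign{\calT}$, of total size $\calO(|\inst{D}|^{\faqw{\tau}})$ in time $\softO(|\inst{D}|^{\faqw{\tau}})$. Fifth, run a semi-join pass on $\{R_{\beta_B}\}_{B\in\sign{\calT}}$ to make it globally consistent with respect to the acyclic natural join query $\widehat{Q}=\Join_{B\in\sign{\calT}}R_{\beta_B}$ (removing dangling tuples changes neither $\widehat{Q}(\{R_{\beta_B}\})$ nor, since $\zero$ is absorbing for $\otimes$, the aggregates reconstructed via $\bigotimes_B\beta_B$), and execute a cover-join plan over the join tree obtained from the tree of $\calT$ by renaming each bag $B$ to $R_{\beta_B}$.

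By Observation~\ref{obs:faq_equiv_bag_fac} and the definition of $\ext(\calT,\{\beta_B\})$, $\ext(\calT,\{\beta_B\})$ is a width-one decomposition of $\widehat{Q}$, and a cover of $\widehat{Q}(\{R_{\beta_B}\})$ over $\ext(\calT,\{\beta_B\})$ is, by Definition~\ref{def:faq_covers}, exactly a cover of $\varphi(\inst{D})$ over $\calT$ induced by $\tau$ with underlying bag functions $\{\beta_B\}$. Lemma~\ref{lemma:core-plan-acyclic} then guarantees that the cover-join plan of the fifth step produces such a cover $K$ with $|K|\le\sum_{B\in\sign{\calT}}|R_{\beta_B}|=\calO(|\inst{D}|^{\faqw{\tau}})$ in time $\softO(|K|)$, so summing the costs of the five steps gives the claimed $\softO(|\inst{D}|^{\faqw{\tau}})$. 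Routine side-checks: $\faqw{\tau_1}\le\faqw{\tau}$, hence $\fhtw{\calT}\le\faqw{\tau}$; $\ext(\calT,\{\beta_B\})$ inherits connectivity from $\calT$ because each value attribute occurs in a single bag; and the indicator projections used inside the $\beta_B$ are plain projections (equivalently, semi-joins), hence add no blow-up.

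The one step that needs real work is the fourth, and it is the main obstacle. The difficulty is that the input functions of $\varphi^{\tau}_{\free}$ can themselves already have size $\Theta(|\inst{D}|^{\faqw{\tau}})$, so materializing this intermediate database $\inst{D}'$ and then materializing the bags of $\calT$ over it as in Proposition~\ref{prop:rewriting2} would only give the bound $\softO(|\inst{D}'|^{\fhtw{\calT}})=\softO(|\inst{D}|^{\faqw{\tau}\cdot\fhtw{\calT}})$, which is \emph{a priori} too weak when $\varphi^{\tau}_{\free}$ is cyclic. The fix is to not pass through $\inst{D}'$: each $R_{\beta_B}$ must be produced directly from the \emph{original} input functions $\psi_S$ of $\varphi$ by an \textsf{InsideOut}-style variable elimination, so that its cost is measured against the \emph{original} query hypergraph --- which is precisely what $\faqw{\tau}$ bounds, since for every elimination step of $\tau$ (both the bound-variable steps of $\tau_2$ and the free-variable steps of $\tau_1$) $\faqw{\tau}$ upper-bounds $\rho^*$ of the original hypergraph restricted to the attribute set handled at that step. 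In other words, steps one and four should be fused into a single run of \textsf{InsideOut} on $\varphi$ with ordering $\tau$ in which the final output-computation phase --- the one that would cost the extra $+Z$ --- is replaced by the materialization of the bag relations prescribed by $\calT$ and $m$ (these are sub-computations of $\varphi^{\tau}_{\free}$ indexed by prefixes of $\tau_1$, reflecting that $\calT$ can be taken to be a decomposition compatible with $\tau_1$ in the sense of Proposition~\ref{prop:induced_hypertree}); replaying the \textsf{InsideOut} cost analysis for this modified termination then yields the required $\calO(|\inst{D}|^{\faqw{\tau}})$ size and $\softO(|\inst{D}|^{\faqw{\tau}})$ time for all the $R_{\beta_B}$ together.
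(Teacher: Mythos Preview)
Your proposal follows the same three-phase structure as the paper's proof: run \textsf{InsideOut} along $\tau_2$ to obtain $\varphi^{\tau}_{\free}$, build bag functions $\{\beta_B\}_{B\in\sign{\calT}}$, then execute a cover-join plan on $\{R_{\beta_B}\}$ and invoke Lemma~\ref{lemma:core-plan-acyclic}. The paper's own argument is considerably terser: it simply asserts, ``by Proposition~\ref{prop:rewriting2}'', that the bag functions have size $\calO(|\inst{D}|^{\fhtw{\calT}})$ and can be computed in time $\softO(|\inst{D}|^{\fhtw{\calT}})$, and then chains $\fhtw{\calT}\le\faqw{\tau_1}\le\faqw{\tau}$.

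The concern you raise about the fourth step is legitimate and is precisely the point the paper's citation of Proposition~\ref{prop:rewriting2} sweeps under the rug. A literal application of that proposition to the database $\inst{D}'$ of listing representations of $\varphi^{\tau}_{\free}$ would indeed yield $\softO(|\inst{D}'|^{\fhtw{\calT}})$, and since $|\inst{D}'|$ can be $\Theta(|\inst{D}|^{\faqw{\tau_2}})$ this does not give the claimed bound when $\fhtw{\calT}>1$. Your fix---fusing bound-attribute elimination and bag materialization into a single \textsf{InsideOut} pass so that each intermediate is charged against the \emph{original} hypergraph via the indicator-projection mechanism---is exactly the argument that makes the paper's claimed bound correct; it is what the \textsf{InsideOut} running-time analysis actually establishes, and the paper is implicitly relying on it. Your explicit semi-join pass before invoking Lemma~\ref{lemma:core-plan-acyclic} is also a detail the paper leaves implicit (that lemma needs global consistency). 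So your proof is the same approach, made rigorous where the paper is elliptic.
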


\begin{proof}
\underline{\em Construction.}
Let $\tau = \tau_1\tau_2$
where $\tau_1$ consists of the
free and $\tau_2$ consists of the 
bound attributes in $\varphi$. 
We first run \textsf{InsideOut} on $\varphi$ according to 
the attribute ordering $\tau_2$ until
all bound attributes are eliminated and we obtain 
$\varphi^{\tau}_{\free}$. 
Then, we construct a set  
$\{\beta_B\}_{B \in \sign{\calT}} \in \calB(\varphi^{\tau}_{\free},\calT)$
of bag functions.
Finally, using a cover-join plan as introduced 
in Definition \ref{def:core-join-plan}, we construct  a 
cover $K$ of the join of the relations $\setindexel{R_{\beta_B}}{B}{\sign{\calT}}$
over $\ext(\calT,\{\beta_B\}_{B \in \sign{\calT}})$.

\smallskip
\underline{\em Construction time.}  
The FAQ $\varphi^{\tau}_{\free}$ can be computed in time 
$\softO(|\inst{D}|^{\faqw{\tau_2}})$
\cite{FAQ:PODS:2016}.
The construction of the bag functions $\{\beta_B\}_{B \in \sign{\calT}}$ 
can be realised via the 
computation of the bag relations of $\calT$.
By Proposition \ref{prop:rewriting2}, 
the size of the listing representations 
of these bag functions is 
$\calO(|\inst{D}|^{\fhtw{\calT}})$ 
and their  computation time is 
$\softO(|\inst{D}|^{\fhtw{\calT}})$. 
By Theorem \ref{lemma:core-plan-acyclic},
$K$ can be computed
in time 
$\softO(\Sigma_{B \in \sign{\calT}}
|R_{\beta_B}|)$. 
Hence, the time for computing  
$K$ from $\varphi^{\tau}_{\free}$
is $\softO(|\inst{D}|^{\fhtw{\calT}})$.
Since $\faqw{\tau} = \takemax{1 \leq i \leq 2}{\faqw{\tau_i}}$
and $\fhtw{\calT} \leq \faqw{\tau_1}$ 
(by construction),
the overall computation time is
$\softO(|\inst{D}|^{\faqw{\tau}})$.
\end{proof}

Theorem \ref{theo:core_FAQ_upper_bound} is an immediate corollary: 

\smallskip
\smallskip
{\noindent\bf Theorem \ref{theo:core_FAQ_upper_bound}.}
\textit{
For any FAQ $\varphi$ and database $\inst{D}$, a cover of the query result $\varphi(\inst{D})$
can be computed in time 
 $\widetilde{\calO}(\sizeof{\inst{D}}^{\faqw{\varphi}})$.
}

\subsection{Enumeration of Tuples in FAQ Results using Covers}
\label{sec:app_faq_enum}
Any enumeration algorithm on covers
of join results 
can easily be turned into an enumeration algorithm
on covers of FAQ-results. 
Assume that $K$ is a cover
of the result of the FAQ $\varphi$ over some decomposition 
$\calT$ (induced by some attribute ordering).
Let $\setindexel{\beta_B}{B}{\sign{\calT}}$
be the underlying set of bag functions.  
Recall that the set of attributes of $K$
is $\calV_{\text{free}} \cup \setindexel{\beta_B(B)}{B}{\sign{\calT}}$
and the set of attributes of the 
listing representation of $\varphi$ must be 
$\calV_{\text{free}} \cup \{\varphi(\calV_{\text{free}})\}$. 
To enumerate the listing representation of $\varphi$, 
we can run any enumeration  
algorithm on $K$ with respect to the decomposition 
$\ext(\calT, \setindexel{\beta_B}{B}{\sign{\calT}})$
and adapt its output as follows. 
For each output tuple 
$\vala_{\calV_{\text{free}} \cup \setindexel{\beta_B(B)}{B}{\sign{\calT}}}$,
we output the tuple
$\valb_{\calV_{\text{free}} \cup \{\varphi(\calV_{\text{free}})\}}$
that agrees with 
$\vala_{\calV_{\text{free}} \cup \setindexel{\beta_B(B)}{B}{\sign{\calT}}}$
on $\calV_{\text{free}}$ and where the 
$\varphi(\calV_{\text{free}})$-value
is defined by $\bigotimes_{B \in \sign{\calT}}\vala_{\{\beta_B(B)\} }$.

The following proposition shows that by 
this strategy we indeed enumerate the listing representation of 
$\varphi$.

\begin{proposition}\label{prop:equiv_phi_join}
Given $(\varphi,\tau, \calT,\inst{D})$, let $K$
be a cover of the query result of $\varphi(\inst{D})$ over $\calT$ induced by 
$\tau$
and
let $\setindexel{\beta_B}{B}{\calT}$ be the set of bag functions
underlying $K$.
It holds
$$\varphi(\vala_{\calV_{\text{free}}}) = v \neq 0 \text{ for some } v \in \Dom$$
\center{if and only if}
$$\exists \valb_{\calV_{\text{free}} \cup \setindexel{\beta_B(B)}{B}{\sign{\calT}}} \in \setindexjoin{\pi_{B \cup \{\beta_B(B) \} }K}{B}{\sign{\calT}},
\ \vala_{\calV_{\text{free}}} = \valb_{\calV_{\text{free}}} 
\text{ and }
\bigotimes_{B \in \sign{\calT}} \valb_{ \{\beta_B(B)\}} = v.$$
\end{proposition}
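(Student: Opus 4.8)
The plan is to unwind both sides through the chain of equalities established earlier and reduce the statement to a purely relational identity about covers, which is already available.

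First I would recall the two ingredients that do the heavy lifting. By Observation~\ref{obs:faq_equiv_bag_fac}, for the FAQ $\varphi^{\tau}_{\free}$ without bound attributes we have $\varphi^{\tau}_{\free}(\vala_{\calV_{\text{free}}}) = \bigotimes_{B\in\sign{\calT}} \beta_B(\vala_B)$, and since \textsf{InsideOut} performs equivalence-preserving transformations, $\varphi(\vala_{\calV_{\text{free}}}) = \varphi^{\tau}_{\free}(\vala_{\calV_{\text{free}}})$. So the left-hand condition ``$\varphi(\vala_{\calV_{\text{free}}}) = v \neq 0$'' is equivalent to ``$\bigotimes_{B\in\sign{\calT}}\beta_B(\vala_B) = v \neq 0$''. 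Because $\zero$ is absorbing for $\otimes$, this product is nonzero precisely when every factor $\beta_B(\vala_B)$ is nonzero, i.e.\ when for each $B$ the tuple $\vala_{B\cup\{\beta_B(B)\}}$ with the correct output value lies in the listing representation $R_{\beta_B}$.

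Second, I would use the definition of a cover of an FAQ result (Definition~\ref{def:faq_covers}): $K$ is a cover of $\setindexjoin{R_{\beta_B}}{B}{\sign{\calT}}$ over $\ext(\calT,\setindexel{\beta_B}{B}{\sign{\calT}})$. By Proposition~\ref{Q(D)-preserv=join_preserv} (result preservation $\Leftrightarrow$ join reconstruction), applied to the natural join query whose relations are the $R_{\beta_B}$ and whose decomposition is $\ext(\calT,\setindexel{\beta_B}{B}{\sign{\calT}})$, we get $\setindexjoin{\pi_{B\cup\{\beta_B(B)\}}K}{B}{\sign{\calT}} = \setindexjoin{R_{\beta_B}}{B}{\sign{\calT}}$. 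So the right-hand side of the claimed equivalence says exactly: there is a tuple in $\setindexjoin{R_{\beta_B}}{B}{\sign{\calT}}$ agreeing with $\vala$ on $\calV_{\text{free}}$ whose $\beta_B(B)$-components multiply (under $\otimes$) to $v$. Unfolding the join, such a tuple exists iff for every $B\in\sign{\calT}$ there is a tuple $\valb^B_{B\cup\{\beta_B(B)\}}\in R_{\beta_B}$ with $\valb^B_{\calV_{\text{free}}\cap B}=\vala_{\calV_{\text{free}}\cap B}$, and these agree on shared attributes, and $\bigotimes_B \valb^B_{\{\beta_B(B)\}} = v$. Since the $\beta_B(B)$-attributes are fresh and distinct across bags, the only coupling is through $\calV_{\text{free}}$ and the bound attributes of $\varphi^{\tau}_{\free}$, and a joint tuple exists iff each $R_{\beta_B}$ contains the tuple $\vala_B$ together with value $\beta_B(\vala_B)$ — which is exactly the condition $\beta_B(\vala_B)\neq\zero$ for all $B$, with the product of these values equal to $v$. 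This matches the reformulation of the left-hand side from the previous paragraph, closing the equivalence.

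The main obstacle I anticipate is bookkeeping around the extended decomposition and the fresh value-attributes: one must check that $\ext(\calT,\setindexel{\beta_B}{B}{\sign{\calT}})$ is genuinely a decomposition of the natural join of the $R_{\beta_B}$ (so that Proposition~\ref{Q(D)-preserv=join_preserv} applies) and that projecting $K$ onto $B\cup\{\beta_B(B)\}$ recovers $R_{\beta_B}$ — this uses that the $\beta_B$ are pairwise consistent via the indicator projections built into Definition~\ref{def:bag_functions}, and that the join of their listing representations is precisely the listing representation of $\bigotimes_B\beta_B$ restricted to nonzero outputs. I would state these as two short lemmas (``$\ext$ preserves the decomposition and its width'' and ``$\setindexjoin{R_{\beta_B}}{B}{\sign{\calT}}$ has listing representation equal to the join''), prove them by the same calibration/consistency argument used in Proposition~\ref{prop:rewriting2}, and then the theorem follows by the two-step chain above. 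The semiring reasoning itself is routine: absorption of $\zero$ and associativity/commutativity of $\otimes$ are all that is needed.
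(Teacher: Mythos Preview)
Your proposal is correct and follows essentially the same four-step equivalence chain as the paper's proof: correctness of \textsf{InsideOut} ($\varphi = \varphi^{\tau}_{\free}$), Observation~\ref{obs:faq_equiv_bag_fac} ($\varphi^{\tau}_{\free} = \bigotimes_B \beta_B$), the product-of-functions/join-of-listings correspondence, and Proposition~\ref{Q(D)-preserv=join_preserv} to replace $\setindexjoin{R_{\beta_B}}{B}{\sign{\calT}}$ by $\setindexjoin{\pi_{B\cup\{\beta_B(B)\}}K}{B}{\sign{\calT}}$. Your extra bookkeeping about $\ext(\calT,\cdot)$ being a decomposition and the consistency of the $R_{\beta_B}$ is more explicit than the paper, which simply asserts these steps, but the argument is the same.
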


\begin{proof}
Let $\varphi_{\free}^{\tau} = \bigotimes_{S' \in \calE'} \psi_{S'}$. Then, 
\begin{align*}
&\  \ \  \varphi(\vala_{\calV_{\text{free}}}) = v \neq \zero  \\
\overset{(1)}{\Longleftrightarrow} & \ \ \bigotimes_{S' \in \calE'} \psi_{S'}(\vala_{S'}) = v \neq \zero \\
\overset{(2)}{\Longleftrightarrow}  & 
\bigotimes_{B \in \sign{\calT}} \beta_B(\vala_{B}) = v \neq \zero \\
\overset{(3)}{\Longleftrightarrow}  & 
\ \ \ 
\exists \valb_{\calV_{\text{free}} \cup \setindexel{\beta_B(B)}{B}{\sign{\calT}}} \in \setindexjoin{R_{\beta_B}}{B}{\sign{\calT}},
\ \vala_{\calV_{\text{free}}} = \valb_{\calV_{\text{free}}} 
\text{ and }
\bigotimes_{B \in \sign{\calT}} \valb_{\{\beta_B(B)\} } = v \\
\overset{(4)}{\Longleftrightarrow}  &
\ \ \ 
\exists \valb_{\calV_{\text{free}} \cup \setindexel{\beta_B(B)}{B}{\sign{\calT}}} \in \setindexjoin{\pi_{B \cup \{\beta_B(B) \} }K}{B}{\sign{\calT}},
\ \vala_{\calV_{\text{free}}} = \valb_{\calV_{\text{free}}} 
\text{ and } \\
& \ \ \ \bigotimes_{B \in \sign{\calT}} \valb_{\{\beta_B(B)\}} = v.
\end{align*}
Equivalence (1) holds by the correctness of the \textsf{InsideOut} algorithm. 
The second equivalence holds by Observation \ref{obs:faq_equiv_bag_fac}.
Equivalence (3) follows from 
the simple observation that the product of 
functions corresponds to the join of their 
listing representations.
The last equivalence follows  
from Proposition \ref{Q(D)-preserv=join_preserv}
which guarantees that 
$\setindexjoin{R_{\beta_B}}{B}{\sign{\calT}}$
is equal to
$\setindexjoin{\pi_{B \cup \{\beta_B(B) \} }K}{B}{\sign{\calT}}$.
\end{proof}

Thus, our enumeration result for covers of join results 
carries over to covers of FAQ-results.

\smallskip
\smallskip
{\noindent\bf Corollary \ref{cor:enum_faq}.}
(Corollary \ref{theo:const-delay-enum}, Proposition \ref{prop:equiv_phi_join}).
\textit{
Given a cover $K$
of the result $\varphi(\inst{D})$ of an FAQ $\varphi$ over a database $\inst{D}$,
the tuples in the query result $\varphi(\inst{D})$ 
can be enumerated
with $\widetilde{\calO}(|K|)$ pre-computation time and $\calO(1)$ delay and extra space.
}

\section{Missing Proofs of Appendix \ref{sec:covers_equi_join}}
\label{sec:covers_equi_join_proof}

In case the signature mappings of an equi-join query are not
clear from the context, we write the signature mappings as a superscript 
to the query. Moreover, for a relation symbol 
 $R$ in an equi-join query with signature mappings 
 $(\lambda, \{\mu_{R}\}_{R \in \sign{Q}})$ and a database $\inst{D}$,
we write $\lambda(R)_{\inst{D}}$ to denote the relation 
assigned to the relation symbol 
$\lambda(R)$ in $\inst{D}$.  

\subsection{Proof of Proposition \ref{prop:equi_to_natural}}
{\noindent\bf Proposition \ref{prop:equi_to_natural}.}
\textit{Given an equi-join query $Q$, a decomposition $\calT$ of $Q$, and 
a database $\inst{D}$, there exist a natural join query 
$Q'$ and a database $\inst{D}'$ such that: $Q'(\inst{D}')=Q(\inst{D})$, 
$Q'$ has the decomposition $\calT$ and
can be constructed in time  $\calO(|Q|)$, and $\inst{D}'$ can be constructed in time 
$\calO(|\inst{D}|)$.
}

\smallskip
\smallskip
\smallskip

The query $Q$ has the form 
$\sigma_\psi(R_1\times\cdots\times R_n)$, where $\psi$ is a conjunction of equality conditions.  The relation symbols as well as all 
attributes occurring in the schemas of the relation symbols are pairwise distinct.
Let $(\lambda, \{\mu_{R_i}\}_{i \in [n]})$ be the signature mappings of $Q$.
Given an equivalence class ${\cal A}$ of attributes in $Q$, we let $\phi_{\cal A}=\bigwedge_{A_i,A_j\in{\cal A}} A_i=A_j$. Then, given the set $\{{\cal A}_j\}_{j\in[l]}$ of all equivalence classes in $Q$, the conjunction $\bigwedge_{j\in[l]}\phi_{{\cal A}_j}$ is the transitive closure $\psi^+$ of $\psi$ in $Q$.

\smallskip

\underline{\em Construction of $Q'$.} 
The query $Q'$ has one relation symbol $R'_i$ for each relation symbol $R_i$ in $Q$ such that $\sign{R'_i}=\sign{R_i}^+$. We thus have $Q'=R_1'\Join \cdots \Join R_n'$, where the equality conditions in the transitive closure of $\psi$ are now expressed by natural joins in $Q'$.

\smallskip

\underline{\em Construction of $\inst{D}'$.} 
For the sake of simplicity, we describe 
the construction of $\inst{D}'$ in three steps.  
\begin{itemize}
\item Construction of database $\inst{D}_1$:
The database $\inst{D}_1$ contains for 
each $R_i \in \sign{Q}$, a relation
$R_i^1$ which results from 
$\lambda(R_i)_{\inst{D}}$
by replacing each attribute $A$
by the attribute $B$ with 
$\mu_{R_i}(B) = A$.

\item Construction of database $\inst{D}_2$:
The database $\inst{D}_2$ consists of 
the relations $R_1^2, \ldots , R_n^2$ where each
$R_i^2$ results from $R_i^1$ as follows.
For each equality $A=B$
in $\psi^+$ such that $A,B \in \sign{R_i^1}$,
we delete 
in $R_i^1$
all tuples $t$ with $t(A) \neq t(B)$.
Note that such tuples $t$ cannot occur
in the projection of $Q(\inst{D})$ onto the schema 
of $t$. 

\item Construction of database $\inst{D}'$:
We obtain the database $\inst{D}'$
from $\inst{D}_2$ by replacing each relation 
$R_i^2$ by a relation $R_i'$
defined as follows. The relation $R_i'$ is a 
copy of $R_i^2$ extended with one new column for each attribute $A$ in 
 $\sign{R_i'} \backslash \sign{R_i}$ such that $\pi_{A} R_i' = 
 \pi_{B} R_i^2$ 
 for any attribute $B\in\sign{R_i}$ transitively equal to $A$.
\end{itemize} 

\smallskip

\underline{\em $\calT$ is a decomposition of $Q'$.} By construction, $Q$ and $Q'$ have the same set of attributes and thus the same equivalence classes of attributes. Moreover, the transitive closures of the schemas of relation symbols are identical: For any pair of relation symbols $R_i \in \sign{Q}$ and $R_i' \in \sign{Q'}$, 
it holds that $\sign{R'_i}^+=\sign{R'_i}=\sign{R_i}^+$. 
The hypergraphs of $Q'$ and $Q$ are thus the same as they have the same nodes, which are the attributes in $Q$ and $Q'$ respectively, and the same hyperedges, which are the transitive closures $\sign{R_i}^+$ and $\sign{R'_i}^+$ respectively. This means that the decomposition $\calT$ of $Q$ is also a decomposition of $Q'$.

\smallskip

\underline{\em $Q'(\inst{D}')=Q(\inst{D})$.} 
We define two further signature mappings  
$(\lambda^1, \{\mu^1_{R_i}\}_{i \in [n]})$
and 
$(\lambda^2, \{\mu^2_{R_i}\}_{i \in [n]})$
for $Q$.
The function  $\lambda^1$ maps each 
relation symbol $R_i$ in $Q$
to $R_i^1$. Moreover, each $\mu^1_{R_i}$ 
is an identity mapping on the attributes of $R_i$.
The function  $\lambda^2$ maps each 
relation symbol $R_i$ in $Q$
to $R_i^2$. Finally, $\mu^1_{R_i} = \mu^2_{R_i}$
for each $R_i \in \sign{Q}$.

The Database $\inst{D}_1$ results from 
$\inst{D}$ by, basically, making for each relation $R$ as many copies as 
the number of relation symbols in $Q$ mapped to $R$.
We obtain $\inst{D}_2$ from $\inst{D}_1$ by ruling out 
tuples which cannot be contained in (the projections of) the final result. 
Hence, it easily follows 
$Q^{(\lambda, \{\mu_{R_i}\}_{i \in [n]})}(\inst{D})$ 
$=$ 
$Q^{(\lambda^1, \{\mu^1_{R_i}\}_{i \in [n]})}(\inst{D}_1)$
$=$
$Q^{(\lambda^2, \{\mu^2_{R_i}\}_{i \in [n]})}(\inst{D}_2)$.
Thus, it remains to show 
 $Q^{(\lambda^2, \{\mu^2_{R_i}\}_{i \in [n]})}(\inst{D}_2)$
 $=$
 $Q(\inst{D})$.

 We first treat the special case when $Q$ is a Cartesian product, i.e., it does not contain any equality conditions. 
Then, $Q'=Q$ and each relation in $\inst{D}'$ is an exact copy of a relation in 
$\inst{D}_1$. Hence, $Q'(\inst{D}')=
Q^{(\lambda^2, \{\mu^2_{R_i}\}_{i \in [n]})}(\inst{D}_2)$ holds trivially.
We next consider the case when $Q$ has equality conditions.

We first show $Q'(\inst{D}')\subseteq Q^{(\lambda^2, \{\mu^2_{R_i}\}_{i \in [n]})}(\inst{D}_2)$. 
Assume there is a tuple $t$ that is contained in $Q'(\inst{D}')$. 
Then, $t=\Join_{i\in[n]} t_i$ is the natural join of tuples $t_i\in R'_i$. 
Let ${\cal A}$ be any equivalence class of attributes in $Q'$. By construction, whenever one of these attributes occur in the schema of a relation 
$R'_i$, so are the others. Furthermore, their values are the same in any tuple of 
$R'_i$. Since $t$ is a join of tuples $t_i$, it follows that all attributes in ${\cal A}$ have the same value in $t$ and therefore $\sigma_{\phi_{\cal A}}(t)=t$. 
This holds for all equivalence classes of attributes, so 
$\sigma_{\psi^+}(t)=t$ and thus $\sigma_{\psi}(t)=t$.
This means that $t\in Q^{(\lambda^2, \{\mu^2_{R_i}\}_{i \in [n]})}(\inst{D}_2)$.

We now show $Q^{(\lambda^2, \{\mu^2_{R_i}\}_{i \in [n]})}(\inst{D}_2) \subseteq Q'(\inst{D}')$. Assume there is a tuple $t$ that is in $Q^{(\lambda^2, \{\mu^2_{R_i}\}_{i \in [n]})}(\inst{D}_2)$. 
This means that $t = \bigtimes_{i\in[n]} t_i$ is a product of tuples 
$t_i\in R_i^2$, $\sigma_{\psi^+}(t)=t$ and in particular $\sigma_{\phi_{\cal A}}(t)=t$ for each equivalence class ${\cal A}$ in $Q$. 
We extend each tuple $t_i$ with values for all attributes in the class ${\cal A}$ whenever $\sign{t_i}\cap{\cal A}\neq\emptyset$. Let $t'_i$ be the extension of $t_i$. Then, $t = \Join_{i\in[n]} t'_i$. All attributes in ${\cal A}$ thus have the same value in $t'_i$. Since, by construction, the relation $R'_i$ is an extension of $R_i^2$ 
with same-valued columns for all attributes in ${\cal A}$ whenever $\sign{R_i^2}\cap{\cal A}\neq\emptyset$, it follows that $t'_i\in R'_i$. Thus, $t\in Q'(\inst{D}')$.

\smallskip

\underline{\em Construction time.}
The natural join query $Q'$ evolves from $Q$
by replacing the schema $S$ of each relation symbol 
by $S^+$. This can be done in time $\calO(|Q|)$. 

The database $\inst{D}_1$ evolves from 
$\inst{D}$ by duplicating each relation in 
$\inst{D}$ at most $|Q|$ times. Hence,  $\inst{D}_1$ 
can be constructed in linear time. 
We obtain $\inst{D}_2$ from $\inst{D}_1$ by deleting 
in each relation $R_i^1$ in $\inst{D}_1$,
each tuple tuple $t$ with $t(A)\neq t(B)$
and $A= B \in \psi^+$.
This deletion procedure  
can be realised via a single pass through the 
relations in $\inst{D}_1$ and requires, therefore, only linear time.
Likewise, each relation $R_i'$ in $\inst{D}'$ can be constructed from 
$R_i^2$ in $\inst{D}_2$ by a single pass through the tuples 
in $R_i^2$. For each tuple, we choose for each new attribute $A$ 
in $R_i'$ but not in $R_i^2$, an equivalent attribute in 
$R_i^2$ and copy its value to the $A$-column. 
Thus, the transformation from $\inst{D}_2$ to $\inst{D}'$
can also be done in linear time.

\nop{
\smallskip

\underline{\em Size of the construction.} 
The number of relations in $\inst{D}_1$ is the same as the number of relation
symbols in $Q$. Moreover, each relation in  $\inst{D}_1$ is a copy of 
a relation in $\inst{D}$. Hence,  $|\inst{D}_1| = \calO(|\inst{D}|)$.
The relations $R_i'$ in $\inst{D}'$ are copies of relations 
$R_i$ in $\inst{D}_1$ with additional columns. 
Each $R_i'$ in $\inst{D}'$ has a new column for each attribute which is 
not in 
$\sign{R_i}$ but equivalent to an attribute in 
$\sign{R_i}$. This transformation does not increase the size of the relations.
Altogether, we have $|\inst{D}'| = \calO|(\inst{D}|)$.
}

\subsection{Proof of Proposition \ref{prop:lower_bound_equi_join}}
{\noindent\bf Proposition \ref{prop:lower_bound_equi_join}.}
\textit{For any equi-join query $Q$ and any decomposition $\calT$ of $Q$,
there are arbitrarily large databases $\inst{D}$ such that 
each cover of $Q(\inst{D})$ over $\calT$
has size $\Omega(|\inst{D}|^{\fhtw{\calT}})$.
}

\smallskip
\smallskip
\smallskip

We will prove the following claim:

\smallskip
\smallskip

\noindent
\underline{\textit{Claim}}:
\textit{Given an equi-join query $Q$ and a decomposition $\calT$ of $Q$, 
there exist a natural join query 
$Q'$ that has the decomposition $\calT$ such that:  
$Q'$ can be constructed in time  $\calO(|Q|)$ and for each 
database $\inst{D}'$ there is a database $\inst{D}$ of size $\calO(\inst{D}')$
such that $|\pi_{B}Q(\inst{D})| \geq |\pi_{B} Q'(\inst{D}')|$ for each $B \in \sign{\calT}$.
}

\smallskip
\smallskip

Using this claim, the result of the proposition can be derived straightforwardly.
Given an equi-join query $Q$, we first construct the natural join query 
as promised in the claim. 
By Theorem 
\ref{theo:general_size_bounds}(ii),  
there are arbitrarily large databases $\inst{D}'$ such that 
each cover of $Q'(\inst{D}')$ over $\calT$
has size $\Omega(|\inst{D}'|^{\fhtw{\calT}})$.
Given such a database $\inst{D}'$, it follows from Proposition \ref{prop:size_bounds_hypergraph}, that 
$\Sigma_{B \in \sign{\calT}}|\pi_BQ'(\inst{D}')| = 
\Omega(|\inst{D}'|^{\fhtw{\calT}})$, hence, 
$\max_{B \in \sign{\calT}}\{|\pi_BQ'(\inst{D}')|\}
=  \Omega(|\inst{D}'|^{\fhtw{\calT}})$. 
By our claim, the database $\inst{D}'$ can be converted into 
a database $\inst{D}$ of size $\calO(|\inst{D}'|)$
such that $|\pi_BQ(\inst{D})| \geq |\pi_BQ'(\inst{D}')|$
for each $B \in \sign{\calT}$. 
By Proposition \ref{prop:size_bounds_hypergraph} 
(adapted to equi-join queries), each cover of $Q(\inst{D})$
over $\calT$
must have size at least $\max_{B \in \sign{\calT}}\{|\pi_BQ(\inst{D})|\}$.
Since $\max_{B \in \sign{\calT}}\{|\pi_BQ'(\inst{D}')|\}$
$=$  $\Omega(|\inst{D}'|^{\fhtw{\calT}})$ and 
$\max_{B \in \sign{\calT}}\{|\pi_BQ(\inst{D})|\}$ $\geq$ 
$\max_{B \in \sign{\calT}}\{|\pi_BQ'(\inst{D}')|\}$, we conclude that
each cover of $Q(\inst{D})$
over $\calT$ is of size $\Omega(|\inst{D}'|^{\fhtw{\calT}})$ $=$ $\Omega(|\inst{D}|^{\fhtw{\calT}})$.

\smallskip
\smallskip

We turn towards the proof of our claim. Let $(\lambda, \{\mu_{R_i}\}_{i \in [n]})$
be the signature mappings of $Q$.

\underline{\em Construction of $Q'$.}
The natural join query $Q'$ is constructed exactly as in the proof of Proposition
 \ref{prop:equi_to_natural}.

\smallskip

\underline{\em Construction of $\inst{D}$.} 
Given a database $\inst{D}'$, we describe the construction
of $\inst{D}$
in three steps.  
\begin{itemize}
\item Construction of database $\inst{D}_1$: 
For each equivalence class $\calA \subseteq \bigcup_{i \in [n]}\sign{R_i'}$, 
let $f_{\calA}$ be an injective function mapping tuples over $\calA$
to fresh data values not occurring 
in $\inst{D}'$. Moreover, let $f$ be a function mapping tuples $t$ with 
$\sign{t} \subseteq \bigcup_{i \in [n]}\sign{R_i'}$ and 
$\sign{t} = \sign{t}^+$ to 
tuples $t'$ with $\sign{t'} = \sign{t}$ as follows. 
For each attribute $A \in \sign{t'}$
from some equivalence class $\calA$, it holds $t'(A) = f_{\calA}(\pi_{\calA}t)$.
From each relation $R_i' \in \inst{D}'$, we construct 
a relation $R_i^1$ where each tuple $t$ is replaced
by $f(t)$. 
We define $\inst{D}_1 = \{R_{i}^1\}_{i \in [n]}$.

\item Construction of database $\inst{D}_2$:
From each relation $R_{i}^1 \in \inst{D}_1$ we design 
a relation $R_{i}^2$ by performing the following procedure. 
We first project away 
all columns of attributes not included in $\sign{R_i}$. Then, 
we rename each attribute $A$ in the resulting relation 
by $\mu_{R_i}(A)$. Let 
$\inst{D}_2 = \{R_{i}^2 \}_{i \in [n]}$.
  
\item Construction of database $\inst{D}$:
We obtain database $\inst{D}$ from $\inst{D}_2$
as follows.
For each maximal set 
$\{R_{i_1}, \ldots ,R_{i_k}\} \subseteq \sign{Q}$ such that all
 $R_{i_j}$ are mapped to the same relation symbol 
 $\lambda(R_{i_j})=R$, we replace 
the relations $R_{i_1}^2, \ldots ,R_{i_k}^2$ by a single relation 
$\bigcup_{j \in [k]} R_{i_j}^2$ with relation symbol $R$.
\end{itemize}

 \smallskip

\underline{$\calT$ is a decomposition of $Q'$.}
This follows from the proof of Proposition \ref{prop:equi_to_natural}.
 
\smallskip

\underline{\em $|\pi_{B}Q(\inst{D})| \geq |\pi_{B}Q'(\inst{D}')|$
for each $B \in \sign{\calT}$.}
\nop{We define two further signature mappings  
$(\lambda^1, \{\mu^1_{R_i}\}_{i \in [n]})$
and 
$(\lambda^2, \{\mu^2_{R_i}\}_{i \in [n]})$
for $Q$.
The function  $\lambda^1$ maps each 
relation symbol $R_i$ in $Q$
to $R_i^1$. Moreover, each $\mu^1_{R_i}$ 
is an identity mapping on the attributes of $R_i$.
The function  $\lambda^2$ maps each 
relation symbol $R_i$ in $Q$
to $R_i^2$. Finally, $\mu^1_{R_i} = \mu^2_{R_i}$
for each $R_i \in \sign{Q}$.
}
Our proof contains three steps.
\begin{enumerate}[(1)]
\item We show that $|\pi_{B}(\Join_{i \in [n]}R_i^1)| \geq |\pi_{B}Q'(\inst{D}')|$
for each $B \in \sign{\calT}$. Let $B \in \sign{\calT}$.
Since $B = B^+$, the function $f$ is defined for all tuples
in $\pi_{B}Q'(\inst{D}')$. Moreover, for two distinct tuples 
$t_B, t_B' \in \pi_{B}Q'(\inst{D}')$, the tuples $f(t_B)$ and $f(t_B')$
are distinct, too. Hence, it suffices to show that for each
$t_B \in \pi_{B}Q'(\inst{D}')$, we have $f(t_B) \in \pi_{B}(\Join_{i \in [n]}R_i^1)$.
Let $t_B \in \pi_{B}Q'(\inst{D}')$. It follows that there is a tuple 
$t \in Q'(\inst{D}')$ with $t_B = \pi_B t$. By definition of $Q'(\inst{D}')$, 
it must hold $\pi_{\sign{R_i'}} t \in R_i'$ for each $i \in [n]$.
We have  $\sign{R_i'} = \sign{R_i'}^+$ for each $i \in [n]$. Thus, 
it holds  
$f(\pi_{\sign{R_i'}} t) \in R_i^1$ for each $i \in [n]$.
This is equivalent to saying $\pi_{\sign{R_i^1}} f(t) \in R_i^1$ 
for each $i \in [n]$. 
By definition of $\Join_{i \in [n]}R_i^1$, this implies that 
$f(t) \in \Join_{i \in [n]}R_i^1$. Thus, $f(t_B) = \pi_B f(t) \in 
\pi_B(\Join_{i \in [n]}R_i^1)$. 

\item Let $\lambda'$ be a function that maps each relation symbol $R_i$ 
in $\sign{Q}$ to 
$R_i^2$. We show that $\Join_{i \in [n]}R_i^1 \subseteq Q^{(\lambda',\{\mu_{R_i} \}_{i \in [n]})}(\inst{D}_2)$. To this end, let 
$t \in \Join_{i \in [n]}R_i^1$. It follows that $\pi_{\sign{R_i^1}}t \in R_i^1$
for each $i \in [n]$. By the construction of $\inst{D}_2$, 
it holds $\pi_{\sign{R_i^2}}t \in R_i^2$
for each $i \in [n]$. Furthermore, by the 
construction of $\inst{D}_1$, for each equivalence class
$\calA$ and all attributes $A,B \in \calA$, we have $t(A) = t(B)$.
Thus, $t \in Q^{(\lambda',\{\mu_{R_i} \}_{i \in [n]})}(\inst{D}_2)$.  

\item We show that $Q^{(\lambda',\{\mu_{R_i} \}_{i \in [n]})}(\inst{D}_2) \subseteq Q^{(\lambda,\{\mu_{R_i} \}_{i \in [n]})}(\inst{D})$.
We recall that database $\inst{D}$ results from $\inst{D}_2$ by 
replacing each maximal set $R_{i_1}^2, \ldots ,R_{i_k}^2$
of relations with $\lambda(R_{i_1}) = \ldots = \lambda(R_{i_k}) = R$,
by the relation 
$\bigcup_{j \in [k]} R_{i_j}^2$ with the relation symbol $\lambda(R_{i_1})$.
Observe that the result of 
$\sigma_{\psi}(R_{i_1} \times  \ldots \times R_{i_k})(\{R_{i_1}^2, \ldots ,R_{i_k}^2\})$ (under signature mappings $(\lambda',\{\mu_{R_i} \}_{i \in [n]})$)
must be included in the result of 
$\sigma_{\psi}(R_{i_1} \times  \ldots \times R_{i_k})(\bigcup_{j \in [k]} R_{i_j}^2)$ (under signature mappings $(\lambda,\{\mu_{R_i} \}_{i \in [n]})$).
By generalising this insight, we obtain that 
every tuple from  
$Q^{(\lambda',\{\mu_{R_i} \}_{i \in [n]})}(\inst{D}_2)$ must be included in 
$Q^{(\lambda,\{\mu_{R_i} \}_{i \in [n]})}(\inst{D})$.
\end{enumerate} 

\noindent
By (1), $|\pi_{B}(\Join_{i \in [n]}R_i^1)| \geq |\pi_{B}Q'(\inst{D}')|$
for each $B \in \sign{\calT}$.
By (2) and (3), $\Join_{i \in [n]}R_i^1 \subseteq
Q^{(\lambda,\{\mu_{R_i} \}_{i \in [n]})}(\inst{D})$. We conclude 
that $|\pi_{B}Q^{(\lambda,\{\mu_{R_i} \}_{i \in [n]})}(\inst{D})| \geq
 |\pi_{B}Q'(\inst{D}')|$
for each $B \in \sign{\calT}$.

\smallskip

\smallskip

\underline{\em Construction time for $Q$.}
It follows from the proof of Proposition \ref{prop:equi_to_natural}
that $Q'$ can be constructed in time $\calO(|Q|)$.

\smallskip

\underline{\em Size of $\inst{D}$.} 
Since $f$ is a bijective mapping
and $\inst{D}_1$ is obtained  from $\inst{D}'$
by replacing tuples $t$ by $f(t)$,
we have 
 $|\inst{D}_1| = |\inst{D}'|$. 
 As $\inst{D}_2$ results from 
$\inst{D}_1$ by taking projections of relations, 
the size of $\inst{D}_2$ cannot be larger than the size
of $\inst{D}_1$. Database $\inst{D}$ results from 
$\inst{D}_2$ by taking unions of relations. Thus,  
the number of tuples in $\inst{D}$ cannot be more 
than the number of tuples in $\inst{D}_2$.
Altogether, we have $|\inst{D}| = \calO(|\inst{D}'|)$.

\nop{
\smallskip

\underline{\em Construction time.}
We can turn database $\inst{D}'$ into database $\inst{D}_1$
by performing the following procedure for each equivalence 
class $\calA$. We first sort all relations in  $\inst{D}'$ on the 
attributes in $\calA$. Then, in a 
subsequent pass through all relations, we construct the function $f_{\calA}$
which maps each distinct combination of $\calA$-values occurring in at least 
one tuple to a distinct fresh data value. During a final pass, 
we replace in each tuple $t$ in the relations the value of each attribute from 
$\calA$ by $f_{\calA}(\pi_{\calA}t)$. Hence, the construction 
time for $\inst{D}_1$ is $\softO(\inst{D}')$.

We can take any projection of any relation in $\inst{D}_1$
by first sorting the relation, then, performing a subsequent pass
in which we take the projection of each single tuple, and finally, eliminating 
duplicates in the resulting relation. Since the relation is sorted,
duplicate elimination can be done in a  single pass. 
Renaming of attributes requires only constant time.  
 Thus,  $\inst{D}_2$ can be constructed in time $\softO(\inst{D}_1)$.

Recall that $\inst{D}$ results from $\inst{D}_2$ by taking unions of sets 
of relations. We can take the union of a set of relations 
by first merging the relations, then sorting the resulting multi-set
and finally eliminating duplicates. Hence, the construction time for 
$\inst{D}$ is $\softO(\inst{D}_2)$.

Since $|\inst{D}'| \geq |\inst{D}_1|  \geq |\inst{D}_2|$, the overall construction time
for $\inst{D}$ is $\softO(|\inst{D}'|)$.
}
\end{document}